\newtheorem{thm}{Theorem}[section]
\newtheorem{cor}[thm]{Corollary}
\newtheorem{lem}[thm]{Lemma}
\newtheorem{prop}[thm]{Proposition}
\newtheorem{conj}[thm]{Conjecture}
\theoremstyle{definition}
\newtheorem{example}[thm]{Example}
\newtheorem{remark}[thm]{Remark}
\newcommand{\R}{\mathbb{R}}
\newcommand{\C}{\mathbb{C}}
\newcommand{\Z}{\mathbb{Z}}
\newcommand{\N}{\mathbb{N}}
\newcommand{\PP}{\mathbb{P}}
\newcommand{\RLCT}{{\rm RLCT}}
\newcommand{\mi}{\!-\!}
\newcommand{\nmin}{\#\!\min}
\newcommand{\ord}{{\rm ord}}
\newcommand{\mon}{{\rm mon}}
\title{Ideal-Theoretic Strategies for \\Asymptotic Approximation of \\Marginal Likelihood Integrals}
\date{}
\author{Shaowei Lin}
\begin{document}
\maketitle

\begin{abstract}
The accurate asymptotic evaluation of marginal likelihood integrals is
a fundamental problem in Bayesian statistics. Following the approach
introduced by Watanabe, we translate this into a problem of computational
algebraic geometry, namely, to determine the real log canonical threshold
of a polynomial ideal, and we present effective methods for solving this
problem. Our results are based on resolution of singularities. They apply to parametric models where the Kullback-Leibler distance is upper and lower bounded by scalar multiples of some sum of squared 
real analytic functions. Such models include finite state discrete models.\\
\noindent {\bf Keywords:} computational algebra, asymptotic approximation, marginal likelihood, learning coefficient, real log canonical threshold 

\end{abstract}

\section{Introduction}
\label{sec:Introduction}

The evaluation of marginal likelihood integrals is essential in model selection and has important applications in areas such as machine learning and computational biology. The exact evaluation of such integrals is a difficult problem \cite{DSS, LSX} and classical approximation formulas usually apply only for smooth models. Recent work by Watanabe and his collaborators \cite{AW, W, W0, YW1, YW} extended these formulas to a broad class of models with singularities. His work also uncovered interesting connections with resolution of singularities in algebraic geometry. The \mbox{goal of this} paper is to systematically study the algebraic geometry behind Watanabe's formulas, and to develop symbolic algebra tools which
allow the user to accurately evaluate the asymptotics of integrals in Bayesian statistics. 

Watanabe showed that the key to understanding a singular model is monomializing the Kullback-Leibler function $K(\omega)$ of the model at the true distribution. While general algorithms exist for monomializing any analytic function \cite{BEV,BM}, applying them to non-polynomial functions such as $K(\omega)$ can be computationally expensive. In practice, many singular models are parametrized by polynomials. Therefore, it is natural to ask if this polynomiality can be exploited in the analysis of such models. For simplicity, we explore this question for \mbox{discrete statistical} models. Our point of departure is to describe the asymptotics of the likelihood integral by the real log canonical threshold of an ideal in a polynomial ring.~More generally, our results will be proved for rings of analytic functions, and they apply to all parametric models where the Kullback-Leibler distance is upper and lower bounded by scalar multiples of a sum of squared 
real analytic functions.

Consider a statistical model $\mathscr{M}$ on a finite discrete space $[k] = $ $\{1, 2, \ldots, k \}$ parametrized by a real analytic map
$p: \Omega \rightarrow \Delta_{k-1}$ where $\Omega$ is a compact
subset of $\R^d$ and $\Delta_{k-1}$ is the probability simplex $\{x \in \R^k: x_i \geq 0$, $\sum x_i = 1 \}$. We assume that $\Omega$ is \emph{semianalytic}, i.e. $\Omega =\{x \in \R^d:g_1(x)\geq 0,$ $\ldots, g_l(x) \geq 0 \}$ is defined by real analytic inequalities. Let $q \in \Delta_{k-1}$ be a point in the model with non-zero entries. Suppose a sample of size $N$ is drawn from the true distribution $q$, and let $U = (U_i)$ denote the vector of relative frequencies for this sample. Let $\varphi:\Omega \rightarrow \R$ be \emph{nearly analytic}, i.e. $\varphi$ is a product $\varphi_a \varphi_s$ of functions where $\varphi_a$ is real analytic and $\varphi_s$ is positive and smooth. Consider a Bayesian prior defined by $|\varphi|$. Priors of this form are discussed in Remark~\ref{rem:PriorAbsoluteValue}. We are interested in~the asymptotics, for large sample sizes $N$, of the marginal likelihood integral
\begin{eqnarray}
\label{eq:Marginal}
 Z(N) = \int_{\Omega} \prod_{i=1}^k p_{i}(\omega)^{NU_{i}} |\varphi(\omega)| \,d\omega.
\end{eqnarray}

The first few terms of the asymptotics of the log likelihood integral $\log Z(N)$ was derived by Watanabe. To state his result, we first recall that the Kullback-Leibler distance $K(\omega)$ between $q$ and $p(\omega)$ is
$$
K(\omega) = \sum_{i=1}^{k} q_i \log \frac{q_i}{p_i(\omega)}.
$$
This function satisfies $K(\omega) \geq 0$ with equality if and only if $p(\omega) = q$.
\begin{thm}[Watanabe{\cite[\S6]{W}}]
\label{thm:Watanabe}
Asymptotically as $N \rightarrow \infty$, 
\begin{eqnarray}
\log Z(N) \,\,=\,\, N\sum_{i=1}^{k} U_i \log q_i - \lambda \log N + (\theta-1)\log \log N\,\, + \,\,\eta_N
\label{eq:LogBayesian}
\end{eqnarray}
where the positive rational number $\lambda$ is the smallest pole of the zeta function
\begin{align}
\label{eq:WatanabeZeta}
  \zeta(z) = \int_\Omega K(\omega)^{-z} |\varphi(\omega)| \,d\omega,
  \quad z \in \C, 
\end{align}
$\theta$ is its multiplicity, and $\eta_N$ is a random variable whose expectation $\mathbb{E}[\eta_N]$ converges to a constant.
\end{thm}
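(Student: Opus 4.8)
The plan is to follow Watanabe's strategy: isolate the deterministic leading term, localize near the zero set of $K$, reduce the resulting Laplace-type integral to a one-dimensional integral against a density of states, and read off the $\log N$ and $\log\log N$ terms from the pole structure of $\zeta$ via a Tauberian argument. First, observe that
\[
\prod_{i=1}^k p_i(\omega)^{NU_i} = \exp\!\Big(N\sum_i U_i\log q_i\Big)\,e^{-NK(\omega)}\,e^{-N a_N(\omega)},\qquad a_N(\omega):=\sum_i (U_i-q_i)\log\frac{q_i}{p_i(\omega)},
\]
so pulling the first factor out of \eqref{eq:Marginal} produces the term $N\sum_i U_i\log q_i$ of \eqref{eq:LogBayesian} and leaves $Z^\ast(N)=\int_\Omega e^{-NK(\omega)}e^{-Na_N(\omega)}|\varphi(\omega)|\,d\omega$, with $\log Z(N)=N\sum_i U_i\log q_i+\log Z^\ast(N)$. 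Since $K$ is continuous, nonnegative, and vanishes exactly on the compact analytic set $W:=p^{-1}(q)$, outside any fixed neighbourhood of $W$ we have $K\ge c>0$; using that $a_N=O_p(N^{-1/2})$ uniformly on that neighbourhood, that part of $Z^\ast(N)$ is $O(e^{-cN/2})$, hence negligible at every order of \eqref{eq:LogBayesian}. So it suffices to analyze the integral over a small neighbourhood of $W$.

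Next, apply resolution of singularities in the real-analytic category (Hironaka), following Watanabe. Covering a compact neighbourhood of $W$ by finitely many charts, there are proper real-analytic maps $\rho_\alpha\colon V_\alpha\to\Omega$, $V_\alpha\subseteq\R^d$, such that in suitable coordinates $u$ on each $V_\alpha$ one has
\[
K(\rho_\alpha(u))=u^{2\kappa_\alpha}\,b_\alpha(u),\qquad |\varphi(\rho_\alpha(u))|\,\bigl|\det\rho_\alpha'(u)\bigr|=|u^{h_\alpha}|\,c_\alpha(u),
\]
where $\kappa_\alpha,h_\alpha\in\N^d$ are multi-indices and $b_\alpha,c_\alpha$ are positive and smooth (one resolves $K$ together with the analytic factor $\varphi_a$ to simultaneous normal crossings; the positive smooth factor $\varphi_s$ and the Jacobian are absorbed into $c_\alpha$). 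Substituting into \eqref{eq:WatanabeZeta} with a subordinate partition of unity, each chart contributes $\int |u^{h_\alpha}|\,u^{-2\kappa_\alpha z}(\cdots)\,du$, which is holomorphic for $\Re(z)$ small, continues meromorphically to $\C$, and has poles only at $z=(h_{\alpha,j}+1+m)/(2\kappa_{\alpha,j})$ with $\kappa_{\alpha,j}\ge 1$ and $m\in\N$; hence the smallest pole of $\zeta$ is the positive rational $\lambda=\min_{\alpha,j}(h_{\alpha,j}+1)/(2\kappa_{\alpha,j})$, with multiplicity $\theta=\max_\alpha\#\{\,j:(h_{\alpha,j}+1)/(2\kappa_{\alpha,j})=\lambda\,\}$. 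Passing to the density of states $v(t)=\int_\Omega\delta\bigl(t-K(\omega)\bigr)|\varphi(\omega)|\,d\omega$, so that $\zeta(z)=\int_0^\infty t^{-z}v(t)\,dt$ is essentially its Mellin transform, a Tauberian theorem (as in Watanabe's monograph) converts the leftmost pole data into $v(t)=c\,t^{\lambda-1}(-\log t)^{\theta-1}(1+o(1))$ as $t\downarrow 0$ and then into $\int_0^\infty e^{-Nt}v(t)\,dt=c'\,N^{-\lambda}(\log N)^{\theta-1}(1+o(1))$; taking logarithms recovers exactly the $-\lambda\log N+(\theta-1)\log\log N$ terms of \eqref{eq:LogBayesian}.

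The remaining, and most delicate, point is to reinstate the random weight and show that the remainder $\eta_N=\log Z^\ast(N)+\lambda\log N-(\theta-1)\log\log N$ in \eqref{eq:LogBayesian} satisfies $\mathbb{E}[\eta_N]\to$ const. Writing $Z^\ast(N)=\int_0^\infty e^{-Nt}v_N(t)\,dt$ with $v_N(t)=\int_\Omega\delta\bigl(t-K(\omega)\bigr)e^{-Na_N(\omega)}|\varphi(\omega)|\,d\omega$, one must track a density depending on $N$ both through the exponent $Nt$ and through the random factor $e^{-Na_N}$. In the resolved coordinates $a_N(\rho_\alpha(u))$ equals, to leading order, $\tfrac{1}{\sqrt N}\,\xi_N\cdot(\text{analytic functions of }u)$ where $\xi_N=\sqrt N(U-q)$ converges in distribution to a centered Gaussian, and the monomial $u^{\kappa_\alpha}$ controlling $K$ also dominates these corrections near the exceptional divisor; after the substitution $t=u^{2\kappa_\alpha}$ and the rescaling $t\mapsto t/N$ the integrand converges to a limiting stochastic integral. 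Turning this into convergence of $\mathbb{E}[\eta_N]$ requires an empirical-process / uniform-integrability argument to justify interchanging limit and expectation and to bound the fluctuation of the effective random exponent uniformly in $N$; this is the analytic heart of the proof, whereas the reduction, the resolution step, and the Tauberian extraction above are essentially routine once Hironaka's theorem and the standard Tauberian machinery are granted.
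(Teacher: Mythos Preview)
The paper does not prove Theorem~\ref{thm:Watanabe}; it is quoted from Watanabe~\cite[\S6]{W} and used as a black box. So there is no ``paper's own proof'' to compare against. What the paper does develop independently is the deterministic Laplace-integral asymptotics (Theorem~\ref{thm:AsympExp}): for $Z(N)=\int_\Omega e^{-N|f|}|\varphi|\,d\omega$ it derives the full asymptotic expansion with explicit coefficients via the state density $v(t)$ and the Mellin/Laplace relation, which is exactly the middle third of your sketch. Your outline of that part---resolve $K$ and $\varphi_a$ to normal crossings, pass to $v(t)$, read the leading term from the smallest pole of $\zeta$---matches both Watanabe's argument and the paper's Theorem~\ref{thm:AsympExp}.

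Your decomposition $\prod p_i^{NU_i}=e^{N\sum U_i\log q_i}e^{-NK}e^{-Na_N}$ and the localization near $W=p^{-1}(q)$ are correct and standard. One small point: when you say $a_N=O_p(N^{-1/2})$ ``uniformly on that neighbourhood'' to discard the region $\{K\ge c\}$, note that $\log(q_i/p_i(\omega))$ need not be bounded on all of $\Omega$ (some $p_i$ may vanish away from $W$), so the bound is not uniform there; the honest argument is simply that on $\{K\ge c\}$ the original integrand $\prod p_i^{NU_i}$ is at most $e^{N\sum U_i\log q_i}e^{-Nc}$ times a bounded factor, which already gives the $O(e^{-cN})$ decay without invoking $a_N$.

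The genuinely incomplete part, as you yourself flag, is the stochastic remainder: showing that $\eta_N$ has expectation converging to a constant. Your heuristic---write $a_N(\rho_\alpha(u))$ in resolved coordinates, observe that it is $N^{-1/2}$ times a Gaussian-limit process against analytic functions dominated by the same monomial $u^{\kappa_\alpha}$, then rescale---is the right picture, but turning it into a proof requires Watanabe's empirical-process machinery (uniform control of the log-likelihood ratio process over the resolved charts and a dominated-convergence/uniform-integrability argument for the normalized integral). That is precisely the content of \cite[\S6]{W} that the paper is citing rather than reproving, so your sketch is at the same level of completeness as the paper on this point.
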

Here, $\lambda$ is known as the \emph{learning coefficient} of the model at the distribution $q$. Because formula (\ref{eq:LogBayesian}) generalizes the Bayesian information criterion \cite{GR, W}, the numbers $\lambda$ and $\theta$ are important in model selection. Indeed, the BIC corresponds to the case $(\lambda, \theta) = (\frac{d}{2}, 1)$ for smooth models. In algebraic geometry, $\lambda$ is also known as the \emph{real log canonical threshold} \cite{S} of $K$, a term that is motivated~by the more familiar complex \emph{log canonical threshold} (see Remark \ref{rem:lct}). We denote this algebraic invariant by $(\lambda,\theta)=\RLCT_\Omega(K;\varphi)$.

These thresholds may be defined for ideals in rings of real-valued analytic functions as well. Given an ideal $I = \langle f_1, \ldots, f_r \rangle$ generated by functions $f_i \not\equiv 0$ which are real analytic on a compact subset $\Omega \subset \R^d$, and a smooth amplitude function $\varphi:\R^d \rightarrow \R$, we consider the zeta function
\begin{eqnarray}
\label{eq:IdealZetaFn}
\zeta(z) = \int_{\Omega} \Big(f_1(\omega)^2+\cdots+f_r(\omega)^2\Big)^{-z/2}  |\varphi(\omega)| \,d\omega.
\end{eqnarray}
We show that if $\varphi$ is nearly analytic, then $\zeta(z)$ has an analytic continuation to the whole complex plane. Its poles are positive rational numbers with a smallest element $\lambda$ which we call the \emph{real log canonical threshold} of $I$ with respect to $\varphi$ over $\Omega$. Let $\theta$ be the multiplicity of $\lambda$ as a pole of $\zeta(z)$ and define $\RLCT_\Omega (I; \varphi)$ to be the pair $(\lambda, \theta)$. Order these pairs such that $(\lambda_1, \theta_1) > (\lambda_2, \theta_2)$ if $\lambda_1 > \lambda_2$, or $\lambda_1=\lambda_2$ and $\theta_1 < \theta_2$. We will show that this pair does not depend on the choice of generators $f_1, \ldots, f_r$ for $I$. In the literature, real log canonical thresholds of ideals are not well-investigated \cite{S}. For this reason, we formally prove many of its properties in Section \ref{sec:RLCT}.

With these definitions on hand, we now state our first main theorem. This result expresses the learning coefficient and its multiplicity directly in terms of the functions $p_1, \ldots, p_k$ parametrizing the model. Geometrically, it says that the learning coefficient is the real log canonical threshold of the fiber $p^{-1}(q) \subset \Omega$. The theorem is computationally very useful especially when the~$p_i$ are polynomials or rational functions, and certain special cases have been applied by Sumio Watanabe and his collaborators \cite{YW1, YW}. Our proof in Section \ref{sec:RLCT} was inspired by a discussion with him. Now, recall that $\varphi = \varphi_a \varphi_s$ is nearly analytic.
\begin{thm}
\label{thm:PolyRLCT}
Let $(\lambda, \theta)$ be the learning coefficient and multiplicity of the model $\mathscr{M}$\! at $q > 0$. Let $I$ denote the ideal $\langle \,p(\omega) - q \,\rangle := \langle \,p_1(\omega)-q_1, \ldots, p_k(\omega)-q_k \rangle$,~and let $\mathcal{V}$ be its zero-locus $\{\omega \in \Omega : p(\omega) = q\} = p^{-1}(q)$. Then,
$$
(2\lambda, \theta) = \min_{x \in \mathcal{V}}\, \RLCT_{\Omega_x} (I;\varphi_a)
$$
where each $\Omega_x$ is a sufficiently small neighborhood of $x$ in $\Omega$. 

More generally, let $K(\omega)$ be any real analytic function on $\Omega$ that is bounded for some constants $c_1, c_2 > 0$ and some real analytic $f_i(\omega)$ over $\Omega$, by
$$
c_1 \sum_{i=1}^k f_i(\omega)^2 \leq K(\omega) \leq c_2 \sum_{i=1}^k f_i(\omega)^2.
$$
Then, the real log canonical threshold $(\lambda,\theta)=\RLCT_\Omega(K;\varphi)$ satisfies
$$
(2\lambda, \theta) = \RLCT_{\Omega} (I;\varphi_a)
$$
where $I$ is the ideal $\langle f_1(\omega), \ldots, f_k(\omega)\rangle$. 
\end{thm}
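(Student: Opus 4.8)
The plan is to reduce both claims to a single invariance statement: that the pair $\RLCT_\Omega(K;\varphi)$ depends on $K$ only up to two-sided bounds by scalar multiples, and that for a sum of squares $\sum f_i^2$ the resulting threshold agrees with $\RLCT_\Omega(I;\varphi_a)$ for the ideal $I = \langle f_1, \ldots, f_k \rangle$. So I would first prove the ``more general'' statement, and then derive the model statement from it as a corollary.

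\emph{Step 1: Monotonicity and sandwiching.} First I would record that if $0 \le K_1(\omega) \le K_2(\omega)$ on $\Omega$, then, comparing the integrands of the zeta functions $\int_\Omega K_j^{-z/2}|\varphi|\,d\omega$ pointwise for real $z > 0$, one gets an inequality between the integrals, hence the abscissa of convergence of the first is at most that of the second; a little more care with the order of the pole (using that near the smallest pole the integral blows up like a constant times $(\text{abscissa} - z)^{-\theta}$) gives the ordering $\RLCT_\Omega(K_1;\varphi) \le \RLCT_\Omega(K_2;\varphi)$ in the sense defined in the introduction. Next, scaling $K$ by a positive constant $c$ only shifts the zeta function by the entire factor $c^{-z/2}$, which is nowhere zero and has no poles, so $\RLCT_\Omega(cK;\varphi) = \RLCT_\Omega(K;\varphi)$. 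Combining these two observations with the hypothesis $c_1 \sum f_i^2 \le K \le c_2 \sum f_i^2$ squeezes $\RLCT_\Omega(K;\varphi)$ between $\RLCT_\Omega(c_1\sum f_i^2;\varphi) = \RLCT_\Omega(\sum f_i^2;\varphi)$ and the same quantity with $c_2$, forcing equality. So it remains to identify $\RLCT_\Omega\!\big(\sum f_i^2;\varphi\big)$ with $\RLCT_\Omega(I;\varphi)$, and then to match the amplitude $\varphi$ against $\varphi_a$.

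\emph{Step 2: Sum of squares versus the ideal zeta function.} By definition of $\RLCT_\Omega(I;\varphi)$ in the introduction, the relevant zeta function for $I = \langle f_1, \ldots, f_k\rangle$ is exactly $\int_\Omega (f_1^2 + \cdots + f_k^2)^{-z/2}|\varphi|\,d\omega$; this is literally $\int_\Omega \big(\sum f_i^2\big)^{-z/2}|\varphi|\,d\omega$, i.e. the zeta function of the single function $K_0 = \sum f_i^2$ with exponent $z/2$ rather than $z$. Thus the poles of one are obtained from the poles of the other by the substitution $z \mapsto 2z$, which is precisely why the theorem asserts $(2\lambda,\theta)$ and not $(\lambda,\theta)$: the pole $\lambda$ of $\int K_0^{-z}|\varphi|$ corresponds to the pole $2\lambda$ of $\int K_0^{-z/2}|\varphi|$, while the multiplicity $\theta$ is unchanged. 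This bookkeeping, together with the already-proved independence of $\RLCT_\Omega(I;\varphi)$ from the choice of generators, gives $(2\lambda,\theta) = \RLCT_\Omega(I;\varphi)$ with amplitude $\varphi$. Replacing $\varphi = \varphi_a\varphi_s$ by $\varphi_a$ is harmless because $\varphi_s$ is positive and smooth and therefore bounded above and below by positive constants on the compact set $\Omega$; the same sandwiching argument as in Step~1, applied now to the amplitude rather than to $K$, shows $\RLCT_\Omega(I;\varphi_a\varphi_s) = \RLCT_\Omega(I;\varphi_a)$. This proves the second, more general assertion.

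\emph{Step 3: The model statement and localization.} For the first claim, take $K(\omega) = \sum_i q_i \log(q_i/p_i(\omega))$, the Kullback--Leibler function, and $f_i(\omega) = p_i(\omega) - q_i$; one checks by a Taylor expansion of $t \mapsto \log t$ about $t = q_i$ (here I use $q > 0$ so all $q_i$ are bounded away from $0$, and compactness of $\Omega$ to control the remainder uniformly) that $K$ is real analytic near $\mathcal V = p^{-1}(q)$ and satisfies $c_1\sum (p_i - q_i)^2 \le K \le c_2\sum(p_i-q_i)^2$ in a neighbourhood of $\mathcal V$, so the general statement applies on a neighbourhood of $\mathcal V$; away from $\mathcal V$ the integrand of the zeta function is bounded and contributes no pole. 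Finally, to pass from $\RLCT_\Omega$ to $\min_{x\in\mathcal V}\RLCT_{\Omega_x}$ I would invoke the fact, to be established in Section~\ref{sec:RLCT}, that the real log canonical threshold is local on $\mathcal V$: since $\mathcal V$ is compact it is covered by finitely many such neighbourhoods $\Omega_x$, and the smallest pole of the global zeta function is the maximum over the local ones, i.e. the minimum of the pairs $\RLCT_{\Omega_x}(I;\varphi_a)$ in the ordering we fixed.

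\emph{Main obstacle.} The genuinely delicate point is Step~1 --- that a two-sided bound $c_1 K_1 \le K \le c_2 K_1$ implies equality of the \emph{pairs} $(\lambda,\theta)$, not merely of the leading exponents $\lambda$. Inequality of the $\lambda$'s is immediate from monotone convergence, but to pin down the multiplicity $\theta$ one has to control the \emph{order} of the pole under the sandwich, which requires knowing (from the resolution of singularities used elsewhere in the paper, or from a Mellin-transform/asymptotic-expansion argument) that near its smallest pole each zeta function behaves like $C\,(\lambda - z)^{-\theta}(1 + o(1))$ with $C \ne 0$, and that this leading behaviour is preserved under comparison of integrands. The rest is essentially bookkeeping: the substitution $z \mapsto 2z$, the positivity of $\varphi_s$, the Taylor estimate for $K$, and the localization lemma.
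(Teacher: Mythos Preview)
Your proposal is correct and follows essentially the same route as the paper: sandwiching (the paper's Corollary~\ref{thm:AsympBound}), the definitional link between $\sum f_i^2$ and the ideal zeta function, the Taylor estimate on the Kullback--Leibler distance, and localization (Proposition~\ref{thm:GlobalRLCT}); the only organizational difference is that the paper proves the model case first and remarks that the general case follows, whereas you go in the opposite order. Regarding your ``main obstacle,'' the paper sidesteps the Laurent-coefficient-positivity issue by passing through the Laplace-integral characterization (Proposition~\ref{thm:DefRLCT}(a)): the pointwise inequality $e^{-NK_1} \ge e^{-NK_2}$ gives $Z_{K_1}(N) \ge Z_{K_2}(N)$, and the asymptotic $\log Z \sim -\tfrac{\lambda}{2}\log N + (\theta-1)\log\log N$ then yields the ordering on the pair $(\lambda,\theta)$ directly.
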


To prove this theorem and other properties of real log canonical thresholds, we recall Hironaka's theorem on the resolution of singularities \cite{Hi} and develop useful lemmas in Section \ref{sec:ResSing}. Our treatment differs from that of Watanabe \cite{W} in the following way: we 
study the \emph{local} behavior of real log canonical thresholds at points $x$ in the parameter space $\Omega$. In particular, we will be interested in the case where $x$ is on the boundary $\partial \Omega$. Example \ref{ex:RLCTdependsonboundary} is an illustration of how the threshold is affected by the inequalities $g_i \geq 0$ which are active at $x$. This issue can be critical in singular model selection because the parameter space of one model is often contained in the boundary of another that is more complex.

After studying the local thresholds, we then show that the real log canonical threshold \emph{globally} over $\Omega$ is the minimum of local thresholds at points $x$ in $\Omega$. Identifying where these minimum thresholds occur is by itself a difficult problem which we discuss in Section \ref{sec:ResSing}. As a consequence of our results, we write down explicit formulas for the coefficients in asymptotic expansions of \mbox{Laplace integrals}. Our formulas extend those of 
Arnol'd--Guse\u\i n-Zade--Varchenko\cite{AGV} because they apply also to parameter spaces with boundary. Using this expansion to improve approximations of likelihood integrals will be the subject of future work.

Our next aim is to develop tools for computing or bounding real log canonical thresholds of ideals. Section \ref{sec:RLCT} summarizes useful fundamental properties of real log canonical thresholds. In Section \ref{sec:Newton}, we derive local thresholds in nondegenerate cases using an important tool from toric geometry involving Newton polyhedra. This method was invented by Varchenko~\cite{V} and applied to statistical models by \mbox{Watanabe and Yamazaki \cite{YW}}. Their formulas were defined for functions, but we develop extensions of these formulas for ideals. We introduce a new notion of nondegeneracy for ideals, known as \emph{sos-nondegeneracy}, and give the following bound for the real log canonical threshold of an ideal with respect to a monomial amplitude function $\omega^\tau := \omega_1^{\tau_1}\cdots \omega_d^{\tau_d}$. These monomial functions occur frequently when we apply a change of variables to resolve the singularities in a model. Newton polyhedra and their $\tau$-distances are defined in Section \ref{sec:Newton}. 

\begin{thm} \label{thm:SosnondegenerateRLCT} Let $I$ be a finitely generated ideal in the ring of functions which are real analytic on $\Omega$, and suppose the origin $0$ lies in the interior of $\Omega$. Then, for every sufficiently small neighborhood  $\Omega_0$ of the origin, 
$$\RLCT_{\Omega_0} (I; \omega^\tau) \leq (1/l_\tau, \theta_\tau)$$
where $l_\tau$ is the $\tau$-distance of the Newton polyhedron $\mathcal{P}(I)$ and $\theta_\tau$ its multiplicity. Equality occurs when $I$ is monomial or, more generally, sos-nondegenerate.
\end{thm}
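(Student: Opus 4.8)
The plan is to prove the inequality and the equality separately, in both cases by pulling the zeta function
$$
\zeta(z)=\int_{\Omega_0}\bigl(f_1(\omega)^2+\cdots+f_r(\omega)^2\bigr)^{-z/2}\,|\omega^\tau|\,d\omega
$$
back along the toric modification $\pi$ of a smooth subdivision $\Sigma$ of the normal fan of the Newton polyhedron $\mathcal{P}(I)$. For the inequality I would first dominate $I$ by the monomial ideal $J=\bigl\langle\,\omega^\alpha:\alpha\in\supp f_i,\ 1\le i\le r\,\bigr\rangle$ spanned by every monomial occurring in the generators. Then $\mathcal{P}(J)=\mathcal{P}(I)$ because each $\supp f_i\subseteq\mathcal{P}(I)$, and on a small cube about the origin the Cauchy--Schwarz inequality gives $f_1^2+\cdots+f_r^2\le C\sum_\alpha\omega^{2\alpha}$ with the sum over the exponents of the generators of $J$ and $C$ depending only on the coefficients of the $f_i$; the right side is, up to the factor $C$, the sum of squares of the generators of $J$. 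Using the monotonicity of $\RLCT$ under such inequalities and its invariance under positive scalar factors (Section~\ref{sec:RLCT}), this yields $\RLCT_{\Omega_0}(I;\omega^\tau)\le\RLCT_{\Omega_0}(J;\omega^\tau)$, so the inequality follows once the monomial case of the theorem is established, which is also the first half of the equality claim.

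For the monomial case I would resolve $g:=\sum_i\omega^{2a_i}$ directly by $\pi$. On the chart of a maximal cone $\sigma\in\Sigma$ with primitive ray generators $v_1,\dots,v_d$, the pullback $g\circ\pi$ equals $\prod_j y_j^{2\mu_j}$ times a sum of even monomials that is bounded below by a positive constant near $\pi^{-1}(0)$, where $\mu_j=\min_{\alpha\in\mathcal{P}(J)}\langle v_j,\alpha\rangle$; moreover $\omega^\tau\circ\pi=\prod_j y_j^{\langle v_j,\tau\rangle}$ and the Jacobian of $\pi$ is $\prod_j y_j^{\langle v_j,\mathbf 1\rangle-1}$. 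Hence this chart contributes a first pole of $\zeta$ at $\min_j\langle v_j,\tau+\mathbf 1\rangle/\mu_j$, the minimum over those $j$ with $\mu_j>0$. By linear-programming duality the smallest such value over all rays of $\Sigma$ equals $1/l_\tau$, because $l_\tau=\min\{\,s>0:s(\tau+\mathbf 1)\in\mathcal{P}(J)\,\}$; the boundary point $l_\tau(\tau+\mathbf 1)$ lies in the relative interior of a unique face $F$ of $\mathcal{P}(J)$, and choosing $\Sigma$ so that the normal cone of $F$ is a union of its cones shows that the pole $z=1/l_\tau$ has multiplicity $\mathrm{codim}\,F=\theta_\tau$. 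This settles the monomial case and hence the inequality for all $I$.

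For the equality when $I$ is sos-nondegenerate I would instead apply $\pi$ to $\zeta(z)$ itself. On the chart of $\sigma$ one obtains
$$
\int\prod_j y_j^{-z\mu_j+\langle v_j,\,\tau+\mathbf 1\rangle-1}\,h_\sigma(y)^{-z/2}\,dy,\qquad
h_\sigma(y)=\sum_i\prod_j y_j^{2(m_i(\sigma)_j-\mu_j)}\,\tilde f_i(y)^2,
$$
with $m_i(\sigma)_j=\min_{\alpha\in\supp f_i}\langle v_j,\alpha\rangle$, $\mu_j=\min_i m_i(\sigma)_j$, and $\tilde f_i$ the strict transform of $f_i$. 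The monomial factor is exactly as in the monomial case, so it contributes the pole data $(1/l_\tau,\theta_\tau)$; it remains to show $h_\sigma$ neither lowers the pole nor raises its multiplicity. The restriction of $h_\sigma$ to a coordinate stratum $\{y_j=0:j\in S\}$ is, up to positive monomial factors in the remaining variables, the sum of squares of the face parts of those $f_i$ whose supports meet the face of $\mathcal{P}(I)$ singled out by $\{v_j:j\in S\}$; sos-nondegeneracy ensures these face parts have no common zero in the real torus, so $h_\sigma$ is nonvanishing on every stratum of the bounded part of $\pi^{-1}(0)$. Since there are finitely many strata and $\Omega_0$ may be shrunk, $h_\sigma$ is bounded below by a positive constant near $\pi^{-1}(0)$, hence $h_\sigma^{-z/2}$ is holomorphic and bounded on the relevant region and the chart contributes the pole data of its monomial factor alone. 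Combining the charts gives $\RLCT_{\Omega_0}(I;\omega^\tau)=(1/l_\tau,\theta_\tau)$, consistent with the inequality already proved.

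The delicate step is the last one. One must identify the coordinate strata of the toric charts with the faces of $\mathcal{P}(I)$, verify that on each stratum $h_\sigma$ is governed by the corresponding face parts of the $f_i$, and check that the sos-nondegeneracy hypothesis --- as formulated through Newton polyhedra in Section~\ref{sec:Newton} --- is precisely what keeps every $h_\sigma$ bounded below near $\pi^{-1}(0)$, so that no cancellation or lower-order effect reintroduces a smaller pole or a larger multiplicity. By contrast, the reduction to the monomial case and the monomial computation itself are routine manipulations with Newton polyhedra together with the elementary properties of $\RLCT$ collected in Section~\ref{sec:RLCT}.
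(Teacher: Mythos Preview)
Your proposal is correct and follows essentially the same route as the paper: bound $I$ by its monomial ideal via Cauchy--Schwarz to obtain the inequality (this is exactly Proposition~\ref{thm:BoundByMonomial}), and for the sos-nondegenerate case pull back along the toric modification of a smooth refinement of the normal fan to reduce to a monomial integral. The only difference is packaging: the paper invokes Varchenko's Theorems~\ref{thm:VarchenkoDesingular} and~\ref{thm:ScaledDistance} together with Proposition~\ref{thm:EquivalentDefinitionIdealRLCT} as black boxes, whereas you unwind the toric computation inline---your ``delicate step'' (showing $h_\sigma$ is bounded below on coordinate strata via the face ideals) is precisely the content of Varchenko's desingularization theorem specialized to $f=\sum f_i^2$, and your linear-programming identification of the pole is the proof of Theorem~\ref{thm:ScaledDistance}.
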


This theorem has two main consequences. Firstly, it tells us that the real log canonical threshold of an ideal can be computed by finding a change of variables which monomializes the ideal. Secondly, due to Theorems \ref{thm:Watanabe} and \ref{thm:PolyRLCT}, upper bounds on real log canonical thresholds translate to asymptotic lower bounds on the likelihood integral of a statistical model, which in turn give upper bounds on the stochastic complexity of the model.

Currently, there are no programs for computing real log \mbox{canonical thresholds.} There are applications which compute resolutions of singularities, but our statistical problems are too big for them. We hope that our work is a step \mbox{in bridging} the gap. Some of our tools are implemented in a {\sc Singular} library at
$$ \hbox{\url{https://w3id.org/people/shaoweilin/public/rlct.html}.} $$ 
This library computes the Newton polyhedron of an ideal, computes $\tau$-distances, and checks if an ideal is sos-nondegenerate. Instructions and examples on using the library may be found at the above website.

In summary, the learning coefficient of a statistical model is a useful measure of the model complexity and plays an important role in model selection. Because computing this coefficient often requires careful analysis of the Kullback-Leibler function, we propose an ideal-theoretic approach to make this calculation more tractable. This method has several advantages. Firstly, it directly \mbox{exploits poly}-nomiality in the model parametrization. Second, the real log canonical threshold of an ideal is independent of the choice of generators, and this choice provides flexibility to our computations. Thirdly, it is easier to construct Newton polyhedra for polynomial ideals and to check their nondegeneracy (Proposition \ref{thm:DefRLCT}(3)), than for nonpolynomial Kullback-Leibler functions. We demonstrate these ideas in Section \ref{sec:Examples} by computing the learning coefficients of a discrete mixture model which comes from a study involving 132 schizophrenic patients.

To introduce some notation, given $x \in \R^d$, let $\mathcal{A}_x(\R^d)$ be the ring of real-valued functions $f:\R^d \rightarrow \R$ that are analytic at $x$. We sometimes shorten the notation to $\mathcal{A}_x$ when it is clear that we are working with the space $\R^d$. When $x=0$, it is convenient of think of $\mathcal{A}_0$ as a subring of the formal power series ring $\R[[\omega_1, \ldots, \omega_d]] = \R[[\omega]]$. It consists of power series which are convergent in some neighborhood of the origin. For all $x$, $\mathcal{A}_x$ is isomorphic to $\mathcal{A}_0$ by translation. Given a subset $\Omega \subset \R^d$, let $\mathcal{A}_{\Omega}$ be the ring of real functions analytic at each point $x \in \Omega$. Locally, each function can be represented as a power series centered at $x$. Given $f\in\mathcal{A}_\Omega$, define the analytic variety $\mathcal{V}_{\Omega}(f)=\{\omega \in \Omega : f(\omega) = 0\}$ while for an ideal $I \subset \mathcal{A}_\Omega$, we set $\mathcal{V}_{\Omega}(I) = \cap_{f \in I} \mathcal{V}_{\Omega}(f)$. Lastly, given a finite multiset $S \subset \R$, let $\nmin S$ denote the number of times the minimum is attained in $S$.

\section{Resolution of Singularities}
\label{sec:ResSing}

In this section, we introduce Hironaka's theorem on resolutions of singularities. We derive real log canonical thresholds of monomial functions, and demonstrate how such resolutions allow us to find the thresholds of non-monomial functions. We show that the threshold of a function over a compact set is the minimum of local thresholds, and present an example where the threshold at a boundary point depend on the boundary inequalities. We discuss the problem of locating singularities with the smallest threshold, and end this section with formulas for the asymptotic expansion of a Laplace integral.

Before we explore real log canonical thresholds of ideals, let us study those of functions. Given a compact subset $\Omega$ of $\R^d$, a real analytic function $f \in \mathcal{A}_\Omega$ with $f \not\equiv 0$, and a smooth function $\varphi:\R^d \rightarrow \R$, consider the zeta function
\begin{eqnarray}
\zeta(z) = \int_{\Omega} \big|f(\omega)\big|^{-z} \,\,|\varphi(\omega)| \,d\omega,\quad z \in \C.
\label{eq:ZetafPhi}
\end{eqnarray}
This function is well-defined for $z \in \R_{\leq 0}$. If $\zeta(z)$ can be
continued analytically to the whole complex plane $\C$, then all its poles
are isolated points in $\C$. Moreover, if all its poles
are real, then there exists a smallest positive pole $\lambda$. Let
$\theta$ be the multiplicity of this pole. The pole $\lambda$ is the \emph{real log canonical threshold} of $f$ with respect to
$\varphi$ over $\Omega$. If $\zeta(z)$ has no poles, we set $\lambda
= \infty$ and leave $\theta$ undefined. Let
$\RLCT_\Omega(f;\varphi)$ be the pair $(\lambda, \theta)$. By abuse of notation, we sometimes
refer to this pair as the real log canonical threshold of $f$. We order these pairs such that $(\lambda_1, \theta_1) > (\lambda_2, \theta_2)$ if $\lambda_1 > \lambda_2$, or $\lambda_1=\lambda_2$ and $\theta_1 < \theta_2$. Intuitively, considering the asymptotics of $\log Z(N)$ in Theorem \ref{thm:Watanabe}, the ordering is defined in this way so that $(\lambda_1, \theta_1) > (\lambda_2, \theta_2)$ if and only if
$$
\lambda_1 \log N - (\theta_1 -1)\log \log N > \lambda_2 \log N - (\theta_2 -1)\log \log N
$$
for sufficiently large $N$. Lastly, let $\RLCT_\Omega \,f$  denote $\RLCT_\Omega (f; 1)$ where $1$ is~the constant unit function. 

We start with a simple class of functions for which it is easy to compute the real log canonical threshold. It is the class of monomials $\omega_1^{\kappa_1} \cdots \omega_d^{\kappa_d} = \omega^\kappa$.

\begin{prop} \label{thm:MonomialRes}
Let $\kappa = (\kappa_1, \ldots, \kappa_d)$ and $\tau = (\tau_1, \ldots, \tau_d)$ be vectors of non-negative integers. If  $\Omega$ is the positive orthant $\R^d_{\geq 0}$ and $\phi: \R^d \rightarrow \R$ is compactly supported  and smooth with $\phi(0) > 0$, then $\RLCT_\Omega(\omega^\kappa; \omega^\tau\phi) = (\lambda, \theta)$ where 
\begin{eqnarray*}
\lambda = \min_{1\leq j\leq d} \{ \frac{\tau_j+1}{\kappa_j}\}, & & \theta =\nmin_{1\leq j\leq d} \{ \frac{\tau_j+1}{\kappa_j}\}.
\end{eqnarray*}
\end{prop}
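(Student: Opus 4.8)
The plan is to localize the integral near the coordinate hyperplanes of $\R^d_{\geq 0}$, reduce each local piece to elementary one-variable integrals, and read off that the origin carries the smallest pole. Throughout, on $\Omega=\R^d_{\geq 0}$ one has $\zeta(z)=\int_{\R^d_{\geq 0}}\omega^{\tau-z\kappa}\,|\phi(\omega)|\,d\omega$ since $\omega^\kappa\geq 0$ and $\omega^\tau\geq 0$ there. First I would settle the easy bound: $\phi$ is bounded with compact support, so for real $z$ the integral is at most a constant times $\int_{[0,R]^d}\omega^{\tau-z\kappa}\,d\omega=\prod_{j}R^{\tau_j+1-z\kappa_j}/(\tau_j+1-z\kappa_j)$, which is finite exactly when $z<(\tau_j+1)/\kappa_j$ for every $j$ with $\kappa_j>0$, i.e. when $z<\lambda:=\min_j\{(\tau_j+1)/\kappa_j\}$; hence $\zeta$ is holomorphic on $\{\Re z<\lambda\}$ and no pole lies below $\lambda$.

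Next I would fix a smooth partition of unity near $\operatorname{supp}\phi$ subordinate to small cubes, splitting $\zeta$ into a finite sum over the cubes; after shrinking them and splitting again along $\{\phi=0\}$ where needed, I may assume $|\phi|$ is smooth on each cube, and $\phi>0$ on the cube $[0,\delta]^d$ around the origin. Inside that cube I would Taylor expand $\phi$ one variable at a time, $\phi(\omega)=\sum_{a<M}\tfrac{1}{a!}\,\partial_1^a\phi(0,\omega')\,\omega_1^a+\omega_1^M\psi_M(\omega)$; integrating $\omega_1^{\tau_1+a-z\kappa_1}$ over $[0,\delta]$ gives the meromorphic factor $\delta^{\tau_1+a+1-z\kappa_1}/(\tau_1+a+1-z\kappa_1)$, while the $\omega_1^M$-remainder stays holomorphic for $\Re z<(\tau_1+M+1)/\kappa_1$. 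Iterating over all coordinates and letting $M\to\infty$ continues this piece to all of $\C$, with poles contained in $\{(\tau_j+1+n)/\kappa_j:\kappa_j>0,\ n\in\N\}$, a set of positive rationals all $\geq\lambda$. Every other cube-piece is handled the same way, so $\zeta$ extends meromorphically to $\C$ with all poles positive rational and $\geq\lambda$.

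It remains to show $\lambda$ is genuinely a pole and to compute its multiplicity. Set $J=\{j:(\tau_j+1)/\kappa_j=\lambda\}$ and $\theta=|J|$, the number $\nmin$ in the statement (note $\kappa_j>0$ for $j\in J$). In the expansion above, the coefficient of $(z-\lambda)^{-\theta}$ in the origin-piece comes only from Taylor monomials $\omega^\alpha$ with $\alpha_j=0$ for all $j\in J$; summing these and letting $M\to\infty$, it equals $(-1)^\theta\big(\prod_{j\in J}\kappa_j\big)^{-1}$ times the absolutely convergent integral of $\big(\prod_{j\notin J}\omega_j^{\tau_j-\lambda\kappa_j}\big)$ against the restriction of $\phi$ to the face $\{\omega_j=0:j\in J\}$ (each exponent $\tau_j-\lambda\kappa_j>-1$ off $J$), and this integral is strictly positive because $\phi(0)>0$. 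For a cube-piece around another point $x$: if fewer than $\theta$ of the hyperplanes $\{\omega_j=0\}$, $j\in J$, pass through $x$, it has a pole of order $<\theta$ at $\lambda$ and is irrelevant; if all of them do (i.e. $x_j=0$ for every $j\in J$), the same iterated computation gives a contribution to the $(z-\lambda)^{-\theta}$-coefficient equal to $(-1)^\theta\big(\prod_{j\in J}\kappa_j\big)^{-1}$ times a \emph{nonnegative} number, nonnegative precisely because the integrand carries the absolute value $|\varphi|$. Hence every local contribution has the common sign $(-1)^\theta$, they cannot cancel, and the total $(z-\lambda)^{-\theta}$-coefficient of $\zeta$ is $(-1)^\theta\big(\prod_{j\in J}\kappa_j\big)^{-1}$ times a strictly positive real, so it is nonzero; moreover no piece has a pole of order exceeding $\theta$ at $\lambda$. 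Therefore $\lambda$ is the smallest pole and its multiplicity is exactly $\theta$, which is the assertion.

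The step I expect to be the genuine obstacle is this last non-cancellation: setting up the analytic continuation and extracting the $(z-\lambda)^{-\theta}$-coefficient across the partition of unity is routine bookkeeping, but one really must use \emph{both} hypotheses --- $\phi(0)>0$ and the absolute value in the integrand --- to be certain the leading Laurent coefficient does not vanish through an accidental cancellation of pieces of opposite sign.
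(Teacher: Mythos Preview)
Your proof is correct and follows essentially the same Taylor-expansion strategy the paper sketches (citing \cite[Lemma 7.3]{AGV}): write $\phi = T_s + R_s$, integrate the polynomial part explicitly to exhibit the poles, and bound the remainder; you add welcome detail on the non-cancellation of the leading Laurent coefficient, which the paper leaves implicit. One small wrinkle: ``splitting along $\{\phi=0\}$'' to force $|\phi|$ smooth on each cube is not quite legitimate, since the zero set of a merely smooth function can be an arbitrary closed set --- but this does no real harm, because only the origin cube (where $\phi>0$ and hence $|\phi|=\phi$ is smooth) actually needs the Taylor expansion to extract the order-$\theta$ pole, and on every other cube the nonnegativity of the integrand for real $z<\lambda$ already supplies the sign control that prevents cancellation.
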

\begin{proof}
See \cite[Lemma 7.3]{AGV}. The idea is to express $\phi(\omega)$ as $T_s(\omega) + R_s(\omega)$ where $T_s$ is the $s$-th degree Taylor polynomial and $R_s$ the difference. We then integrate the main term $|f|^{-z} \,T_s$ explicitly and show that the integral of the remaining term $|f|^{-z} R_s$ does not have smaller poles. This process gives the analytic continuation of $\zeta(z)$ to the whole complex plane, so we have the Laurent expansion
\begin{eqnarray}
\label{eq:LaurentExp}
\zeta(z) =  \sum_{\alpha>0} \sum_{i=1}^{d}\frac{d_{i,\alpha}}{(z-\alpha)^i} + P(z)
\end{eqnarray}
where the poles $\alpha$ are positive rational numbers and $P(z)$ is a polynomial.
\end{proof}

For non-monomial $f(\omega)$, Hironaka's celebrated theorem \cite{Hi} on the \emph{resolution of singularities} tells us that we can always reduce to the monomial case. Here, a $d$-dimensional real analytic manifold is a topological space (second countable and Hausdorff) that can be covered by charts which are homeomorphic to open balls in $\R^d$ and where the transition maps between charts are real analytic maps.

\begin{thm}[Resolution of Singularities] \label{thm:ResOfSing}
Let $f$ be a non-constant real analytic function in some neighborhood $\Omega \subset \R^d$ of the origin with $f(0)=0$. Then, there exists a triple $(M, W, \rho)$ where
\begin{enumerate}
\item[a.] $W \subset \Omega$ is a neighborhood of the origin,
\item[b.] $M$ is a $d$-dimensional real analytic manifold,
\item[c.] $\rho:M \rightarrow W$ is a real analytic map
\end{enumerate}
satisfying the following properties.
\begin{enumerate}
\item[i.] $\rho$ is proper, i.e. the inverse image of any compact set is compact.
\item[ii.] $\rho$ is a real analytic isomorphism between $M\setminus \mathcal{V}_{M}(f \circ \rho)$ and $W\setminus \mathcal{V}_{W}(f)$. 
\item[iii.] For any $y \in \mathcal{V}_{M}(f\circ \rho)$, there exists a local chart $M_y$ with coordinates $\mu = (\mu_1, \mu_2, \ldots \mu_d)$ such that $y$ is the origin and 
$$
f\circ \rho(\mu) = a(\mu) \mu_1^{\kappa_1} \mu_2^{\kappa_2} \cdots \mu_d^{\kappa_d} = a(\mu)\mu^\kappa
$$
where $\kappa_1, \kappa_2, \ldots, \kappa_d$ are non-negative integers and $a$ is a real analytic function with $a(\mu) \neq 0$ for all $\mu$. Furthermore, the Jacobian determinant equals
$$
|\rho'(\mu)| = h(\mu) \mu_1^{\tau_1} \mu_2^{\tau_2} \cdots \mu_d^{\tau_d} = h(\mu) \mu^\tau
$$ 
where $\tau_1, \tau_2, \ldots, \tau_d$ are non-negative integers and $h$ is a real analytic function with $h(\mu) \neq 0$ for all $\mu$.
\end{enumerate}
\end{thm}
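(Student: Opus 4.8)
The plan is to deduce this statement from the existing literature on resolution of singularities in the real analytic category rather than to reprove it from scratch; the only content we really need to add on top of the classical theorems is the simultaneous monomialization of the Jacobian determinant $|\rho'|$. First I would invoke Hironaka's theorem \cite{Hi}, in the constructive real analytic form established by Bierstone--Milman \cite{BM} (see also \cite{BEV}), to obtain, over a sufficiently small neighborhood $W \subset \Omega$ of the origin, a proper real analytic map $\rho: M \to W$ from a $d$-dimensional real analytic manifold $M$, realized as a finite composition of blow-ups along smooth analytic centers, such that $\rho$ restricts to a real analytic isomorphism over $W \setminus \mathcal{V}_W(f)$ and such that $\mathcal{V}_M(f \circ \rho)$, together with the total exceptional divisor, is a simple normal crossings divisor. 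Properties (i) and (ii) are then immediate: properness is preserved under finite composition, and each blow-up is an isomorphism away from its center, which is chosen inside the successive strict transforms of $\mathcal{V}(f)$, hence inside $\mathcal{V}_M(f \circ \rho)$.

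For property (iii), the simple normal crossings condition supplies, at each $y \in \mathcal{V}_M(f\circ\rho)$, local coordinates $\mu_1, \ldots, \mu_d$ centered at $y$ in which the components of the divisor through $y$ are the coordinate hyperplanes $\{\mu_j = 0\}$, so that $f \circ \rho(\mu) = a(\mu)\,\mu^\kappa$ with $a$ a nonvanishing analytic unit and $\kappa$ a vector of non-negative integers — this is exactly the definition of normal crossings. It then remains to observe that, in this same chart, the Jacobian determinant is also a monomial times a unit. The key local fact is that the Jacobian of a single blow-up with smooth center of codimension $c$ is, in each of the standard affine charts on the blow-up, the monomial $\mu_j^{c-1}$ (supported on the new exceptional divisor) times a nonvanishing analytic function; composing the blow-ups and tracking how exceptional divisors pull back shows that $|\rho'(\mu)| = h(\mu)\,\mu^\tau$ with $h$ a nonvanishing analytic function and $\tau$ a vector of non-negative integers whose support lies in the exceptional locus. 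Since this exceptional locus near $y$ is a union of the coordinate hyperplanes already chosen to present $f\circ\rho$ as a monomial, $f\circ\rho$ and $|\rho'|$ are simultaneously monomialized in the same coordinates $\mu$, which is the assertion of (iii).

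The step requiring the most care — rather than a genuine obstacle — is this passage from the well-known complex analytic or algebraic resolution statement to a real analytic one in which $M$ is an honest manifold and the Jacobian is everywhere a monomial-times-unit; one handles it by shrinking $W$ so that the resolution is a \emph{finite} composition of blow-ups with smooth analytic centers, after which everything above is a finite bookkeeping of exceptional exponents. I do not anticipate any difficulty beyond citing the appropriate real analytic resolution theorem and carrying out this bookkeeping.
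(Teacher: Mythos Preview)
The paper does not prove this theorem at all: it is stated as Hironaka's theorem with a citation to \cite{Hi} and used as a black box thereafter. Your proposal to extract the precise formulation from the constructive real analytic resolution literature (Hironaka, Bierstone--Milman, Bravo--Encinas--Villamayor) is correct and in fact supplies more detail than the paper itself, which simply takes the statement as given.
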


We say that $(M, W, \rho)$ is a \emph{resolution of singularities} or a \emph{desingularization} of $f$ at the origin. The set of points in $M$ where $\rho$ is not one-to-one is the \emph{excep-} \emph{tional divisor}. From properties (i) and (ii), it also follows that $\rho$ is surjective: if $x \in \mathcal{V}_{W}(f)$, we pick a compact neighborhood $V$ of $x$ and a sequence $x_1, x_2, \ldots$ of points in $V \setminus \mathcal{V}_{W}(f)$ converging to $x$. The sequence can be chosen off the variety because the variety has measure zero. Then, the preimages $\rho^{-1}(x_1), \rho^{-1}(x_2), \ldots$ contain a converging subsequence with limit $y$, and $\rho(y)=x$ by continuity.

Now, let us desingularize a list of functions simultaneously.

\begin{cor}[Simultaneous Resolutions] \label{thm:SimultRes}
Let $f_1,\ldots, f_l$ be non-constant real analytic functions in some neighborhood $\Omega \subset \R^d$ of the origin with all $f_i(0)=0$. Then, there exists a triple $(M, W, \rho)$ that desingularizes each $f_i$ at the origin.
\end{cor}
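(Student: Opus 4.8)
The plan is to resolve all the $f_i$ simultaneously by applying Theorem~\ref{thm:ResOfSing} to a single function, namely the product $f := f_1 f_2 \cdots f_l$. First I would check the hypotheses of that theorem for $f$: it is real analytic on $\Omega$; it satisfies $f(0) = 0$ since every $f_i(0) = 0$; and it is non-constant, because each $f_i$ is non-constant and analytic, hence has nonzero germ at the origin, and the local ring of convergent power series is an integral domain, so the germ of $f$ at $0$ is nonzero. Theorem~\ref{thm:ResOfSing} then produces a triple $(M, W, \rho)$ satisfying (i)--(iii) with respect to $f$, and I claim this same triple desingularizes every $f_i$.

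Property (i) refers to $\rho$ alone, so there is nothing to do. For property (ii), I would note that $\mathcal{V}_W(f) = \bigcup_i \mathcal{V}_W(f_i)$ and $\mathcal{V}_M(f\circ\rho) = \bigcup_i \mathcal{V}_M(f_i\circ\rho)$, so the isomorphism furnished for $f$ is precisely an analytic isomorphism away from the locus where \emph{some} $f_i$ vanishes; this is the correct sense of a simultaneous resolution. (One cannot in general demand that $\rho$ be an isomorphism over each $W\setminus\mathcal{V}_W(f_i)$ individually, since monomializing one $f_i$ may force blow-ups along centres that meet the vanishing loci of the other $f_j$ but not of $f_i$.)

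The substantive step is property (iii). Fix an index $i$ and a point $y\in\mathcal{V}_M(f_i\circ\rho)$; since $\mathcal{V}_M(f_i\circ\rho)\subseteq\mathcal{V}_M(f\circ\rho)$, property (iii) for $f$ gives a chart $M_y$ with coordinates $\mu$, in which $y$ is the origin, $f\circ\rho = a(\mu)\,\mu^\kappa$ with $a$ nowhere vanishing, and the Jacobian determinant is $h(\mu)\,\mu^\tau$ with $h$ nowhere vanishing. In the local ring of analytic function germs at $y$ in these coordinates — a regular local ring, hence a unique factorization domain, in which $\mu_1,\dots,\mu_d$ are pairwise non-associate primes and the units are exactly the germs nonvanishing at $y$ — the identity $\prod_i(f_i\circ\rho) = a\,\mu^\kappa$ together with unique factorization forces each $f_i\circ\rho$ to be of the form $a_i(\mu)\,\mu^{\kappa^{(i)}}$ with $a_i$ a unit germ and $\kappa^{(i)}\in\Z_{\ge 0}^d$, $\sum_i\kappa^{(i)}=\kappa$. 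After shrinking $M_y$ finitely often so that each $a_i$ is realized by a nowhere-vanishing analytic function, this is exactly property (iii) for $f_i$ at $y$, with the same Jacobian exponent $\tau$. Points $y$ with $f_i(\rho(y))\ne 0$ need no attention, being already of this form with $\kappa^{(i)}=0$ on a small chart.

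I expect the only delicate point to be this last step: descending the monomialization of the product to a simultaneous monomialization of the factors. It rests on unique factorization in the analytic local ring and on the elementary observation that a unit germ is represented by a nowhere-vanishing analytic function after shrinking the chart. All the geometric content is already contained in Theorem~\ref{thm:ResOfSing}; the corollary is a formal consequence.
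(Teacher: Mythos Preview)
Your proposal is correct and follows exactly the approach the paper takes: desingularize the product $f_1\cdots f_l$ via Theorem~\ref{thm:ResOfSing} and check that the resulting triple works for each factor. The paper's proof is just a two-line sketch citing \cite[Thm 11]{W} and \cite[Lemma 2.3]{G2} for the details, whereas you have actually supplied them, including the unique-factorization argument for property~(iii) and the caveat about property~(ii) holding only off the \emph{union} of the zero loci.
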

\begin{proof} The idea is to desingularize the product $f_1(\omega) \cdots f_l(\omega)$ and to show that such a resolution of singularities is also a resolution for each $f_i$. See \cite[Thm 11]{W} and \cite[Lemma 2.3]{G2} for details. 
\end{proof}

For the rest of this section, let $\Omega = \{\omega \in \R^d, g_1(\omega) \geq 0, \ldots, g_l(\omega) \geq 0\}$ be compact and semianalytic. We also assume that $f, \varphi \in \mathcal{A}_\Omega$, and that $f, g_1, \ldots, g_l$ are not constant functions.

\begin{lem}\label{thm:LocalRLCT}
For each $x \in \Omega$, there is a neighborhood $\Omega_x$ of $x$ in $\Omega$ such that 
for all smooth functions $\phi$ on $\Omega_x$ with $\phi(x) > 0$,
$$
\RLCT_{\Omega_x}(f; \varphi \phi) = \RLCT_{\Omega_x}(f;\varphi).
$$
\end{lem}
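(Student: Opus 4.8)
The plan is to pass, by resolution of singularities, to a local monomial model in which the claim becomes an instance of Proposition~\ref{thm:MonomialRes}: a smooth amplitude factor which is positive at the origin does not change the real log canonical threshold of a monomial. First I translate $x$ to the origin. If $f(0)\neq 0$ and $\varphi(0)\neq 0$, then for $\Omega_x$ so small that $f$ and $\varphi$ are non-vanishing on $\overline{\Omega_x}$, both $\zeta_\phi(z)=\int_{\Omega_x}|f|^{-z}|\varphi\phi|\,d\omega$ and its $\phi\equiv 1$ analogue $\zeta_1$ are entire in $z$ (bounded holomorphic integrands over a compact set), both thresholds equal $(\infty,\cdot)$, and there is nothing to prove; likewise if $\varphi\equiv 0$. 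Otherwise let $F$ be the product of those among $f,\varphi,g_1,\dots,g_l$ which vanish at the origin and apply Corollary~\ref{thm:SimultRes} to get a desingularization $(M,W,\rho)$ of $F$, hence of each such function, with the monomializing local charts of Theorem~\ref{thm:ResOfSing}(iii). The fibre $\rho^{-1}(0)$ is compact ($\rho$ proper); cover it by finitely many such charts $M_{y_1},\dots,M_{y_n}$ with centres $y_j\in\rho^{-1}(0)$, put $V=\bigcup_j M_{y_j}$, and --- $\rho$ being a closed map --- choose $\Omega_x$ with $\overline{\Omega_x}\subset W$ compact and $\rho^{-1}(\overline{\Omega_x})\subset V$. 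This $\Omega_x$ is fixed now, independently of $\phi$.

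Let $\phi$ be any smooth function with $\phi(0)>0$, fix a partition of unity $\{\sigma_j\}$ subordinate to the $M_{y_j}$ with each $\sigma_j(y_j)>0$ and $\sum_j\sigma_j\equiv 1$ near $\rho^{-1}(\overline{\Omega_x})$, and change variables off the measure-zero exceptional divisor:
\[
\zeta_\phi(z)=\sum_j\int_{M_{y_j}}|a_j|^{-z}\,|h_j|\,|\tilde\varphi_j|\,\sigma_j\,|\phi\circ\rho|\;|\mu|^{-z\kappa^{(j)}+\tau^{(j)}+\tilde\kappa^{(j)}}\,d\mu,
\]
where $f\circ\rho=a_j\mu^{\kappa^{(j)}}$, $\varphi\circ\rho=\tilde\varphi_j\mu^{\tilde\kappa^{(j)}}$ (with $\tilde\kappa^{(j)}=0$ if $\varphi(0)\neq 0$), $|\rho'|=|h_j|\,|\mu|^{\tau^{(j)}}$ with $a_j,h_j,\tilde\varphi_j$ non-vanishing, and the integral is restricted to the region $\{g_i\circ\rho\ge 0\}$, which after coordinate flips is a product of half-lines and lines. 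Each amplitude $|a_j|^{-z}|h_j||\tilde\varphi_j|\sigma_j|\phi\circ\rho|$ is holomorphic in $z$, compactly supported, nonnegative for real $z$, and Lipschitz in $\mu$ ($C^\infty$ off the zero set of $\phi\circ\rho$). Running the Taylor-remainder argument behind Proposition~\ref{thm:MonomialRes} (see \cite[Lemma 7.3]{AGV}) with such an amplitude --- Lipschitz regularity being more than enough to continue past, and to extract the leading Laurent data at, the first pole --- shows $\zeta_\phi$ is meromorphic on a neighbourhood of $\{\Re z\le\lambda_0\}$ with real, positive-rational poles, where $(\lambda_0,\theta_0):=\RLCT_{\Omega_x}(f;\varphi)$. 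The same computation with $\phi\equiv 1$ shows more: within one monomializing chart the monomial threshold at any point is governed by a sub-collection of the exponents governing it at the chart centre, hence is $\ge$ the value at the centre, so the finite sum of chart contributions computing $\int_{\Omega'}|f|^{-z}|\varphi|\,d\omega$ is the same for every neighbourhood $\Omega'$ of the origin with $\Omega'\subseteq\Omega_x$, and this integral always has smallest pole $\lambda_0$ with multiplicity $\theta_0$.

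The proof closes by two-sided domination on the real axis, where both $\zeta_\phi$ and $\zeta_1$ are nonnegative. From $|\varphi\phi|\le(\sup_{\Omega_x}|\phi|)|\varphi|$ we get $0\le\zeta_\phi(z)\le C_1\zeta_1(z)$ for real $z<\lambda_0$; since $\zeta_1(z)\sim c_1(\lambda_0-z)^{-\theta_0}$ with $c_1>0$ as $z\to\lambda_0^-$, it follows that $\zeta_\phi$ is holomorphic on $\{\Re z<\lambda_0\}$ and has at $\lambda_0$ a pole of order at most $\theta_0$. For the reverse, $\phi(0)>0$ and continuity give a ball $B$ about the origin and $c>0$ with $\phi\ge c$ on $\overline{B}\cap\Omega_x$, so $\zeta_\phi(z)\ge c\int_{B\cap\Omega_x}|f|^{-z}|\varphi|\,d\omega\sim c\,c_1'(\lambda_0-z)^{-\theta_0}\to+\infty$ as $z\to\lambda_0^-$ by the last paragraph; hence $\zeta_\phi$ is unbounded at $\lambda_0$, so it has a pole there of order at least $\theta_0$. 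Therefore $\zeta_\phi$ has smallest pole $\lambda_0$ with multiplicity $\theta_0$, i.e.\ $\RLCT_{\Omega_x}(f;\varphi\phi)=(\lambda_0,\theta_0)=\RLCT_{\Omega_x}(f;\varphi)$. The delicate step, where I expect most of the work to lie, is the second paragraph: because $\phi$ is positive only \emph{at} $x$, the density $|\varphi\phi|$ is not of the analytic-times-positive-smooth type and $\phi\circ\rho$ is only Lipschitz upstairs, so one must verify that the resolution and monomial computations still produce a genuine meromorphic continuation and detect the correct first pole under this weaker regularity --- and that $\Omega_x$ can be fixed uniformly in $\phi$, which is exactly why the charts are taken to cover $\rho^{-1}(0)$ rather than $\rho^{-1}$ of a pre-chosen $\Omega_x$.
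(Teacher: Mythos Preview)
Your proof is correct and follows the paper's skeleton: simultaneous resolution of the relevant functions, a finite cover of the compact fibre $\rho^{-1}(x)$ by monomializing charts, a choice of $\Omega_x$ (independent of $\phi$) with $\rho^{-1}(\Omega_x)$ contained in the chart union, and a partition-of-unity decomposition of $\zeta$ into chart integrals. The divergence is in the closing argument. The paper applies Proposition~\ref{thm:MonomialRes} directly in each chart: the monomial exponents $(\kappa^{(j)},\tau^{(j)})$ come only from $f$, $\varphi$, and $\rho'$, so the first pole $(\lambda_j,\theta_j)$ of the $j$-th chart integral is independent of $\phi$ as long as the amplitude is positive at $y_j$. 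Summing the nonnegative chart contributions then gives $\RLCT_{\Omega_x}(f;\varphi\phi)=\min_j(\lambda_j,\theta_j)$, visibly the same for every admissible $\phi$.

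You instead extract only meromorphic continuation from the chart picture and then squeeze the first pole by two-sided domination on the real axis: the upper bound $\zeta_\phi\le(\sup|\phi|)\,\zeta_1$ forbids poles below $\lambda_0$ and caps the order at $\lambda_0$; the lower bound $\zeta_\phi\ge c\int_{B\cap\Omega_x}|f|^{-z}|\varphi|$, together with your stability-under-shrinking observation, forces a pole of order at least $\theta_0$. This is a legitimate alternative, and your regularity concern is not idle --- $|\phi\circ\rho|$ need not be smooth on the whole chart if $\phi$ changes sign away from $x$, so Proposition~\ref{thm:MonomialRes} as stated does not literally apply to the chart amplitude. But the cheapest fix is local rather than comparative: since $\phi(x)>0$, continuity gives $\phi\circ\rho>0$ on a neighbourhood of each $y_j$, so splitting the $j$-th chart integral into a near-$y_j$ piece (smooth positive amplitude, Proposition~\ref{thm:MonomialRes} applies verbatim) and a far piece (where $\mu^{\kappa^{(j)}}$ is bounded away from zero, so the integral is entire in $z$) recovers the paper's direct computation with no loss. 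Your domination route therefore buys robustness you do not strictly need here, at the price of the auxiliary shrinking lemma and a longer argument.
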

\begin{proof}
Let $x \in \Omega$. If $f(x) \neq 0$, then by the continuity of $f$, there exists a small neighborhood $\Omega_x$ where $0 < c_1 < |f(\omega)| < c_2$ for some constants $c_1, c_2$. Hence, for all smooth functions $\phi$, the zeta functions
$$
\int_{\Omega_x} \big|f(\omega)\big|^{-z} |\varphi(\omega) \phi(\omega)| \,d\omega \quad \mbox{and} \quad \int_{\Omega_x} \big|f(\omega) \big|^{-z}  |\varphi(\omega)| \,d\omega
$$
do not have any poles, so the lemma follows in this case.

Suppose $f(x) = 0$. By Corollary \ref{thm:SimultRes}, we have a simultaneous local resolution of singularities $(M, W,\rho)$ for the functions $f, \varphi, g_1, \ldots, g_l$ vanishing at $x$. For each point $y$ in the fiber $\rho^{-1}(x)$, we have a local chart $M_{y}$ satisfying property (iii) of Theorem \ref{thm:ResOfSing}. Since $\rho$ is proper, the fiber $\rho^{-1}(x)$ is compact so there is a finite subcover $\{M_{y}\}$. We claim that the image $\rho(\bigcup M_{y})$ contains a neighborhood $W_x$ of $x$ in $\R^d$. Indeed, otherwise, there exists a bounded sequence $\{x_1, x_2, \ldots \}$ of points in $W \setminus \rho(\bigcup M_{y})$ whose limit is $x$. We pick a sequence $\{y_1, y_2, \ldots \}$ where $\rho(y_i) = x_i$. Since the $x_i$ are bounded, the $y_i$ lie in a compact set so there is a convergent subsequence $\{\tilde{y}_i\}$ with limit $y_*$. The $\tilde{y}_i$ are not in the open set $\bigcup M_{y}$ so nor is $y_*$. But $\rho(y_*) = \lim \rho(\tilde{y}_i) = x$ so $y_* \in \rho^{-1}(x) \subset M_{y}$, a contradiction.

Now, define $\Omega_x = W_x \cap \Omega$ and let $\{\mathcal{M}_y\}$ be the collection of all sets $\mathcal{M}_y = M_y \cap \rho^{-1}(\Omega_x)$ which have positive measure. Picking a partition of unity $\{\sigma_y(\mu) \}$ subordinate to $\{\mathcal{M}_y\}$ such that $\sigma_y$ is positive at $y$ for each $y$ \cite[Theorem 6.5]{W}, we write the zeta function $\zeta(z) =\int_{\Omega_x} |f(\omega)|^{-z} |\varphi(\omega) \phi(\omega)|\, d\omega$ as
$$
\sum_y \int_{\mathcal{M}_y}\big| f\circ \rho(\mu)\big|^{-z} \,|\varphi\circ\rho(\mu) || \phi\circ\rho(\mu)||\rho'(\mu) | \sigma_y(\mu) \,d\mu.
$$
For each $y$, the boundary conditions $g_i\circ \rho(\mu) \geq 0$ become monomial inequalities, so $\mathcal{M}_y$ is the union of closed orthant neighborhoods of $y$. The integral over $\mathcal{M}_y$ is then the sum of integrals of the form 
$$
\zeta_y(z) = \int_{\R^d_{\geq 0}} \mu^{-\kappa z+\tau} \psi(\mu) d\mu
$$
where $\kappa$ and $\tau$ are non-negative integer vectors while $\psi$ is a compactly supported smooth function with $\psi(0) > 0$. Note that $\kappa$ and $\tau$ do not depend on $\phi$ nor on the choice of orthant at $y$. By Proposition \ref{thm:MonomialRes}, the smallest pole of $\zeta_y(z)$ is
\begin{eqnarray*}
\lambda_y = \min_{1\leq j\leq d} \{ \frac{\tau_j+1}{\kappa_j}\}, \quad
\theta_y = \nmin_{1\leq j\leq d} \{ \frac{\tau_j+1}{\kappa_j}\}.
\end{eqnarray*}
Now,
$
\RLCT_{\Omega_x} (f;\varphi \phi) = \min_y \{(\lambda_y, \theta_y)\}.
$
Since this formula is independent of $\phi$, we set $\phi=1$ and the lemma follows.
\end{proof}
\begin{prop} \label{thm:GlobalRLCT}
Let $\phi:\Omega \rightarrow \R$ be positive and smooth. Then, for sufficiently small neighborhoods $\Omega_x$, the set $\{\RLCT_{\Omega_x}(f; \varphi) : x \in \Omega\}$ has a minimum and
$$\RLCT_\Omega(f; \varphi \phi ) = \min_{x \in \Omega} \,\RLCT_{\Omega_x} (f; \varphi).$$
\end{prop}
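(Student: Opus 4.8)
The plan is to prove the equality by establishing the two inequalities separately, using the compactness of $\Omega$ and the local analysis from Lemma~\ref{thm:LocalRLCT}. First I would observe that by Lemma~\ref{thm:LocalRLCT}, for each $x \in \Omega$ there is a neighborhood $\Omega_x$ on which $\RLCT_{\Omega_x}(f;\varphi\phi) = \RLCT_{\Omega_x}(f;\varphi)$, since $\phi$ is positive and smooth, hence positive at $x$; so it suffices to work with $\varphi\phi$ throughout and drop the distinction. The strategy then is: (a) cover $\Omega$ by finitely many such neighborhoods $\Omega_{x_1}, \ldots, \Omega_{x_m}$ using compactness; (b) use a partition of unity to write $\zeta(z) = \int_\Omega |f|^{-z}|\varphi\phi|\,d\omega$ as a finite sum $\sum_j \zeta_j(z)$ where $\zeta_j(z) = \int_{\Omega_{x_j}} |f|^{-z}|\varphi\phi|\,\sigma_j\,d\omega$ and each $\sigma_j$ is smooth, nonnegative, compactly supported in $\Omega_{x_j}$; (c) argue that the poles of $\zeta(z)$ are among the poles of the $\zeta_j(z)$, so the smallest pole $\lambda$ of $\zeta$ satisfies $\lambda \geq \min_j \lambda_{x_j} \geq \min_{x\in\Omega}\lambda_x$, giving $\RLCT_\Omega(f;\varphi\phi) \leq \min_{x}\RLCT_{\Omega_x}(f;\varphi)$ in the ordering convention of the paper. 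Here I must be slightly careful: a pole could in principle cancel in the sum, which only helps the inequality; and when the minimal $\lambda$ is attained by several $\zeta_j$, the multiplicity $\theta$ of $\zeta$ is at most the largest multiplicity among them, which is exactly what the ordering $(\lambda_1,\theta_1) > (\lambda_2,\theta_2)$ (smaller $\theta$ is larger) encodes — so $\RLCT_\Omega \le \min_j \RLCT_{\Omega_{x_j}}$ follows, and since each $\RLCT_{\Omega_{x_j}} \ge \min_{x\in\Omega}\RLCT_{\Omega_x}$ we get one direction, provided the minimum on the right exists.

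For the reverse inequality and the existence of the minimum, the key point is that $\RLCT_{\Omega_x}(f;\varphi)$ takes only finitely many values as $x$ ranges over $\Omega$. This follows from the resolution of singularities: running the argument of Lemma~\ref{thm:LocalRLCT} with a fixed simultaneous resolution, the local threshold at any $x$ is computed from the exponent pairs $(\kappa,\tau)$ occurring in the finitely many charts of the resolution, and there are only finitely many such pairs; alternatively, by a covering argument, any $\Omega_x$ contains some $\Omega_{x_j}$ from the finite subcover up to shrinking, and monotonicity of RLCT under restriction to subsets pins down $\RLCT_{\Omega_x}(f;\varphi)$ to one of finitely many values. Hence $\{\RLCT_{\Omega_x}(f;\varphi) : x \in \Omega\}$ is finite and has a minimum $(\lambda_*, \theta_*)$, attained at some $x_*$. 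Then $\RLCT_{\Omega_{x_*}}(f;\varphi\phi) = \RLCT_{\Omega_{x_*}}(f;\varphi) = (\lambda_*,\theta_*)$, and since $\zeta_{x_*}(z)$ is, up to adding other $\zeta_j$'s, a summand of $\zeta(z)$, and the $\zeta_j$ with $j \ne x_*$ all have $\lambda_{x_j} \ge \lambda_*$, the pole of $\zeta$ at $\lambda_*$ cannot be cancelled unless some other $\zeta_j$ has a pole of equal or larger order there; in all cases $\RLCT_\Omega(f;\varphi\phi) \ge (\lambda_*,\theta_*) = \min_x \RLCT_{\Omega_x}(f;\varphi)$. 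Combining the two inequalities gives the claimed equality.

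The main obstacle I anticipate is the bookkeeping around \emph{pole cancellation} in the partition-of-unity decomposition: the poles of a finite sum of meromorphic functions are a subset of the union of the poles, but at a common pole the leading Laurent coefficients could conspire to cancel, lowering the effective order or even removing the pole. One must argue this does not spoil the equality — and it does not, because the decomposition is by \emph{nonnegative} bump functions $\sigma_j$ summing to $1$, so the leading Laurent coefficients of the $\zeta_j$ at their smallest pole are of a definite sign (coming from integrating a nonnegative integrand near the resolved monomial locus), hence cannot cancel. Making this sign/non-cancellation argument precise — or else circumventing it by instead comparing $\zeta(z)$ directly on overlapping neighborhoods via the resolution rather than via a partition of unity — is the delicate technical step. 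Everything else (finite subcover, monotonicity under restriction, finiteness of the value set) is routine given Lemma~\ref{thm:LocalRLCT}, Proposition~\ref{thm:MonomialRes}, and Corollary~\ref{thm:SimultRes}.
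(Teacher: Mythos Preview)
Your approach is essentially the paper's: finite subcover of $\Omega$ by the neighborhoods from Lemma~\ref{thm:LocalRLCT}, a subordinate partition of unity, and decomposition of $\zeta$ as a finite sum of local zeta functions. The one substantive difference is how you pass from the finite index set $S=\{x_1,\dots,x_m\}$ to all of $\Omega$. You try to argue that the value set $\{\RLCT_{\Omega_x}(f;\varphi):x\in\Omega\}$ is finite, invoking either a global simultaneous resolution or a containment/monotonicity argument, both of which you leave sketchy (the resolution in Lemma~\ref{thm:LocalRLCT} is local at each $x$, and your containment claim has the inclusion the wrong way round). The paper sidesteps this entirely: having shown $\RLCT_\Omega(f;\varphi\phi)=\min_{x\in S}\RLCT_{\Omega_x}(f;\varphi)$, for any $y\notin S$ it simply adjoins $\Omega_y$ to the cover, chooses a new partition of unity, and reruns the identical argument to obtain $\RLCT_\Omega(f;\varphi\phi)\leq\RLCT_{\Omega_y}(f;\varphi)$. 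This shows at once that the minimum over all of $\Omega$ exists and is already attained on $S$, with no need to control the full value set. Your explicit attention to non-cancellation of leading Laurent coefficients (via nonnegativity of the $\sigma_j$) is a point the paper leaves implicit; conversely, you should explicitly require $\sigma_j(x_j)>0$, since Lemma~\ref{thm:LocalRLCT} is what identifies the smallest pole of $\zeta_j$ (whose amplitude is $\varphi\phi\sigma_j$) with $\RLCT_{\Omega_{x_j}}(f;\varphi)$, and that lemma needs the extra smooth factor to be positive at the center point.
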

\begin{proof}
Lemma \ref{thm:LocalRLCT} associates a small neighborhood to each point in the compact set $\Omega$, so there exists a finite subcover $\{\Omega_{x}:x \in S\}$. Let $\{\sigma_{x}(\omega)\}$ be a smooth partition of unity subordinate to this subcover where $\sigma_{x}(x)>0$ for all $x$. Then,
$$\int_{\Omega} \big| f(\Omega) \big|^{-z} |\varphi(\omega) \phi(\omega) | \,d\omega = \sum_{x \in S} \int_{\Omega_x} \big| f(\Omega) \big|^{-z} | \varphi(\omega) \phi(\omega)|\, \sigma_{x}(\omega)\, d\omega.
$$
From this finite sum, we have
$$
\RLCT_\Omega(f; \varphi \phi) = \min_{x \in S}\,\RLCT_{\Omega_{x}} (f; \varphi \phi \sigma_{x}) = \min_{x \in S}\,\RLCT_{\Omega_{x}} (f; \varphi).
$$
Now, if $y \in \Omega \setminus S$, let $\Omega_y$ be a neighborhood of $y$ prescribed by Lemma \ref{thm:LocalRLCT} and consider the cover $\{\Omega_x : x \in S\} \cup \{\Omega_y\}$ of $\Omega$. After choosing a partition of unity subordinate to this cover and repeating the above argument, we get 
$$
\RLCT_\Omega(f; \varphi \phi) \leq \RLCT_{\Omega_{y}} (f; \varphi) \quad \mbox{ for all  } y\in \Omega.
$$
Combining the two previously displayed equations proves the proposition.
\end{proof}

Abusing notation, we now let $\RLCT_{\Omega_x} (f; \varphi)$ represent the real log canonical threshold for a sufficiently small neighborhood $\Omega_x$ of $x$ in $\Omega$. If $x$ is an interior point of $\Omega$, we denote the threshold at $x$ by $\RLCT_{x} (f; \varphi)$. 

\begin{cor}[See also {\cite[\S 4.5]{W}}]
\label{thm:AnalyticCont}
Given a compact semianalytic set $\Omega \subset \R^d$, a nearly analytic function $\varphi:\Omega \rightarrow \R$, and $f \in \mathcal{A}_{\Omega}$ satisfying $f(x)=0$ for some $x \in \Omega$, the zeta function (\ref{eq:ZetafPhi}) can be continued analytically to $\C$. It has a Laurent expansion (\ref{eq:LaurentExp}) whose poles are positive rational numbers with a smallest element.
\end{cor}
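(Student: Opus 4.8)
The plan is to localize $\zeta(z)$ exactly as in the proof of Proposition~\ref{thm:GlobalRLCT} and then reduce each local piece to a monomial integral as in Proposition~\ref{thm:MonomialRes}. Write the nearly analytic amplitude as $\varphi = \varphi_a\varphi_s$ with $\varphi_a$ real analytic and $\varphi_s$ positive and smooth (we may assume $\varphi_a\not\equiv 0$, else the integral vanishes). Cover $\Omega$ by the neighborhoods $\Omega_x$ supplied by Lemma~\ref{thm:LocalRLCT}, extract a finite subcover indexed by a set $S$, and choose a subordinate smooth partition of unity $\{\sigma_x\}$ with $\sigma_x(x)>0$. Then $\zeta(z) = \sum_{x\in S}\int_{\Omega_x}|f(\omega)|^{-z}\,|\varphi(\omega)|\,\sigma_x(\omega)\,d\omega$, so it suffices to continue each summand. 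If $f(x)\neq 0$, then $|f|$ is bounded above and below on $\Omega_x$ and the summand is entire, contributing no poles, so only the case $f(x)=0$ requires work.

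In that case, apply Corollary~\ref{thm:SimultRes} to produce a single resolution $(M,W,\rho)$ desingularizing $f$, $\varphi_a$, and the $g_i$ vanishing at $x$ simultaneously, shrinking $\Omega_x$ into $W$. Cover the compact fiber $\rho^{-1}(x)$ by finitely many charts $M_y$ as in Theorem~\ref{thm:ResOfSing}, take a subordinate partition of unity $\{\sigma_y\}$ with $\sigma_y(y)>0$, and intersect with $\rho^{-1}(\Omega_x)$ as in the proof of Lemma~\ref{thm:LocalRLCT}. On each chart and on each orthant neighborhood of $y$ cut out by the (now monomial) inequalities $g_i\circ\rho\geq 0$, property~(iii) gives $f\circ\rho(\mu)=a(\mu)\mu^{\kappa}$, $|\rho'(\mu)|=h(\mu)\mu^{\tau}$, and desingularizing $\varphi_a$ gives $\varphi_a\circ\rho(\mu)=b(\mu)\mu^{\sigma}$, with $a,b,h$ real analytic and nonvanishing. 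On a sufficiently small orthant neighborhood these nonvanishing functions have constant sign, so $|a|^{-z}=e^{-z\log|a|}$, $|b|$, and $|h|$ are real analytic in $\mu$ (the first holomorphic in $z$). Folding $|a(\mu)|^{-z}\,|b(\mu)|\,|h(\mu)|\,\varphi_s\!\circ\!\rho(\mu)\,\sigma_y(\mu)$ into one amplitude $\psi(\mu,z)$ — smooth and compactly supported in $\mu$, holomorphic in $z$, with $\psi(0,z)\neq 0$ — reduces the summand to a finite sum of integrals of the form $\int_{\R^d_{\geq 0}}\mu^{-\kappa z+\tau+\sigma}\,\psi(\mu,z)\,d\mu$.

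Finally, each such integral is continued by the $z$-dependent-amplitude version of the argument behind Proposition~\ref{thm:MonomialRes} (see \cite[Lemma 7.3]{AGV}): write $\psi(\,\cdot\,,z) = T_s(\,\cdot\,,z)+R_s(\,\cdot\,,z)$, its degree-$s$ Taylor polynomial in $\mu$ plus remainder; integrate $\mu^{-\kappa z+\tau+\sigma}T_s$ monomial by monomial to obtain a finite sum of rational functions of $z$ whose poles are positive rationals of the form $(\tau_j+\sigma_j+m+1)/\kappa_j$ with $m\in\Z_{\geq 0}$; and bound the remainder integral to see it is holomorphic on an expanding family of half-planes $\{\Re z< (s+1+\min_j(\tau_j+\sigma_j))/\max_j\kappa_j\}$, locally uniformly in $z$. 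Letting $s\to\infty$ gives the analytic continuation to $\C$ and a Laurent expansion of the form (\ref{eq:LaurentExp}). Summing the finitely many contributions over charts, orthants, and $x\in S$ continues $\zeta(z)$; its pole set lies in a finite union of shifted discrete progressions of positive rationals, hence is bounded below and locally finite, so it has a smallest element, which is positive since $\zeta(z)$ is an honest finite integral for $\Re z\leq 0$. The main obstacle is making this last step rigorous: carrying the holomorphic parameter $z$ through the Taylor--remainder estimates and checking that the remainder bounds are locally uniform in $z$ (so the patched continuation is genuinely holomorphic), together with the bookkeeping needed to absorb the merely smooth factor $\varphi_s$ and the boundary orthants without enlarging the pole set.
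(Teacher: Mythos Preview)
Your proposal is correct and follows essentially the same route as the paper's own argument: the paper's proof merely points to the proofs of Lemma~\ref{thm:LocalRLCT} and Proposition~\ref{thm:GlobalRLCT}, and you have faithfully unpacked that localization--resolution--monomial-integral pipeline. Your treatment is in fact more explicit than the paper's in one respect: you correctly flag that after resolution the amplitude acquires a factor $|a(\mu)|^{-z}$ depending holomorphically on $z$, so that Proposition~\ref{thm:MonomialRes} as stated (with a $z$-independent $\phi$) does not literally apply and one must carry the holomorphic parameter through the Taylor--remainder estimate of \cite[Lemma~7.3]{AGV}; this is standard but worth noting.
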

\begin{proof}
The proofs of Lemma \ref{thm:LocalRLCT} and Proposition \ref{thm:GlobalRLCT} outline a way to compute the Laurent expansion of the zeta function (\ref{eq:ZetafPhi}).
\end{proof}

\begin{remark} \label{rem:PriorAbsoluteValue}
In our definition of real log canonical thresholds, we considered integrals with respect to densities $|\varphi(\omega)|\,d\omega$ for some nearly analytic function $\varphi$, while Watanabe only considers the special case where the density is $\varphi(\omega)\,d\omega$ for some smooth positive function $\varphi$. Our general case includes the situation where the absolute value of a Jacobian determinant is multiplied to the density under a change of variables. To prove the basic properties of real log canonical thresholds, we need to resolve the singularities of the variety $\varphi=0$ together~with those cut out by $f, g_1, \ldots, g_l$, as demonstrated in Lemma~\ref{thm:LocalRLCT}. 
\end{remark}

\begin{example}\label{ex:RLCTdependsonboundary}
We now show that the threshold at a boundary point depends on the boundary inequalities. Consider the following two small neighborhoods of the origin in some larger compact set.
$$
\begin{array}{c}
\Omega_1 = \{ (x,y) \in \R^2 : 0 \leq x \leq y  \leq \varepsilon\}\\
\Omega_2 = \{ (x,y) \in \R^2 : 0 \leq y \leq x  \leq \varepsilon\}
\end{array}
$$
To compute the real log canonical threshold of the function $xy^2$ over these sets, we have the corresponding zeta functions below.
$$
\begin{array}{rclcl}
\vspace{0.05in} \zeta_1(z) &=& \displaystyle \int_0^\varepsilon \int_0^y x^{-z} y^{-2z} \,dx\,dy &=& \displaystyle \frac{\varepsilon^{-3z+2}}{(-z+1)(-3z+2)} \\
\zeta_2(z) &=& \displaystyle \int_0^\varepsilon \int_0^x x^{-z} y^{-2z} \,dy\,dx &=& \displaystyle \frac{\varepsilon^{-3z+2}}{(-2z+1)(-3z+2)} 
\end{array}
$$
This shows that $\RLCT_{\Omega_1}(xy^2) = 2/3$ while $\RLCT_{\Omega_2}(xy^2) = 1/2$. \qed
\end{example}

Because the real log canonical threshold over a set $\Omega \subset \R^d$ is the minimum of thresholds at points $x \in \Omega$, we want to know where this minimum is achieved. Let us study this problem topologically. Consider a  locally finite collection $\mathcal{S}$~of pairwise disjoint submanifolds $S \subset \Omega$ such that $\Omega = \cup_{S \in \mathcal{S}} S$ and each $S$ is locally closed, i.e. the intersection of an open and a closed subset. Let $\overline{S}$ be the closure of $S$. We say $\mathcal{S}$ is a \emph{stratification} of $\Omega$ if $S \cap \overline{T} \neq \emptyset$ implies $S \subset \overline{T}$ for all $S,T \in \mathcal{S}$.
A stratification $\mathcal{S}$ of $\Omega$ is a \emph{refinement} of another stratification $\mathcal{T}$ if $S \cap T \neq \emptyset$ implies $S \subset T$ for all $S \in \mathcal{S}$ and $T \in \mathcal{T}$.

Let the amplitude $\varphi:\Omega \rightarrow \R$ be nearly analytic. Let $S_{(\lambda,
\theta), 1}, \ldots, S_{(\lambda,
\theta), r}$ be the connected components of the set $\{x \in \Omega : \RLCT_{\Omega_x}(f;\varphi) = (\lambda,\theta)\}$, and let $\mathcal{S}$ denote the collection $\{S_{(\lambda,\theta), i}\}$ where we vary over all $\lambda$, $\theta$ and $i$. Now, define the order $\ord_xf$ to be the smallest degree of a monomial appearing in a series expansion of $f$ at $x \in \Omega$ \cite[\S 3.9]{E}. This number is independent of the choice of local coordinates $\omega_1, \ldots, \omega_d$ because it is the largest integer $k$ such that $f \in \mathfrak{m}_x^k$ where $\mathfrak{m}_x=\{g \in A_x : g(x) = 0\}$ is the vanishing ideal of $x$. Define $T_{l, 1}, \ldots, T_{l,s}$ to be the connected components of the set $\{x \in \Omega: \ord_xf = l\}$ and let $\mathcal{T}$ be the collection $\{T_{l,j}\}$ where we vary over all $l$ and $j$. We conjecture the following relationship between $\mathcal{S}$ and $\mathcal{T}$. It implies that the minimum real log canonical threshold over a set must occur at a point of highest order.

\begin{conj}\label{conj:stratification}
The collections $\mathcal{S}$ and $\mathcal{T}$ are stratifications of $\Omega$. Furthermore, if the amplitude $\varphi$ is a positive smooth function, then $\mathcal{S}$ refines $\mathcal{T}$.
\end{conj}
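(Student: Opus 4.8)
The plan is to prove, in order, three facts: (a) the functions $x\mapsto\ord_xf$ and $x\mapsto\RLCT_{\Omega_x}(f;\varphi)$ are \emph{constructible} in the real-analytic category, so that their level sets are locally closed semianalytic subsets of $\Omega$ and the connected components of these level sets form locally finite families; (b) each such connected component is a smooth submanifold of $\Omega$, and each of the two families satisfies the frontier condition $S\cap\overline{T}\neq\emptyset\Rightarrow S\subseteq\overline{T}$; and (c) when $\varphi$ is positive and smooth, each connected component of an $\RLCT$-level set is contained in a single $\ord$-level set, which together with (b) gives that $\mathcal{S}$ refines $\mathcal{T}$.

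Fact (a) for $\ord$ is elementary: since $\ord_xf$ is the largest $k$ with $f\in\mathfrak{m}_x^k$, the set $\{x:\ord_xf\ge l\}$ is, in an analytic chart around any of its points, the common zero locus of $f$ together with its partial derivatives of order $<l$ --- a closed analytic subset; hence $\{x:\ord_xf=l\}$ is locally closed and semianalytic, and on the compact $\Omega$ only finitely many values of $l$ occur. For $\RLCT$ I would prove a version of Corollary~\ref{thm:SimultRes} \emph{with parameters}: cover $\Omega$ by finitely many semianalytic sets over each of which $f,g_1,\dots,g_l,\varphi$ admit a simultaneous resolution of singularities varying analytically with the base point, so that the monomial exponents $\kappa$ and $\tau$ of Theorem~\ref{thm:ResOfSing}(iii) are locally constant. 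The local computation in the proof of Lemma~\ref{thm:LocalRLCT} then expresses $\RLCT_{\Omega_x}(f;\varphi)=\min_y(\lambda_y,\theta_y)$ with $(\lambda_y,\theta_y)$ read off from $\kappa,\tau$ via Proposition~\ref{thm:MonomialRes}; so the threshold is locally constant, its level sets are finite unions of the chosen pieces, and (a) follows.

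Granting (a), \L{}ojasiewicz's theorem on stratification of semianalytic sets produces a stratification of $\Omega$ refining both families, so the content of (b) is precisely that no refinement is necessary. For (c), when $\varphi>0$ is smooth it enters the resolution data only through its smoothness, and along a connected component $S$ of a fixed $\RLCT$-level set the parametrized resolution should force $\ord_xf$ to be locally constant: $\ord$ is upper semicontinuous, so it is constant on a dense open part of the connected set $S$ and can jump only upward along a proper closed subset, and at such a jump one expects $\RLCT_{\Omega_x}(f;\varphi)$ to \emph{strictly decrease} in the chosen ordering --- raising the multiplicity strictly worsens the singularity --- contradicting $S\subseteq\{\RLCT=(\lambda,\theta)\}$. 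With (b) this gives $S\subseteq T$ for the $\ord$-component $T$ meeting $S$.

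The hard part will be (b) together with the parametrized resolution used in (a): Theorem~\ref{thm:ResOfSing} is stated at a single point, and making the resolution data vary analytically --- hence locally constantly --- over semianalytic strata is the nontrivial uniformity one needs. Moreover the frontier condition can genuinely fail for the naive level-set decomposition: for $f=x^2-y^2z$ on $\Omega=[-1,1]^3$ one computes that $\{\ord_xf=2\}$ is the whole segment $\{x=y=0\}$, while it meets the closure of each of the two $2$-dimensional components of $\{\ord_xf=1\}$ only along the half-segment $\{x=y=0,\ z\ge 0\}$, so $\overline{T}$ need not be a union of members of $\mathcal{T}$. A provable statement will therefore likely restrict the class of $f$ (for instance to reduced polynomials with suitably mild singularities), or replace $\mathcal{S}$ and $\mathcal{T}$ by the coarsest stratifications they refine and re-prove the refinement there; and in any case, isolating the precise monotonicity ``raising $\ord_xf$ strictly lowers $\RLCT_{\Omega_x}(f;\varphi)$'' required in (c) remains the last analytic obstacle.
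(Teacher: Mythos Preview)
The statement you are attempting to prove is labeled \textbf{Conjecture} in the paper, not a theorem; the paper offers no proof of it and presents it only as a statement believed (or hoped) to be true. There is therefore nothing to compare your proposal against.

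That said, your own analysis already contains the decisive observation: the Whitney umbrella $f=x^{2}-y^{2}z$ on $\Omega=[-1,1]^{3}$ shows that the frontier condition can fail for $\mathcal{T}$. The order-$2$ locus is the full segment $\{x=y=0,\ -1\le z\le 1\}$, a single connected component $S\in\mathcal{T}$, while each connected component $T$ of the order-$1$ locus has closure meeting $S$ only along $\{z\ge 0\}$; hence $S\cap\overline{T}\neq\emptyset$ but $S\not\subset\overline{T}$. This is a genuine counterexample to the first assertion of the conjecture as stated in the paper, so steps (a)--(c) of your plan cannot all go through without additional hypotheses. Your closing paragraph is right to flag this: any correct version will either restrict the class of $f$, replace $\mathcal{S},\mathcal{T}$ by refined stratifications, or weaken the frontier condition. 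In short, you have not proved the conjecture, but you have done something arguably more useful---located an obstruction that shows it is false in the generality stated.
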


\bigskip

Laplace integrals such as (\ref{eq:Marginal}) occur frequently in physics, statistics and other applications. At first, the relationship between their asymptotic expansions and the zeta function (\ref{eq:WatanabeZeta}) seems strange. The key is to write these integrals as
\begin{eqnarray*} 
&Z(N) =  \displaystyle \int_\Omega e^{-N|f(\omega)|} |\varphi(\omega)| \,d\omega = \int_0^{\infty} e^{-N t} v(t) \,dt &\\
&\zeta(z) = \displaystyle  \int_\Omega \big|f(\omega)\big|^{-z} |\varphi(\omega)| \,d\omega = \int_0^\infty t^{-z} v(t) \,dt&
\end{eqnarray*}
where $v(t)$ is the state density function \cite{W} or Gelfand-Leray function \cite{AGV}
\begin{eqnarray*}
v(t) = \frac{d}{dt} \int_{0<|f(\omega)| < t} |\varphi(\omega)|\, d\omega.
\end{eqnarray*}
Formally, $Z(N)$ is the Laplace transform of $v(t)$ while $\zeta(z)$ is its Mellin transform. Note that contrary to its name, $v(t)$ is not strictly a function, but it can be defined as a Schwartz distribution. Next, we study the series expansions
\begin{eqnarray}
\label{eq:SeriesZ} Z(N) &\approx&\sum_\alpha \sum_{i=1}^{d} c_{\alpha,i} N^{-\alpha} (\log N)^{i-1} \\
\label{eq:Seriesv} v(t) &\approx& \sum_{\alpha} \sum_{i=1}^{d} b_{\alpha,i}\, t^{\alpha}(\log t)^{i-1}\\
\label{eq:SeriesZeta} \zeta(z) &\sim& \sum_{\alpha} \sum_{i=1}^{d}d_{\alpha,i}(z-\alpha)^{-i} 
\end{eqnarray}
where (\ref{eq:SeriesZ}) and (\ref{eq:Seriesv}) are asymptotic expansions while (\ref{eq:SeriesZeta}) is the principal part of the Laurent series expansion. Here, the number $d$ of summands is the dimension of the parameter space $\Omega \subset \R^d$. Formulas relating the coefficients $b_{\alpha,i}, c_{\alpha,i}$ and $d_{\alpha,i}$ are then deduced from the Laplace and Mellin transforms of $t^{\alpha}(\log t)^i$. For more detailed expositions on this subject, we refer the reader to Arnol'd--Guse\u\i n-Zade--Varchenko \cite[\S6-7]{AGV}, Watanabe \cite[\S4]{W} and Greenblatt \cite{G}. 

Using this strategy, we now give explicit formulas for the asymptotic expansion of an arbitrary Laplace integral. Our formulas generalize those of Arnol'd--Guse\u\i n-Zade--Varchenko \cite[\S6-7]{AGV} because they apply also to parameter spaces $\Omega$ with analytic boundary. Watanabe \cite[Remark 4.5]{W} gives a similar asymptotic expansion for bounded parameter spaces but we derive precise relationships between the asymptotic coefficients $c_{\alpha, i}$ and the Laurent coefficients $d_{\alpha,i}$ in terms of derivatives $\Gamma^{(i)}$ of Gamma functions.

\begin{thm}
\label{thm:AsympExp}
Let $\Omega \subset \R^d$ be a compact semianalytic subset and $\varphi:\Omega \rightarrow \R$ be nearly analytic. If $f \in \mathcal{A}_\Omega$ with $f(x)=0$ for some $x \in \Omega$, the Laplace integral 
$$
Z(N) = \int_\Omega e^{-N|f(\omega)|} |\varphi(\omega)| \,d\omega
$$
has the asymptotic expansion 
\begin{equation}
\sum_{\alpha}\sum_{i=1}^{d} c_{\alpha, i} \,N^{-\alpha} (\log N)^{i-1}.
\end{equation}
The $\alpha$ in this expansion range over positive rational numbers which are poles of
\begin{equation}
\zeta(z) =  \int_{\Omega_\delta} \big|f(\omega)\big|^{-z} |\varphi(\omega)| \,d\omega
\end{equation}
for any $\delta > 0$ and $\Omega_\delta = \{\omega \in \Omega : |f(\omega)| < \delta \}$. The coefficients $c_{\alpha,i}$ satisfy
\begin{equation}
\label{eq:ciAlpha}
c_{\alpha,i}= \frac{(-1)^{i}}{(i-1)! } \sum_{j=i}^{d} \frac{\Gamma^{(j-i)}(\alpha)}{(j-i)!} \,d_{\alpha,j} 
\end{equation}
where $d_{\alpha,j}$ is the coefficient of $(z-\alpha)^{-j}$ in the Laurent expansion of $\zeta(z)$.
\end{thm}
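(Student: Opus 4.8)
\emph{Localization and resolution.} The plan is to carry out the Mellin--Laplace correspondence sketched just before the theorem, keeping careful track of principal parts. Since $|f(\omega)|\ge\delta$ on $\Omega\setminus\Omega_\delta$, the contribution $\int_{\Omega\setminus\Omega_\delta}e^{-N|f|}|\varphi|\,d\omega$ is $O(e^{-\delta N})$ and hence invisible to any asymptotic expansion in powers of $N$ and $\log N$; similarly $\int_{\Omega\setminus\Omega_\delta}|f|^{-z}|\varphi|\,d\omega$ is entire in $z$, so $\zeta$ over $\Omega$ and over $\Omega_\delta$ have the same poles. Thus it suffices to expand the integral over $\Omega_\delta$. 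Then, exactly as in the proofs of Lemma \ref{thm:LocalRLCT} and Proposition \ref{thm:GlobalRLCT}, I would take a simultaneous resolution of singularities (Corollary \ref{thm:SimultRes}) for $f,\varphi,g_1,\dots,g_l$ at each point of $\mathcal{V}_\Omega(f)$, together with a subordinate partition of unity, reducing $Z(N)$ to a finite sum of model integrals
$$
\int_{\R^d_{\ge 0}} e^{-N\,a(\mu)\mu^\kappa}\,\mu^\tau\,\psi(\mu)\,d\mu,
$$
with $a,\psi$ smooth, $a$ nowhere zero, $\psi$ compactly supported, and $\kappa,\tau$ non-negative integer vectors; the very same resolution expresses $\zeta(z)$ as the analogous finite sum of $\int_{\R^d_{\ge 0}}(a\mu^\kappa)^{-z}\mu^\tau\psi(\mu)\,d\mu$. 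This common source is what forces $Z$ and $\zeta$ to share their exponents and couples their coefficients.

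\emph{Transforms and matching of coefficients.} Writing $v(t)=\frac{d}{dt}\int_{0<|f(\omega)|<t}|\varphi(\omega)|\,d\omega$ as in the text, one has $Z(N)=\int_0^\infty e^{-Nt}v(t)\,dt$ and $\zeta(z)=\int_0^\infty t^{-z}v(t)\,dt$, so $Z$ and $\zeta$ are the Laplace and the Mellin transform of the single distribution $v$. From each model integral --- after Taylor-expanding $\psi$ to order $s$ and handling the remainder as in Proposition \ref{thm:MonomialRes} --- one reads off an asymptotic expansion $v(t)\approx\sum_{\alpha,i}b_{\alpha,i}\,t^{\alpha-1}(\log t)^{i-1}$ as $t\to 0^+$, where $\alpha$ ranges over the poles of $\zeta$. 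Transforming this term by term: the Mellin transform of $t^{\alpha-1}(\log t)^{i-1}$, cut off near $0$, has principal part a constant multiple of $(z-\alpha)^{-i}$, which recovers $\zeta(z)\sim\sum d_{\alpha,j}(z-\alpha)^{-j}$ and expresses each $b_{\alpha,i}$ linearly through the $d_{\alpha,j}$; the Laplace transform of the same monomial, via the substitution $u=Nt$, the binomial expansion of $(\log u-\log N)^{i-1}$, and the identity
$$
\int_0^\infty e^{-u}u^{\alpha-1}(\log u)^{m}\,du=\Gamma^{(m)}(\alpha),
$$
yields $Z(N)\approx\sum c_{\alpha,i}\,N^{-\alpha}(\log N)^{i-1}$ with each $c_{\alpha,i}$ a linear combination of the $b_{\alpha,j}$ with $j\ge i$. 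Substituting the first relation into the second and collecting the $\Gamma$-derivatives gives precisely (\ref{eq:ciAlpha}).

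\emph{Main obstacle.} The real work lies in rigour. One must prove that the Taylor-remainder splitting of $\psi$ in each model integral yields genuine asymptotic expansions, with error of controlled order, both for $v(t)$ near $0$ and for $Z(N)$ at $\infty$, and then justify the termwise transformation by transforming a finite truncation and estimating the Laplace, resp. Mellin, transform of the remainder --- a divergent asymptotic series cannot be transformed naively. The semianalytic (rather than Euclidean) setting is exactly where the monomialization of the boundary inequalities $g_i\circ\rho\ge 0$ from Lemma \ref{thm:LocalRLCT} is indispensable, so that each local piece really is an integral over a union of coordinate orthants to which Proposition \ref{thm:MonomialRes} and the explicit Laurent expansion (\ref{eq:LaurentExp}) apply. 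Once these estimates are secured, the coefficient identity (\ref{eq:ciAlpha}) reduces to the elementary bookkeeping above.
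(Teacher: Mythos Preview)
Your proposal is correct and follows essentially the same route as the paper: split off $|f|\ge\delta$, pass through the state density $v(t)$ so that $Z$ and $\zeta$ are its Laplace and Mellin transforms, and match coefficients via the explicit transforms of $t^{\alpha-1}(\log t)^{i-1}$. The only cosmetic difference is that the paper obtains the expansion of $v(t)$ by inverse-Mellin-transforming the Laurent expansion of $\zeta$ (from Corollary~\ref{thm:AnalyticCont}), whereas you build it directly from the monomialized model integrals; either way one arrives at the same relations $d_{\alpha,j}=-(j-1)!\,b_{\alpha-1,j}$ and $c_{\alpha,i}=(-1)^{i-1}\sum_{j\ge i}\binom{j-1}{i-1}\Gamma^{(j-i)}(\alpha)\,b_{\alpha-1,j}$, hence~\eqref{eq:ciAlpha}.
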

\begin{proof}
First, set $\delta = 1$. We split the integral $Z(N)$ into two parts:
$$
Z(N) = \int_{|f(\omega)| < 1} e^{- N|f(\omega)|} |\varphi(\omega)|\, d\omega + \int_{|f(\omega)| \geq 1} e^{- N|f(\omega)|} |\varphi(\omega)| \,d\omega.
$$
The second integral is bounded above by $Ce^{-N}$ for some non-negative constant $C$, so asymptotically it goes to zero more quickly than any $N^{-\alpha}$. For the first integral, we write $\zeta(z)$ as the Mellin transform of the state density function $v(t)$.
$$
\zeta(z) = \int_{|f(\omega)| < 1} \big|f(\omega)\big|^{-z}|\varphi(\omega)| \,d\omega = \int_0^1 t^{-z} v(t) \,dt.\\
$$
By Corollary \ref{thm:AnalyticCont}, $\zeta(z)$ has a Laurent expansion (\ref{eq:LaurentExp}). Since $|f(\omega)| < 1$, by~domin-ated convergence $\zeta(z) \to 0$ as $z \to -\infty$, so the polynomial part $P(z)$ is identically zero.  Applying the inverse Mellin transform \cite{BBO} to $\zeta(z)$, we get a series expansion (\ref{eq:Seriesv}) of the state density function $v(t)$. Applying the Laplace transform to $v(t)$ in turn gives the asymptotic expansion (\ref{eq:SeriesZ}) of $Z(N)$. The formulas
\begin{align*}
\displaystyle \int_0^\infty e^{-N t} \,t^{\alpha-1} (\log t)^i \,dt &\approx \sum_{j=0}^{i} \binom{i}{j} (-1)^j \Gamma^{(i-j)}(\alpha)\,N^{-\alpha}(\log N)^j\\
\int_0^{1} t^{-z} \,t^{\alpha-1} (\log t)^i \,dt &= -\, i!\, (z-\alpha)^{-(i+1)} 
\end{align*}
from \cite[Thm 7.4]{AGV} and \cite[Ex 4.7]{W} give us the relations
$$
c_{\alpha,i} = (-1)^{i-1} \sum_{j=i}^{d} \binom{j-1}{i-1} \Gamma^{(j-i)}(\alpha) \,b_{\alpha-1,j}, \quad d_{\alpha,j} = - \,(j-1)! \,b_{\alpha-1,j}.
$$
Equation (\ref{eq:ciAlpha}) follows immediately. Finally, for all other values of $\delta$, we write
$$
\int_\Omega \big|f(\omega)\big|^{-z} |\varphi(\omega)| d\omega = \int_{\Omega_\delta} \big|f(\omega)\big|^{-z} |\varphi(\omega)| d\omega + \int_{|f(\omega)| \geq \delta} \big|f(\omega)\big|^{-z} |\varphi(\omega)| d\omega.
$$
The last integral does not have any poles, so the principal parts of the Laurent expansions of the first two integrals are the same for all $\delta$.
\end{proof}

\section{Real Log Canonical Thresholds} 
\label{sec:RLCT}

In this section, we prove fundamental properties of real log canonical thresholds (RLCTs) which will allow us to calculate these thresholds more efficiently. The learning coefficient of a statistical model is shown to be the RLCT of the ideal generated by its defining equations. 

In this section, let $\Omega \subset \R^d$ be a compact semianalytic subset and let $\varphi:\Omega \rightarrow \R$ be nearly analytic. Given functions $f_1, \ldots, f_r \in \mathcal{A}_\Omega$, let $\RLCT_\Omega (f_1, \ldots, f_r; \varphi)$ be the smallest pole and multiplicity of the zeta function \eqref{eq:IdealZetaFn}. Recall that these pairs are ordered by the rule $(\lambda_1, \theta_1) > (\lambda_2, \theta_2)$ if $\lambda_1 > \lambda_2$, or $\lambda_1=\lambda_2$ and $\theta_1 < \theta_2$. For $x \in \Omega$, we define $\RLCT_{\Omega_x} (f_1, \ldots, f_r; \varphi)$ to be the threshold for a sufficiently small neighborhood $\Omega_x$ of $x$ in $\Omega$. 

\begin{remark}
\label{rem:lct}
The (complex) log canonical threshold may be defined in a similar fashion. It is the smallest pole of the zeta function 
$$
\zeta(z) = \int_{\Omega} \Big(|f_1(\omega)|^2+\cdots+|f_r(\omega)|^2\Big)^{-z} d\omega.
$$
Note that the $f_i^2$ have been replaced by $|f_i|^2$ and the exponent $-z/2$ is changed to $-z$. Crudely, this factor of $2$ comes from the fact that $\C^d$ is a real vector space of dimension $2d$. The complex threshold is often different from the RLCT~\cite{S}. From the algebraic geometry point of view, more is known about complex log canonical thresholds than about real log canonical thresholds. Many results in this paper were motivated by their complex analogs \cite{BL, H, K, L}. 

\end{remark}

Now, we give several equivalent definitions of $\RLCT_\Omega(f_1, \ldots, f_r; \varphi)$ which are helpful in proofs of the fundamental properties.
\begin{prop} \label{thm:DefRLCT}
Given functions $f_1, \ldots, f_r \in \mathcal{A}_\Omega$ such that each $f_i \not\equiv 0$ and $\mathcal{V}_\Omega(\langle f_1, \ldots, f_r \rangle)$ is nonempty, the pairs $(\lambda, \theta)$ defined below are all equal.
\begin{enumerate}
\item[a.] The logarithmic Laplace integral 
$$
\displaystyle \log Z(N) = \log \int_\Omega \exp\Big(\mi N \sum_{i=1}^{r} f_i(\omega)^2\Big) |\varphi(\omega)| 
\,d\omega
$$ 
is asymptotically $-\frac{\lambda}{2} \log N + (\theta-1) \log \log N + O(1)$.
\item[b.] The zeta function 
$$
\displaystyle \zeta(z) = \int_\Omega \Big(\sum_{i=1}^{r} f_i(\omega)^2\Big)^{-z/2}  |\varphi(\omega)| \,d\omega
$$
has a smallest pole $\lambda$ of multiplicity $\theta$.
\item[c.] The pair $(\lambda, \theta)$ is the minimum
$$
\min_{x \in \Omega} \,\RLCT_{\Omega_x}(f_1, \ldots, f_r; \varphi).
$$
In fact, it is enough to vary $x$ over $\mathcal{V}_{\Omega}(\langle f_1, \ldots, f_r \rangle)$.
\end{enumerate}
\end{prop}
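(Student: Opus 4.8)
The plan is to prove the three characterizations equal by establishing (b) $\Leftrightarrow$ (c) directly using the machinery of Section 2, and then (a) $\Leftrightarrow$ (b) via the Mellin/Laplace correspondence already exploited in the proof of Theorem~\ref{thm:AsympExp}. The key observation that unlocks everything is that $g(\omega) := \sum_{i=1}^r f_i(\omega)^2$ is itself a real analytic function on $\Omega$, nonnegative, with $\mathcal{V}_\Omega(g) = \mathcal{V}_\Omega(\langle f_1, \ldots, f_r\rangle)$ nonempty by hypothesis. Thus the zeta function in (b) is exactly $\int_\Omega g(\omega)^{-z/2} |\varphi(\omega)|\, d\omega$, and, after the substitution $z \mapsto 2z$, this is the zeta function $\int_\Omega |g(\omega)|^{-z}|\varphi(\omega)|\,d\omega$ of the single analytic function $g$ studied in Section~2. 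So the content of (b)$\Leftrightarrow$(c) is essentially a restatement of Proposition~\ref{thm:GlobalRLCT} applied to $g$, once we reconcile the factor of $2$ in the exponent: if $\RLCT_\Omega(g; \varphi) = (\mu, \theta)$ in the sense of Section~2, then the smallest pole of $\int_\Omega g^{-z/2}|\varphi|\,d\omega$ is $\lambda = \mu$ with the same multiplicity $\theta$ (the rescaling $z \mapsto z/2$ moves a pole at $z = \mu$ to a pole at $z = \mu$... wait — at $z = 2\mu$), so in fact $\RLCT_\Omega(f_1,\ldots,f_r;\varphi)$ as defined here equals $2 \cdot \RLCT_\Omega(g;\varphi)$ in $\lambda$ and agrees in $\theta$; I would state this bookkeeping carefully once and then quote Proposition~\ref{thm:GlobalRLCT} for $g$ to get (b)$\Leftrightarrow$(c), including the refinement that it suffices to vary $x$ over $\mathcal{V}_\Omega(g)$, since at any $x$ with $g(x) \neq 0$ the local zeta function has no poles (the first paragraph of the proof of Lemma~\ref{thm:LocalRLCT}).

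For (a)$\Leftrightarrow$(b), I would first invoke Corollary~\ref{thm:AnalyticCont} applied to $g$: since $\varphi$ is nearly analytic and $g$ vanishes somewhere on $\Omega$, the zeta function $\zeta(z)$ of (b) continues analytically to $\mathbb{C}$ with a Laurent expansion of the form \eqref{eq:SeriesZeta} whose poles are positive rationals with a smallest element. Then, exactly as in the proof of Theorem~\ref{thm:AsympExp}, write $Z(N) = \int_\Omega e^{-N g(\omega)} |\varphi(\omega)|\, d\omega$, split off the region $\{g \geq 1\}$ whose contribution is $O(e^{-N})$ and hence negligible against every $N^{-\alpha}$, express the remaining integral over $\{g < 1\}$ as the Laplace transform of the state density $v(t) = \frac{d}{dt}\int_{0 < g(\omega) < t} |\varphi(\omega)|\,d\omega$, recover $v(t)$ by inverse Mellin transform of $\zeta(z)$, and apply the Laplace transform term by term. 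The leading behavior is governed by the smallest pole $\lambda$: the dominant term of the asymptotic expansion is $c_{\lambda,\theta} N^{-\lambda}(\log N)^{\theta - 1}$, so $\log Z(N) = -\lambda \log N + (\theta - 1)\log\log N + O(1)$. Since the exponent in (a) is $-N\sum f_i^2 = -N g$ (no factor $1/2$ here — but the statement in (a) reads $-\frac{\lambda}{2}\log N$, matching the convention in (b) where the exponent of $g$ is $-z/2$), I would again carry the factor-of-two bookkeeping through consistently; the cleanest route is to note $\RLCT_\Omega(f_1,\ldots,f_r;\varphi) = 2\,\RLCT_\Omega(g;\varphi)$ and apply Theorem~\ref{thm:AsympExp} verbatim to $g$.

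The main obstacle is not conceptual but notational: keeping the factor of $2$ (between the exponent $-z/2$ here versus $-z$ in Section~2, and between $-\lambda$ versus $-\frac{\lambda}{2}$ in part (a)) consistent across all three characterizations, and making sure the multiplicity $\theta$ is genuinely invariant under the rescaling $z \mapsto 2z$ (it is: a pole of order $\theta$ stays a pole of order $\theta$). A secondary point requiring care is the claim in (c) that it suffices to vary $x$ over $\mathcal{V}_\Omega(\langle f_1,\ldots,f_r\rangle)$ rather than all of $\Omega$; this follows because, by compactness and Proposition~\ref{thm:GlobalRLCT}, the global RLCT is the minimum over a finite subcover of local RLCTs, and any point $x \notin \mathcal{V}_\Omega(g)$ contributes $\RLCT_{\Omega_x}(f_1,\ldots,f_r;\varphi) = (\infty, \text{undefined})$, which never attains the minimum since $\mathcal{V}_\Omega(g)$ is nonempty and contributes a finite pole. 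Finally, one should remark — as the paragraph after this proposition presumably does — that all of this is manifestly independent of the choice of generators $f_1, \ldots, f_r$, since $g = \sum f_i^2$ and $\langle f_1, \ldots, f_r\rangle$ determine each other's vanishing locus and the local structure only depends on the ideal; a clean statement of generator-independence is most naturally deferred to a separate lemma.
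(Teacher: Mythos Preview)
Your proposal is correct and follows essentially the same route as the paper: (b)$\Leftrightarrow$(c) is Proposition~\ref{thm:GlobalRLCT} applied to $g=\sum f_i^2$, (a)$\Leftrightarrow$(b) is Theorem~\ref{thm:AsympExp} applied to the same $g$, and the restriction to $x\in\mathcal{V}_\Omega(\langle f_1,\ldots,f_r\rangle)$ holds because the local RLCT is $\infty$ off the zero locus. The paper's proof is a terse three-line citation of exactly these results; your added care with the factor-of-two bookkeeping between the $-z$ and $-z/2$ conventions is correct (and the paper leaves it entirely implicit), while your closing remark on generator-independence is handled separately in Proposition~\ref{thm:SameIdeal}.
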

\begin{proof} Item (b) is the original definition of the RLCT. The equivalence of (a) and (b) follows from Theorem \ref{thm:AsympExp}, and that of (b) and (c) from Proposition \ref{thm:GlobalRLCT}. The last statement of (c) follows from the fact that the RLCT is $\infty$ for points $x \notin \mathcal{V}_{\Omega}(\langle f_1, \ldots, f_r \rangle)$. See also \cite[Thm 7.1]{W}.
\end{proof}

Our first property describes the effect of the boundary on the RLCT. 

\begin{prop} \label{thm:EffectOfBoundary} Let $x$ be a boundary point of $\Omega \subset \R^d$. Then, for every~neighborhood $W$ of $x$ in $\R^d$,
$$
\RLCT_{W} (f;\varphi) \leq \RLCT_{\Omega_x}(f; \varphi).
$$
\end{prop}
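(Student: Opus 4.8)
The plan is to derive the inequality from the general principle that \emph{enlarging the integration domain can only lower the real log canonical threshold}, applied to the inclusion of a small semianalytic neighborhood of $x$ in $\Omega$ into a small Euclidean ball around $x$. Any sufficiently small ball $B$ about $x$ represents $\Omega_x$ via $\Omega_x=\Omega\cap B$, and after shrinking we may take $B\subseteq W$; since $x$ is an interior point of $W$, Proposition \ref{thm:GlobalRLCT} applied to $W$ gives $\RLCT_W(f;\varphi)\le \RLCT_{W_x}(f;\varphi)=\RLCT_B(f;\varphi)$. Hence it suffices to prove
$$
\RLCT_{B}(f;\varphi)\,\le\,\RLCT_{\Omega\cap B}(f;\varphi).
$$
If $f(x)\neq 0$ then for $B$ small both sides are infinite and there is nothing to prove, so I assume $f(x)=0$ and, by translation, $x=0$.

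First I would invoke Corollary \ref{thm:SimultRes} to fix a single resolution of singularities $(M,W_0,\rho)$ desingularizing $f$, the analytic factor of $\varphi$, and those of $g_1,\ldots,g_l$ vanishing at the origin, and cover the compact fiber $\rho^{-1}(0)$ by finitely many charts $M_y$ as in Theorem \ref{thm:ResOfSing}(iii). Shrinking $B\subseteq W_0$ so that $\rho^{-1}(B)\subseteq\bigcup_y M_y$ and all local thresholds have stabilized, I would reproduce the chart analysis in the proof of Lemma \ref{thm:LocalRLCT}: once the $g_i$ are monomialized, $\rho^{-1}(\Omega\cap B)\cap M_y$ is a union of closed coordinate-orthant neighborhoods of $y$, while $\rho^{-1}(B)\cap M_y$ is a full neighborhood of $y$ and so contains \emph{all} $2^d$ of them; and on each such orthant the integrand has the shape $\mu^{-\kappa z+\tau}$ times a smooth function positive at $y$, where $\kappa$ comes from $f\circ\rho$ and $\tau$ from $|\rho'|$ together with $\varphi\circ\rho$. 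The crucial observation is that $\kappa$ and $\tau$ depend only on the chart, not on the chosen orthant, so by Proposition \ref{thm:MonomialRes} every orthant neighborhood at $y$ contributes the same pole-pair
$$
(\lambda_y,\theta_y)\,=\,\Big(\,\min_{1\le j\le d}\tfrac{\tau_j+1}{\kappa_j}\,,\ \nmin_{1\le j\le d}\tfrac{\tau_j+1}{\kappa_j}\,\Big).
$$

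Summing the chart-and-orthant contributions through a partition of unity, and using exactly as in Lemma \ref{thm:LocalRLCT} that the leading Laurent coefficient at the smallest pole of each such monomial integral is positive (so that no cancellation occurs when the poles are combined), I would obtain
$$
\RLCT_{\Omega\cap B}(f;\varphi)=\min_{y\in Y_\Omega}(\lambda_y,\theta_y),\qquad
\RLCT_{B}(f;\varphi)=\min_{y\in Y_B}(\lambda_y,\theta_y),
$$
where $Y_\Omega$ indexes the charts in which $\rho^{-1}(\Omega\cap B)$ contains a positive-measure orthant neighborhood of $y$, and $Y_B$ indexes all charts of the cover. Since $Y_\Omega\subseteq Y_B$, the minimum defining $\RLCT_B(f;\varphi)$ is taken over the larger index set, which yields $\RLCT_{B}(f;\varphi)\le\RLCT_{\Omega\cap B}(f;\varphi)$; and if $Y_\Omega$ is empty then $\RLCT_{\Omega_x}(f;\varphi)=\infty$ and the claim is again trivial.

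The step I expect to require the most care is the bookkeeping that makes ``enlarging the domain'' rigorous: one must verify that a single resolution $\rho$, one finite chart cover, and one collection of monomial exponents $(\kappa,\tau)$ per chart simultaneously compute both thresholds, so that passing from $\Omega\cap B$ to $B$ does nothing but adjoin orthant neighborhoods that already carry the pole-pairs $(\lambda_y,\theta_y)$. The orthant-independence of $(\lambda_y,\theta_y)$, which is what ultimately forces the inequality, is itself immediate from Proposition \ref{thm:MonomialRes}, since that formula sees only the exponent vectors.
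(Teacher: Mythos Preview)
Your argument is correct, but it takes a much longer route than the paper. The paper's proof is two lines: for a sufficiently small $\Omega_x$ one has $\Omega_x\subset W$, hence the Laplace integrals satisfy $Z_{\Omega_x}(N)\le Z_W(N)$ (the integrand $e^{-N|f|}|\varphi|$ is nonnegative), and the asymptotic characterization of the RLCT in Proposition~\ref{thm:DefRLCT}(a) immediately converts this into $\RLCT_W(f;\varphi)\le\RLCT_{\Omega_x}(f;\varphi)$.

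You instead go back through resolution of singularities, comparing chart by chart the orthant contributions for $B$ versus $\Omega\cap B$, and then use that the exponent data $(\kappa,\tau)$ are orthant-independent so that enlarging the family of orthants can only lower the minimum pole-pair. This is valid and has the virtue of making the domain-monotonicity of the RLCT explicit at the level of local monomial integrals; it also shows directly why boundary constraints can only raise the threshold, orthant by orthant. But all of this machinery is re-deriving what the single inequality $Z_{\Omega_x}(N)\le Z_W(N)$ gives for free. The paper's approach needs no resolution, no chart bookkeeping, no positivity-of-leading-coefficient argument, and no case split on whether $f(x)=0$; it simply exploits the equivalence between the zeta-function and Laplace-integral definitions of the RLCT already established in Proposition~\ref{thm:DefRLCT}.
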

\begin{proof}
For a sufficiently small neighborhood $\Omega_x$ of $x$ in $\Omega$, we have $\Omega_x \subset W$, so the corresponding Laplace integrals satisfy $Z_{\Omega_x}(N) \leq Z_W(N)$. By Proposition \ref{thm:DefRLCT}, this gives the opposite inequality on the RLCTs.
\end{proof}

If the function whose $\RLCT$ we are finding is complicated, we may replace it with a simpler function that bounds it. Given $f, g \in \mathcal{A}_\Omega$, we say that $f$ and $g$ are \emph{equivalent} in $\Omega$ if $c_1f \leq g \leq c_2f$ in $\Omega$ for some $c_1,c_2>0$.

\begin{prop}[\!\!{\cite[Remark 7.2]{W}}] \label{thm:AsympUpBound}
Given $f, g \in \mathcal{A}_\Omega$, suppose that $0 \leq cf \leq g$ in $\Omega$ for some $c>0$. Then, $\RLCT_{\Omega} (f;\varphi) \leq \RLCT_{\Omega} (g; \varphi)$.
\end{prop}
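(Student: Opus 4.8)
The plan is to reduce the claimed inequality of thresholds to an elementary monotonicity comparison of Laplace integrals, using the asymptotic expansion supplied by Theorem~\ref{thm:AsympExp} together with Corollary~\ref{thm:AnalyticCont}. First I would record the immediate consequences of the hypothesis: since $c>0$ and $0\le cf\le g$, both $f$ and $g$ are nonnegative on $\Omega$, so $|f|=f$ and $|g|=g$, and $\mathcal{V}_\Omega(g)\subseteq\mathcal{V}_\Omega(f)$. If $\RLCT_\Omega(f;\varphi)$ has first coordinate $\lambda=\infty$ (for instance if $f$ has no zero on $\Omega$), then the Laplace integral of $f$ decays faster than every power of $N$, so by the comparison below the same is true for $g$; hence its threshold also has $\lambda=\infty$ and the inequality holds by the ordering convention. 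Thus I may assume $f$ vanishes somewhere on $\Omega$, so that Theorem~\ref{thm:AsympExp} applies to it and $\lambda_f<\infty$.

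The core of the argument is the pointwise estimate $Ng(\omega)\ge Ncf(\omega)$, hence $e^{-Ng(\omega)}\le e^{-Ncf(\omega)}$ for all $\omega$, which upon integrating against $|\varphi|$ gives $Z_g(N)\le Z_f(cN)$, where I write $Z_h(N):=\int_\Omega e^{-Nh(\omega)}|\varphi(\omega)|\,d\omega$. By Theorem~\ref{thm:AsympExp} applied to $f$ and to $g$ (here $|f|=f$ and $|g|=g$), one has $\log Z_h(N)=-\lambda_h\log N+(\theta_h-1)\log\log N+O(1)$ with $(\lambda_h,\theta_h)=\RLCT_\Omega(h;\varphi)$ for $h=f,g$; and since rescaling $N\mapsto cN$ alters $\log(cN)$ by $O(1)$ and $\log\log(cN)$ by $o(1)$, the same asymptotic formula holds for $\log Z_f(cN)$. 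Taking logarithms in $Z_g(N)\le Z_f(cN)$ and matching the dominant $\log N$-term and then the $\log\log N$-term (recall $\log N\gg\log\log N\gg1$) forces $\lambda_g\ge\lambda_f$, with $\theta_g\le\theta_f$ in the tie case $\lambda_g=\lambda_f$; by the definition of the order on pairs this says precisely $\RLCT_\Omega(f;\varphi)\le\RLCT_\Omega(g;\varphi)$.

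I do not expect a genuine obstacle here: all the analytic substance --- existence of the expansion with positive rational exponents, and the identification of its leading exponent and multiplicity with the RLCT --- is already packaged in Theorem~\ref{thm:AsympExp} and Corollary~\ref{thm:AnalyticCont}, so what is left is bookkeeping. The only points that need mild care are the harmless rescaling $N\mapsto cN$ and the degenerate cases in which a threshold is infinite. A direct argument at the level of zeta functions is also possible, starting from $c^{-z}\zeta_f(z)\le\zeta_g(z)$ for real $z\le0$, but extracting the comparison of smallest poles from this real-axis inequality would require tracking the meromorphic continuations and seems less transparent than the Laplace route.
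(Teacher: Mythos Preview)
Your proof is correct and follows the route the paper itself takes elsewhere. The paper does not actually supply a proof of this proposition (it is stated with a citation to Watanabe), but in the immediately preceding Proposition~\ref{thm:EffectOfBoundary} the paper argues in exactly the same way: compare Laplace integrals pointwise, then invoke Proposition~\ref{thm:DefRLCT}(a) to read off the inequality of RLCT pairs. Your appeal to Theorem~\ref{thm:AsympExp} is really an appeal to that same equivalence packaged in Proposition~\ref{thm:DefRLCT}(a), so citing the latter would streamline the write-up. One small omission: after reducing to $\lambda_f<\infty$ you should also dispose of the case where $g$ has no zero on $\Omega$ (then $\lambda_g=\infty$ and there is nothing to prove) before applying the asymptotic expansion to $g$; this is trivial but worth a sentence.
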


\begin{cor} \label{thm:AsympBound}
If $f, g$ are equivalent in $\Omega$, then $\RLCT_{\Omega} (f; \varphi) = \RLCT_{\Omega} (g; \varphi)$.
\end{cor}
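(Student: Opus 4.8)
The plan is to obtain this as an immediate two-sided application of Proposition \ref{thm:AsympUpBound}. First I would unwind the definition of equivalence: there exist constants $c_1, c_2 > 0$ with $c_1 f \le g \le c_2 f$ throughout $\Omega$. I would then record the elementary observation that this forces $f \ge 0$ on $\Omega$ — comparing $c_1 f \le c_2 f$ at a point where $f$ is negative versus one where it is positive shows $c_1 \le c_2$, and then $0 \le (c_2 - c_1)f$ gives $f \ge 0$ unless $c_1 = c_2$, in which case $g = c_1 f$ and the claim is trivial by scaling — and hence also $g \ge 0$. This nonnegativity is exactly what is needed to legitimately invoke Proposition \ref{thm:AsympUpBound}.

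Next I would apply Proposition \ref{thm:AsympUpBound} to the pair $(f,g)$ with $c = c_1$: from $0 \le c_1 f \le g$ it yields $\RLCT_\Omega(f;\varphi) \le \RLCT_\Omega(g;\varphi)$. Then I would note that the equivalence relation is symmetric, since dividing the chain $c_1 f \le g \le c_2 f$ through gives $\tfrac{1}{c_2} g \le f \le \tfrac{1}{c_1} g$, and apply Proposition \ref{thm:AsympUpBound} once more, now to the pair $(g,f)$ with $c = 1/c_2$: from $0 \le \tfrac{1}{c_2} g \le f$ it yields $\RLCT_\Omega(g;\varphi) \le \RLCT_\Omega(f;\varphi)$. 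Combining the two inequalities, with respect to the ordering on pairs $(\lambda,\theta)$ fixed earlier, gives $\RLCT_\Omega(f;\varphi) = \RLCT_\Omega(g;\varphi)$.

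There is essentially no obstacle here; the only point meriting a line of care is confirming that the hypothesis $0 \le cf$ of Proposition \ref{thm:AsympUpBound} is genuinely satisfied, i.e. that equivalence of $f$ and $g$ entails their nonnegativity and that the constants rearrange so the proposition can be applied in both directions. The rest is a formal sandwich argument.
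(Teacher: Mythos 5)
Your proof is correct and is exactly the intended one — the corollary carries no printed proof in the paper precisely because it follows at once by applying Proposition~\ref{thm:AsympUpBound} in both directions, as you do, after rewriting the chain $c_1 f \le g \le c_2 f$ as $\tfrac{1}{c_2} g \le f \le \tfrac{1}{c_1} g$. Your preliminary sign discussion is a touch more elaborate than necessary (and slightly imprecise — $(c_2-c_1)f\ge 0$ only forces $f\ge 0$ when $c_2>c_1$, otherwise it forces $f\le 0$; in every case one may pass to $|f|,|g|$ since $\RLCT$ depends only on absolute values, and in the paper's applications $f,g$ are manifestly nonnegative anyway), but this does not affect the validity of the argument.
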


$\RLCT_{\Omega}(f_1^2+\cdots+f_r^2; \varphi)=(\lambda, \theta)$ implies $\RLCT_{\Omega}(f_1, \ldots, f_r; \varphi) = (2\lambda, \theta)$. From this, it seems that we should restrict ourselves to RLCTs of single and~not multiple functions. However, as the next proposition shows, multiple functions are important because they allow us to work with ideals for which different generating sets can be chosen. This gives us freedom to switch between single and multiple functions in powerful ways. For instance, special cases of this proposition such as Lemmas 3 and 4 of \cite{AW} have been used to simplify computations.

\begin{prop} \label{thm:SameIdeal}
If two sets $\{ f_1, \ldots, f_r\}$ and $ \{ g_1, \ldots, g_s\}$ of functions generate the same ideal $I \subset \mathcal{A}_\Omega$, then $$\RLCT_\Omega(f_1,\ldots,f_r; \varphi) = \RLCT_\Omega(g_1, \ldots, g_s; \varphi).$$ Define this pair to be $\RLCT_\Omega (I; \varphi)$.
\end{prop}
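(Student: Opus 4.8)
The plan is to reduce the statement to the inequality $\RLCT_\Omega(f_1,\dots,f_r;\varphi) \ge \RLCT_\Omega(g_1,\dots,g_s;\varphi)$ whenever each $g_j$ lies in the ideal $I = \langle f_1,\dots,f_r\rangle$; the reverse inequality then follows by symmetry, since the two sets generate the same ideal. So suppose $g_j = \sum_{i=1}^r a_{ij} f_i$ with $a_{ij} \in \mathcal{A}_\Omega$. On the compact set $\Omega$ the analytic functions $a_{ij}$ are bounded, so by Cauchy--Schwarz there is a constant $c>0$ with
$$
\sum_{j=1}^s g_j(\omega)^2 \;=\; \sum_{j=1}^s \Big(\sum_{i=1}^r a_{ij}(\omega) f_i(\omega)\Big)^2 \;\le\; c \sum_{i=1}^r f_i(\omega)^2
$$
for all $\omega \in \Omega$. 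Writing $F = \sum_i f_i^2$ and $G = \sum_j g_j^2$, this says $0 \le \tfrac1c\, G \le F$ on $\Omega$, so Proposition~\ref{thm:AsympUpBound} gives $\RLCT_\Omega(G;\varphi) \le \RLCT_\Omega(F;\varphi)$.

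The remaining bookkeeping step is to pass from thresholds of the single functions $G$ and $F$ back to the ``multiple-function'' thresholds. By definition (Proposition~\ref{thm:DefRLCT}(b)), $\RLCT_\Omega(f_1,\dots,f_r;\varphi)$ is the smallest pole and multiplicity of $\int_\Omega F(\omega)^{-z/2}|\varphi(\omega)|\,d\omega$, whereas $\RLCT_\Omega(F;\varphi)$ is read off from $\int_\Omega |F(\omega)|^{-z}|\varphi(\omega)|\,d\omega$; since $F \ge 0$ these integrands differ only by the substitution $z \mapsto z/2$, so if $\RLCT_\Omega(F;\varphi) = (\lambda,\theta)$ then $\RLCT_\Omega(f_1,\dots,f_r;\varphi) = (2\lambda,\theta)$, and likewise for $G$ and the $g_j$. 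Combining this with the displayed inequality yields $\RLCT_\Omega(g_1,\dots,g_s;\varphi) \le \RLCT_\Omega(f_1,\dots,f_r;\varphi)$, and the symmetric argument (expressing each $f_i$ as an $\mathcal{A}_\Omega$-combination of the $g_j$, which is possible since the ideals coincide) gives equality. Defining the common value to be $\RLCT_\Omega(I;\varphi)$ is then legitimate.

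The only genuinely delicate point is the very first one: I am implicitly using that membership $g_j \in I$ means $g_j = \sum_i a_{ij} f_i$ with the \emph{same} coefficient functions $a_{ij}$ valid on all of $\Omega$ — i.e. that $I$ is an ideal of the global ring $\mathcal{A}_\Omega$ of functions analytic on the compact set $\Omega$, not merely that $g_j$ is locally in the ideal generated by the $f_i$ at each point. This is exactly the hypothesis of the proposition (the two sets ``generate the same ideal $I \subset \mathcal{A}_\Omega$''), so no sheaf-theoretic gluing is needed; but it is worth flagging, because if one only knew local membership the constant $c$ from Cauchy--Schwarz would a priori be local and one would have to invoke compactness together with a partition-of-unity argument as in Proposition~\ref{thm:GlobalRLCT} to globalize it. With the global hypothesis in force, the boundedness of the $a_{ij}$ on the compact $\Omega$ is immediate and the proof is as above.
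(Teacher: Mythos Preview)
Your proof is correct and follows essentially the same route as the paper: express each $g_j$ as an $\mathcal{A}_\Omega$-combination of the $f_i$, apply Cauchy--Schwarz and compactness to get $\sum_j g_j^2 \le c\sum_i f_i^2$, invoke Proposition~\ref{thm:AsympUpBound}, and finish by symmetry. Your explicit bookkeeping linking $\RLCT_\Omega(F;\varphi)$ to $\RLCT_\Omega(f_1,\dots,f_r;\varphi)$ via the $z\mapsto z/2$ substitution is a step the paper leaves implicit, and your discussion of the global-versus-local membership issue is a fair clarification but not needed under the stated hypothesis.
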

\begin{proof}
Each $g_j$ can be written as a combination $h_1f_1 + \cdots + h_rf_r$ of the $f_i$ where the $h_i$ are real analytic over $\Omega$. By the Cauchy-Schwarz inequality,
$$
g_j^2 \leq \big(h_1^2+\cdots+h_r^2)\big(f_1^2 + \cdots + f_r^2\big).
$$
Because $\Omega$ is compact, the $h_i$ are bounded. Thus, summing over all the $g_j$, there is some constant $c > 0$ such that, $$
\sum_{j=1}^{s} g_j^2 \leq c \sum_{i=1}^{r} f_i^2.
$$
By Proposition \ref{thm:AsympUpBound}, $\RLCT_\Omega(g_1, \ldots, g_r; \varphi) \leq \RLCT_\Omega(f_1, \ldots, f_r; \varphi)$ and by symmetry, the reverse is also true, so we are done. See also \cite[\S 2.6]{S}.
\end{proof}

For the next result, let $f_1, \ldots, f_r \in \mathcal{A}_X$ and $g_1, \ldots, g_s \in \mathcal{A}_Y$ where $X \subset \R^m$ and $Y \subset \R^n$ are compact semianalytic subsets. This occurs, for instance, when the $f_i$ and $g_j$ are polynomials with disjoint sets of indeterminates $\{x_1, \ldots, x_m\}$ and $\{y_1, \ldots, y_n\}$. Let $\varphi_X:X \rightarrow \R$ and $\varphi_Y:Y\rightarrow \R$ be nearly analytic. Define $(\lambda_X, \theta_X) = \RLCT_X(f_1, \ldots, f_r; \varphi_X)$ and $(\lambda_Y, \theta_Y) = \RLCT_Y(g_1, \ldots, g_s; \varphi_Y)$.

By composing with projections $X{\times}Y\rightarrow X$ and $X{\times}Y\rightarrow Y$, we may regard the $f_i$ and $g_j$ as functions analytic over $X{\times}Y$. Let $I_X$ and $I_Y$ be ideals in~$\mathcal{A}_{X{\times}Y}$ generated by the $f_i$ and $g_j$ respectively. Recall that the sum $I_X+I_Y$ is generated by all the $f_i$ and $g_j$ while the product $I_XI_Y$ is generated by $f_ig_j$ for all $i,j$.

\begin{prop}
\label{thm:DisjointVars}
The RLCTs for the sum and product of ideals $I_X$ and $I_Y$ are
\begin{eqnarray*}
\RLCT_{X{\times}Y}(I_X+I_Y; \varphi_X\varphi_Y) &=& (\lambda_X+\lambda_Y,\,\, \theta_X+\theta_Y-1), \\
\RLCT_{X{\times}Y}(I_XI_Y; \varphi_X\varphi_Y) &=& \left\{ 
\begin{array}{ll}
(\lambda_X, \,\,\theta_X) & \mbox{if } \lambda_X < \lambda_Y, \\
(\lambda_Y, \,\,\theta_Y) & \mbox{if } \lambda_X > \lambda_Y, \\
(\lambda_X, \,\,\theta_X+\theta_Y) & \mbox{if } \lambda_X = \lambda_Y.
\end{array}
\right. 
\end{eqnarray*}
\end{prop}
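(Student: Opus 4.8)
The point is that $f_1,\dots,f_r$ and $g_1,\dots,g_s$ involve disjoint coordinates, so any integral over $X\times Y$ whose integrand is a product of a function of $x$ and a function of $y$ splits, by Fubini, into the product of an integral over $X$ and an integral over $Y$. For the sum ideal this product structure appears on the Laplace side, and for the product ideal on the zeta-function side (the Laplace integral for $I_XI_Y$ does not factor, nor does the zeta function for $I_X+I_Y$), so I would use characterization~(a) of Proposition~\ref{thm:DefRLCT} in the first case and characterization~(b) in the second. Throughout I use Proposition~\ref{thm:SameIdeal} to pass between an ideal and a generating set: $I_X+I_Y$ is generated by $f_1,\dots,f_r,g_1,\dots,g_s$ and $I_XI_Y$ by the $f_ig_j$.

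For $I_X+I_Y$, write $F(x)=\sum_i f_i(x)^2$ and $G(y)=\sum_j g_j(y)^2$. Then
\[
Z(N)=\int_{X\times Y}\exp\!\big(-N(F(x)+G(y))\big)\,|\varphi_X(x)|\,|\varphi_Y(y)|\,dx\,dy=Z_X(N)\,Z_Y(N),
\]
with $Z_X(N)=\int_X e^{-NF}|\varphi_X|\,dx$ and $Z_Y(N)=\int_Y e^{-NG}|\varphi_Y|\,dy$. Since $\mathcal{V}_{X\times Y}(I_X+I_Y)=\mathcal{V}_X(I_X)\times\mathcal{V}_Y(I_Y)$ is nonempty, Proposition~\ref{thm:DefRLCT}(a) applies to each factor and yields $\log Z_X(N)=-\tfrac{\lambda_X}{2}\log N+(\theta_X-1)\log\log N+O(1)$, and similarly for $Z_Y$. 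Adding the two gives $\log Z(N)=-\tfrac{\lambda_X+\lambda_Y}{2}\log N+(\theta_X+\theta_Y-2)\log\log N+O(1)$; because $\log N$, $\log\log N$ and $1$ have strictly different growth, comparing with Proposition~\ref{thm:DefRLCT}(a) for $I_X+I_Y$ forces $\lambda=\lambda_X+\lambda_Y$ and $\theta-1=\theta_X+\theta_Y-2$, which is the first formula.

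For $I_XI_Y$, the exponent base is $\sum_{i,j}f_i^2g_j^2=F(x)\,G(y)$, so Fubini now factors the zeta function:
\[
\zeta_{X\times Y}(z)=\int_{X\times Y}F(x)^{-z/2}G(y)^{-z/2}\,|\varphi_X(x)|\,|\varphi_Y(y)|\,dx\,dy=\zeta_X(z)\,\zeta_Y(z),
\]
where $\zeta_X,\zeta_Y$ are the zeta functions defining $(\lambda_X,\theta_X)$ and $(\lambda_Y,\theta_Y)$. Each of $\zeta_X,\zeta_Y$ continues meromorphically to all of $\C$ with poles among the positive rationals (Section~\ref{sec:ResSing}; cf.\ Corollary~\ref{thm:AnalyticCont}), hence so does their product, whose poles lie in the union of those of $\zeta_X$ and $\zeta_Y$. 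If $\lambda_X<\lambda_Y$, then $\zeta_Y$ is holomorphic at the real point $z=\lambda_X$ with $\zeta_Y(\lambda_X)=\int_Y G^{-\lambda_X/2}|\varphi_Y|\,dy>0$ (a convergent integral whose integrand is nonnegative and positive off a set of measure zero), while $\zeta_X$ has a pole of order exactly $\theta_X$ at $\lambda_X$; as no pole of either factor lies below $\lambda_X$, the smallest pole of $\zeta_{X\times Y}$ is $\lambda_X$ with multiplicity $\theta_X$. The case $\lambda_X>\lambda_Y$ is symmetric. If $\lambda_X=\lambda_Y=:\lambda$, then writing $\zeta_X(z)=(z-\lambda)^{-\theta_X}u(z)$ and $\zeta_Y(z)=(z-\lambda)^{-\theta_Y}v(z)$ with $u(\lambda),v(\lambda)\neq 0$ shows the product has a pole of order $\theta_X+\theta_Y$ at $\lambda$ and no smaller pole, giving the last case.

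The main thing to get right is the pole-order arithmetic for the product: one needs genuine meromorphy of $\zeta_X$ and $\zeta_Y$ on $\C$ — not merely the existence of a smallest pole — in order to multiply their local Laurent expansions, together with the nonvanishing $\zeta_Y(\lambda_X)\neq 0$. Both follow from the resolution-of-singularities analysis of Section~\ref{sec:ResSing} and from positivity of the integrand at real arguments below the first pole. The remaining ingredients, Fubini and the comparison of asymptotic scales, are routine.
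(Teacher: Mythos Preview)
Your argument is correct and essentially identical to the paper's: for $I_X+I_Y$ you factor the Laplace integral and add the asymptotics via Proposition~\ref{thm:DefRLCT}(a), and for $I_XI_Y$ you factor the zeta function $\zeta(z)=\zeta_X(z)\zeta_Y(z)$ and read off the pole structure. Your explicit justification that $\zeta_Y(\lambda_X)>0$ when $\lambda_X<\lambda_Y$ is a nice detail the paper leaves implicit.
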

\begin{proof} Define $f(x) = f_1^2 +\cdots +f_r^2$ and $g(y)=g_1^2+\cdots+g_s^2$, and let $Z_X(N)$ and $Z_Y(N)$ be the corresponding Laplace integrals. By Proposition \ref{thm:DefRLCT},
\begin{eqnarray*}
\log Z_X(N) &=&\textstyle  -\frac{1}{2} \lambda_X\log N + (\theta_X-1) \log \log N + O(1)\\
\log Z_Y(N) &=&\textstyle -\frac{1}{2} \lambda_Y\log N + (\theta_Y-1) \log \log N + O(1)
\end{eqnarray*}
asymptotically. If $(\lambda, \theta) = \RLCT_{X{\times}Y}(I_X+I_Y; \varphi_X\varphi_Y)$, then
\begin{eqnarray*}
&&\!\!\!\!\!\!\!\!\!\!\!\!\!\!\! -\textstyle \frac{1}{2}\lambda \log N + (\theta-1) \log \log N + O(1) \\
&&= \log \textstyle\, \int_{X{\times}Y} e^{- N f(x) -N g(y)} |\varphi_X||\varphi_Y|\,dx\,dy \\
&&=\textstyle \log \big(\int_{X} e^{-N f(x)} |\varphi_X| \,dx\big) \big(\int_{Y} e^{-N g(y)} |\varphi_Y| \,dy\big)  \\
&&= \log Z_X(N) + \log Z_Y(N) \\
&&= \textstyle -\frac{1}{2}(\lambda_X+\lambda_Y) \log N + (\theta_X+\theta_Y-2) \log \log N + O(1) 
\end{eqnarray*}
and the first result follows. For the second result, note that 
\begin{eqnarray*}
f(x)g(y) &=& f_1^2g_1^2 + f_1^2g_2^2 + \cdots + f_r^2g_s^2.
\end{eqnarray*}
Let $\zeta_X(z)$ and $\zeta_Y(z)$ be the zeta functions corresponding to $f(x)$ and $g(y)$. By Proposition \ref{thm:DefRLCT}, $(\lambda_X, \theta_X)$ and $(\lambda_Y, \theta_Y)$ are the smallest poles of $\zeta_X(z)$ and $\zeta_Y(z)$ while $\RLCT_{X{\times}Y} (I_XI_Y; \varphi_X\varphi_Y)$ is the smallest pole of
\begin{eqnarray*}
  \zeta(z) &=& \textstyle\int_{X{\times}Y} \big(f(x)g(y)\big)^{-z/2} |\varphi_X||\varphi_Y| \,dx\,dy \\ &=&  \textstyle\big(\int_{X} f(x)^{-z/2} |\varphi_X| \,dx\big)\big(\int_{Y} g(y)^{-z/2} |\varphi_Y| \,dy\big) \,\,\,\,=\,\,\,\, \zeta_X(z) \zeta_Y(z). 
\end{eqnarray*}
The second result then follows from the relationship between the poles.
\end{proof}

Our last property tells us the behavior of RLCTs under a change of variables. Consider an ideal $I \subset \mathcal{A}_W$ where $W$ is a neighborhood of the origin. Let $M$ be a real analytic manifold and $\rho:M \rightarrow W$ be a proper real analytic map. Then, the \emph{pullback} $\rho^* I$ is locally the ideal of real analytic functions on $M$ that is generated by $f \circ \rho$ for all $f \in I$ (also called the inverse image ideal sheaf \cite[\S 3.3]{K2}). If $\rho$ is an isomorphism between $M \setminus \mathcal{V}(\rho^*I)$ and $W\setminus \mathcal{V}(I)$, we say that $\rho$ is a \emph{change of variables away from} $\mathcal{V}(I)$. Let $|\rho'|$ denote the Jacobian determinant of $\rho$. We call $(\rho^*I; (\varphi \circ \rho)  |\rho'|)$ the \emph{pullback pair}.
\begin{prop}
\label{thm:ChangeOfVar}
Let  $W$ be a neighborhood of the origin and $I \subset \mathcal{A}_W$ a finitely generated ideal. If $M$ is a real analytic manifold, $\rho:M \rightarrow W$ is a change of variables away from $\mathcal{V}(I)$ and $\mathcal{M} = \rho^{-1}(\Omega \cap W)$, then
$$\RLCT_{\Omega_0}(I;\varphi) =  \min_{x \in \rho^{-1}(0)} \RLCT_{\mathcal{M}_x}(\rho^*I; (\varphi \circ \rho) |\rho'|).$$
\end{prop}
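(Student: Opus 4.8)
The plan is to reduce the statement to the change-of-variables formula for integrals together with the covering technique used in Lemma~\ref{thm:LocalRLCT} and Proposition~\ref{thm:GlobalRLCT}. Fix generators $f_1,\dots,f_r$ of $I$ and take $\Omega_0$ small enough that $\Omega_0\subseteq\Omega\cap W$; then $f_1\circ\rho,\dots,f_r\circ\rho$ generate $\rho^*I$ (if $f=\sum h_if_i$ with $h_i\in\mathcal{A}_W$ then $f\circ\rho=\sum(h_i\circ\rho)(f_i\circ\rho)$), so by Proposition~\ref{thm:SameIdeal} every real log canonical threshold of $\rho^*I$ occurring below may be computed from these generators, equivalently from the single function $\sum_i(f_i\circ\rho)^2$. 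One first checks the pullback amplitude is admissible: with $\varphi=\varphi_a\varphi_s$, the function $(\varphi_a\circ\rho)\,\rho'$ is real analytic while $\varphi_s\circ\rho$ is positive and smooth, so $|(\varphi\circ\rho)\,\rho'|$ is the absolute value of a nearly analytic function and the right-hand side is well defined. The essential point is that for real $z\le 0$ the integrand $\big(\sum_if_i^2\big)^{-z/2}$ is bounded and continuous, while $\rho$ is a proper map restricting to a diffeomorphism off the measure-zero sets $\mathcal{V}(I)$ and $\mathcal{V}(\rho^*I)$; hence, writing $\psi:=(\varphi\circ\rho)|\rho'|$, the ordinary change-of-variables formula gives $\int_{\Omega_0}\big(\sum_if_i^2\big)^{-z/2}|\varphi|\,d\omega=\int_{\rho^{-1}(\Omega_0)}\big(\sum_i(f_i\circ\rho)^2\big)^{-z/2}|\psi|\,d\mu$, and likewise for the associated Laplace integrals. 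By Corollary~\ref{thm:AnalyticCont} both sides continue analytically and the identity persists, so $\RLCT_{\Omega_0}(I;\varphi)=\RLCT_{\rho^{-1}(\Omega_0)}(\rho^*I;\psi)$ for every neighborhood $\Omega_0$ of $0$ in $\Omega$ taken closed; here $\rho^{-1}(\Omega_0)$ is compact by properness.

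Next I would prove the two inequalities. For ``$\le$'' in the ordering of pairs: fix a sufficiently small $\Omega_0$, and for each $x\in\rho^{-1}(0)$ shrink $\mathcal{M}_x$ so that $\rho(\mathcal{M}_x)\subseteq\Omega_0$, which is possible since $0$ is interior to $\Omega_0$ in $\Omega$ and $\rho$ is continuous. Then $\mathcal{M}_x\subseteq\rho^{-1}(\Omega_0)$ and the corresponding Laplace integrals satisfy $Z_{\mathcal{M}_x}(N)\le Z_{\rho^{-1}(\Omega_0)}(N)=Z_{\Omega_0}(N)$ (integrands being nonnegative); by Proposition~\ref{thm:DefRLCT}, as in the proof of Proposition~\ref{thm:EffectOfBoundary}, this reverses to $\RLCT_{\mathcal{M}_x}(\rho^*I;\psi)\ge\RLCT_{\Omega_0}(I;\varphi)$, and taking the minimum over $x\in\rho^{-1}(0)$ gives one direction.

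For the reverse inequality I would imitate the proof of Lemma~\ref{thm:LocalRLCT}. The fiber $\rho^{-1}(0)$ is compact, so finitely many $\mathcal{M}_x$ with $x\in F\subseteq\rho^{-1}(0)$ cover it; some neighborhood $\Omega_0'\subseteq\Omega_0$ of $0$ in $\Omega$ then satisfies $\rho^{-1}(\Omega_0')\subseteq\bigcup_{x\in F}\mathcal{M}_x$, for otherwise, taking $\Omega_0'$ along a decreasing neighborhood basis of $0$, one obtains points of $\rho^{-1}(\Omega_0')$ escaping $\bigcup_{x\in F}\mathcal{M}_x$ whose limit, extracted by properness, lies in $\rho^{-1}(0)$ yet outside the open set $\bigcup_{x\in F}\mathcal{M}_x$, a contradiction. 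By the ``sufficiently small neighborhood'' convention $\RLCT_{\Omega_0'}(I;\varphi)=\RLCT_{\Omega_0}(I;\varphi)$, which by the identity above equals $\RLCT_{\rho^{-1}(\Omega_0')}(\rho^*I;\psi)$. Choosing a smooth partition of unity $\{\sigma_x\}_{x\in F}$ subordinate to $\{\mathcal{M}_x\}$ with $\sigma_x(x)>0$, I would split this zeta function into the finite sum of the integrals over $\mathcal{M}_x\cap\rho^{-1}(\Omega_0')$ weighted by $\sigma_x$; as in the proof of Proposition~\ref{thm:GlobalRLCT} the leading Laurent coefficients of the summands at their smallest poles do not cancel, so the real log canonical threshold of the sum is the minimum of those of the summands, and Lemma~\ref{thm:LocalRLCT} (applied to $\sum_i(f_i\circ\rho)^2$, which absorbs the harmless factor $\sigma_x$) identifies each summand's threshold with $\RLCT_{\mathcal{M}_x}(\rho^*I;\psi)$. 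Hence $\RLCT_{\Omega_0}(I;\varphi)=\min_{x\in F}\RLCT_{\mathcal{M}_x}(\rho^*I;\psi)\ge\min_{x\in\rho^{-1}(0)}\RLCT_{\mathcal{M}_x}(\rho^*I;\psi)$, and together with the previous paragraph this proves equality (and shows that the minimum over the fiber is attained).

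The step I expect to be the main obstacle is the one just described: passing rigorously from the compact fiber $\rho^{-1}(0)$ to a genuine neighborhood $\rho^{-1}(\Omega_0')$ --- precisely where properness of $\rho$ is indispensable --- and justifying that no cancellation of leading Laurent coefficients occurs when the partition of unity breaks the integral into a finite sum. The remaining ingredients (the change-of-variables identity for real $z\le 0$, its analytic continuation, near-analyticity of $\psi$, and reduction to a fixed generating set) are routine consequences of results already established.
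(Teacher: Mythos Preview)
Your proposal is correct and follows essentially the same route as the paper's proof: fix generators, pull back the zeta function via the change-of-variables formula, invoke properness to cover the compact fiber $\rho^{-1}(0)$ by finitely many charts, and split with a partition of unity exactly as in Lemma~\ref{thm:LocalRLCT} and Proposition~\ref{thm:GlobalRLCT}. The paper is terser---it gets equality directly from the partition-of-unity decomposition rather than arguing two inequalities separately---and does not spell out, as you do, that $(\varphi\circ\rho)|\rho'|$ is nearly analytic; your separate ``$\le$'' paragraph via Laplace integrals is a harmless redundancy, since the finite-sum argument already yields both directions at once.
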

\begin{proof}
Let $f_1, \ldots, f_r$ generate $I$ and let $f=f_1^2+\cdots+f_r^2$. Then, $\RLCT_{\Omega_0}(I;\varphi)$ is the smallest pole and multiplicity of the zeta function
$$
\zeta(z) = \int_{\Omega_0} f(\omega)^{-z/2}|\varphi(\omega)| \,d\omega
$$
where $\Omega_0 \subset W$ is a sufficiently small neighborhood of the origin in $\Omega$. Applying the change of variables $\rho$, we have
$$
\zeta(z) = \int_{\rho^{-1}(\Omega_0)} f\circ \rho(\mu)^{-z/2}|\varphi\circ \rho(\mu)| |\rho'(\mu)| \, d\mu.
$$
The proof of Lemma \ref{thm:LocalRLCT} shows that if $\Omega_0$ is sufficiently small, there are finitely many points $y \in \rho^{-1}(0)$ and a cover $\{\mathcal{M}_{y}\}$ of $\mathcal{M}=\rho^{-1}(\Omega_0)$ such that 
$$\zeta(z) = \sum_y \int_{\mathcal{M}_y}f\circ \rho(\mu)^{-z/2}|\varphi\circ \rho(\mu)| |\rho'(\mu)| \sigma_y(\mu) \,d\mu$$
where $\{\sigma_y\}$ is a partition of unity subordinate to $\{\mathcal{M}_y\}$. Furthermore, the $f_i \circ \rho$ generate the pullback $\rho^*I$ and $f\circ \rho = (f_1\circ \rho)^2 +\cdots +(f_r\circ \rho)^2$. Therefore,
$$
\RLCT_{\mathcal{M}_y}(f \circ \rho; (\varphi \circ \rho)|\rho'|\sigma_y) = \RLCT_{\mathcal{M}_y}(\rho^* I;  (\varphi \circ \rho)|\rho'|)
$$
and the result follows from the two previously displayed equations. 
\end{proof}

We are now ready to prove Theorem~\ref{thm:PolyRLCT} which was inspired by Watanabe.
\begin{proof}[Proof of Theorem~\ref{thm:PolyRLCT}]

Let $Q(\omega) = \sum_{i=1}^{k} (p_i(\omega)-q_i)^2$. The learning coefficient is the RLCT of the Kullback-Leibler distance $K(\omega)$, so it is enough to show that $\RLCT_{\Omega_x}\,K= \RLCT_{\Omega_x}\,Q$ for each $x \in \mathcal{V}(K) = \mathcal{V}(Q)$. By Corollary \ref{thm:AsympBound}, we only need to show that $K$ and $Q$ are equivalent in a sufficiently small neighborhood of $x$.
Now, the Taylor expansion $-\log t = (1-t) + \frac{1}{2}(1-t)^2+\cdots$ implies there are constants $c_1,c_2 > 0$ such that for all $t$ near $1$,
\begin{eqnarray}\label{eq:ltIneq}
c_1 (t-1)^2 \leq -\log t +t-1 \leq c_2(t-1)^2.
\end{eqnarray}
Choosing a sufficiently small $W_x$ such that $p_i(\omega)/q_i$ is near $1$, we have
$$
c_1 (\frac{p_i(\omega)}{q_i}-1)^2 \leq -\log \frac{p_i(\omega)}{q_i} +\frac{p_i(\omega)}{q_i}-1 \leq c_2(\frac{p_i(\omega)}{q_i}-1)^2
$$
for all $\omega \in W_x$. Multiplying by $q_i$, summing from $i=1$ to $k$ and observing that the $p_i$ and the $q_i$ add up to $1$, we get
$$
c_1\sum_{i=1}^{k} q_i \Big(\frac{p_i(\omega)}{q_i}-1 \Big)^2 \leq K(\omega) \leq c_2 \sum_{i=1}^{k} q_i \Big(\frac{p_i(\omega)}{q_i}-1 \Big)^2.
$$
Again, using the fact that the $q_i$ are non-zero, we have
$$
\frac{c_1}{\max_i q_i}\sum_{i=1}^k \big(p_i(\omega)-q_i \big)^2 \leq K(\omega) \leq \frac{c_2}{\min_i q_i} \sum_{i=1}^k \big (p_i(\omega)-q_i \big)^2
$$
which completes the claim. The more general statement for a real analytic $K(\omega)$ which is bounded by scalar multiples of a sum of squared functions follows from Proposition~\ref{thm:GlobalRLCT}, Corollary \ref{thm:AsympBound} and the definition of $\RLCT_\Omega(I;\varphi)$.
\end{proof}

\section{Newton Polyhedra and Nondegeneracy}
\label{sec:Newton}

Given an analytic function $f \in \mathcal{A}_0(\R^d)$, we pick local coordinates $\{w_1, \ldots, w_d\}$ in a neighborhood of the origin. This allows us to represent $f$ as a power series $\sum_{\alpha} c_\alpha \omega^\alpha$ where $\omega = (\omega_1, \ldots, \omega_d)$ and each $\alpha = (\alpha_1,\ldots,\alpha_d)\in \N^d$.  Let $[\omega^\alpha]f$ denote the coefficient $c_\alpha$ of $\omega^\alpha$ in this expansion. Define its \emph{Newton polyhedron} $\mathcal{P}(f) \subset \R^d$ to be the convex hull 
$$
\mathcal{P}(f) ={\rm conv}\,\{\alpha + \alpha': [\omega^\alpha]f \neq 0, \alpha' \in \R^d_{\geq 0} \}.
$$
A subset $\gamma \subset \mathcal{P}(f)$ is a \emph{face} if there exists $\beta \in \R^d$ such that
$$
\gamma  = \{\alpha \in \mathcal{P}(f): \langle \alpha , \beta \rangle \leq \langle \alpha' , \beta \rangle \mbox{ for all } \alpha' \in \mathcal{P}(f)\}. 
$$
where $\langle \,\,, \,\rangle$ is the standard dot product. Dually, the \emph{normal cone} at $\gamma$ is the set of all $\beta \in \R^d$ satisfying the above condition. Each $\beta$ lies in the non-negative orthant $\R^d_{\geq 0}$ because otherwise, the linear function $\langle \,\cdot\,, \beta\rangle$ does not have a minimum over the unbounded set $\mathcal{P}(f)$. As a result, the union of all the normal cones gives a partition $\mathcal{F}(f)$ of the non-negative orthant called the \emph{normal fan}. Now, given a compact subset $\gamma \subset \R^d$, define the \emph{face polynomial}
$$
f_\gamma = \sum_{\alpha \in \gamma \cap \N^d} c_\alpha \omega^\alpha.
$$
Recall that $f_\gamma$ is singular at a point $x \in \R^d$ if $\ord_x f_\gamma \geq 2$, i.e.
$$
f_\gamma(x)= \frac{\partial f_\gamma}{\partial \omega_1}(x) = \cdots =\frac{\partial f_\gamma}{\partial \omega_d}(x) =0.
$$
We say that $f$ is \emph{nondegenerate} if $f_\gamma$ is non-singular at all points in the torus $(\R^*)^d$ for all compact faces $\gamma$ of $\mathcal{P}(f)$, otherwise we say $f$ is \emph{degenerate}. Now, we define the \emph{distance} $l$ of $\mathcal{P}(f)$ to be the smallest $t \geq 0$ such that  $(t, t, \ldots, t) \in \mathcal{P}(f)$. Let the \emph{multiplicity} $\theta$ of $l$ be the codimension of the lowest-dimensional face of $\mathcal{P}(f)$ at this intersection of the diagonal with $\mathcal{P}(f)$. However, if $l=0$, we leave $\theta$ undefined. These
notions of nondegeneracy, distance and multiplicity were first coined
and studied by Varchenko \cite{V}.

We now extend the above notions to ideals. For any ideal $I \subset \mathcal{A}_0$, define
$$
\mathcal{P}(I) = {\rm conv}\,\{\alpha \in \R^d: [\omega^\alpha]f \neq 0
\mbox{ for some }f \in I\}.
$$
Related to this geometric construction is the monomial ideal
$$
\mon(I) = \langle \omega^{\alpha} : [\omega^\alpha]f \neq 0
\mbox{ for some }f \in I \rangle.
$$
Note that $I$ and $\mon(I)$ have the same Newton polyhedron, and if $I$ is generated by $f_1, \ldots, f_r$, then $\mon(I)$ is generated by monomials $\omega^\alpha$ appearing in the $f_i$. One consequence is that $\mathcal{P}(f_1^2+\cdots+f_r^2)$ is the scaled polyhedron $2\mathcal{P}(I)$. More importantly, the threshold of $I$ is bounded by that of $\mon(I)$. To prove this result, we need the
following lemma. Recall that by the
Hilbert Basis Theorem or by Dickson's Lemma \cite{E}, $\mon(I)$ is finitely generated.
\begin{lem}
\label{thm:BoundByGeneratingMonomials}
Given $f \in \mathcal{A}_0(\R^d)$, let $S$ be a finite set of
exponents $\alpha$ of monomials $\omega^\alpha$ which generate
$\mon(\langle f\rangle)$. Then, there is a constant $c>0$ such that $$|f(\omega)| \leq c \sum_{\alpha \in S} |\omega|^\alpha$$ in a sufficiently small neighborhood of the origin.
\end{lem}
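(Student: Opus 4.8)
The plan is to reduce the statement to a finite sum of monomial bounds by Taylor-expanding $f$ to high enough order and controlling the tail. Write $f = \sum_\alpha c_\alpha \omega^\alpha$ in local coordinates at the origin. By definition, $\mon(\langle f\rangle)$ is the monomial ideal generated by all $\omega^\alpha$ with $c_\alpha \neq 0$, and by Dickson's Lemma it is generated by finitely many of these, indexed by $S$. So every exponent $\beta$ with $c_\beta \neq 0$ lies in $\beta' + \N^d$ for some $\beta' \in S$; equivalently $\omega^\beta = \omega^{\beta'}\cdot \omega^{\beta-\beta'}$ with $\beta - \beta' \in \N^d$. On a small neighborhood $|\omega_j| \leq 1$ for all $j$, we then have $|\omega^\beta| \leq |\omega^{\beta'}| \leq |\omega|^{\beta'}$, so termwise each monomial of $f$ is dominated by one of the finitely many generators. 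The only issue is summing infinitely many such terms.

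To handle convergence, fix $r > 0$ small enough that $\sum_\alpha |c_\alpha| r^{|\alpha|_1}$ converges — possible since $f$ is analytic at $0$, so its Taylor series converges absolutely on a polydisc. Restrict to the neighborhood $\{|\omega_j| \leq r/2\}$ (in particular $\leq 1$, after shrinking $r$). For each $\beta$ with $c_\beta \neq 0$, pick a generator index $\alpha(\beta) \in S$ with $\beta - \alpha(\beta) \in \N^d$, and write $|c_\beta \omega^\beta| = |c_\beta|\,|\omega^{\alpha(\beta)}|\,|\omega^{\beta - \alpha(\beta)}| \leq |c_\beta|\,(r/2)^{|\beta - \alpha(\beta)|_1}\,|\omega|^{\alpha(\beta)}$. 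Grouping the terms by which $\alpha \in S$ was chosen,
$$
|f(\omega)| \;\leq\; \sum_{\beta} |c_\beta|\,(r/2)^{|\beta - \alpha(\beta)|_1}\,|\omega|^{\alpha(\beta)} \;=\; \sum_{\alpha \in S} \Big(\sum_{\beta : \alpha(\beta) = \alpha} |c_\beta|\,(r/2)^{|\beta-\alpha|_1}\Big) |\omega|^{\alpha} \;\leq\; c \sum_{\alpha \in S} |\omega|^\alpha,
$$
where $c = \max_{\alpha \in S}\sum_{\beta:\alpha(\beta)=\alpha} |c_\beta|(r/2)^{|\beta-\alpha|_1}$, and each inner sum is finite because it is dominated by $\sum_\beta |c_\beta| r^{|\beta|_1} < \infty$ (using $(r/2)^{|\beta-\alpha|_1} \le r^{|\beta|_1}$ once $r \le 1$ and $|\alpha|_1 \ge 0$, with a harmless constant absorbed). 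This is the desired bound on the neighborhood $\{|\omega_j| \le r/2\}$.

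The main obstacle is purely bookkeeping: ensuring the regrouped sums genuinely converge and that the choice of $\alpha(\beta)$ can be made consistently. There is no deep content — the one subtlety is that $|\beta - \alpha(\beta)|_1$ could be small while $|\beta|_1$ is large only if $|\alpha(\beta)|_1$ is large, but $S$ is finite so $|\alpha|_1$ is bounded over $\alpha \in S$, hence $(r/2)^{|\beta-\alpha(\beta)|_1} \le C_S\, r^{|\beta|_1}$ for a constant $C_S$ depending only on $S$; this is what makes the domination by $\sum_\beta|c_\beta|r^{|\beta|_1}$ legitimate. I would state this estimate cleanly and let the convergence of the Taylor series do the rest.
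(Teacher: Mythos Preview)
Your proof is correct and follows essentially the same approach as the paper: both factor each term $c_\beta\omega^\beta$ through one of the generating monomials $\omega^{\alpha}$ with $\alpha\in S$, and then control the resulting cofactors using absolute convergence of the Taylor series on a small polydisc. The paper phrases this more compactly by writing $f(\omega)=\sum_{i}\omega^{\alpha^{(i)}}g_i(\omega)$ and asserting that each $g_i$ is analytic (hence locally bounded) because $f$ is absolutely convergent; your version is simply the explicit unpacking of that claim, with the grouping $\beta\mapsto\alpha(\beta)$ playing the role of constructing the $g_i$ and the estimate $(r/2)^{|\beta-\alpha|_1}\le C_S\,r^{|\beta|_1}$ making precise what the paper leaves implicit.
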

\begin{proof}
Let $\sum_\alpha c_{\alpha} \omega^{\alpha}$ be the power series
expansion of $f$. Because $f$ is analytic at the origin, there exists
$\varepsilon > 0$ such that 
$$
\sum_\alpha |c_\alpha|\, \varepsilon^{\alpha_1+\cdots+\alpha_d} < \infty.
$$
Let $S = \{\alpha^{(1)},\ldots, \alpha^{(s)}\}$ where the monomials $\omega^{\alpha^{(i)}}$ generate $\mon(\langle f\rangle)$. Then, 
$$
f(\omega) = \omega^{\alpha^{(1)}} g_1(\omega) + \cdots + \omega^{\alpha^{(s)}} g_s(\omega)
$$
for some power series $g_i(\omega)$. Each series $g_i(\omega)$ is absolutely
convergent in the $\varepsilon$-neighborhood $U$ of the origin because $f$ is
absolutely convergent in $U$. Thus, the $g_i(\omega)$ are 
analytic. Their absolute values are bounded above by
some constant $c$ in $U$, and the lemma follows. \end{proof}

Below, $\RLCT_0(I; \varphi)$ denotes the RLCT of $I$ with respect to $\varphi$ at the origin.

\begin{prop} \label{thm:BoundByMonomial}
Let $I \subset \mathcal{A}_0$  be a finitely generated ideal and $\varphi:\R^d \rightarrow \R$ be nearly analytic at the origin. Then,
$$
\RLCT_0(I; \varphi) \leq \RLCT_0(\mon(I); \varphi).
$$
\end{prop}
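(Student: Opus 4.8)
The plan is to bound $\sum_i f_i(\omega)^2$ above by a constant multiple of a sum of squared monomials generating $\mon(I)$, and then quote Proposition~\ref{thm:AsympUpBound}. First I would dispose of the trivial case: if some $f_i(0)\neq 0$ then $1\in\mon(I)$, so $\RLCT_0(\mon(I);\varphi)=(\infty,\cdot)$, and in fact $\RLCT_0(I;\varphi)=(\infty,\cdot)$ as well since $\sum_i f_i^2$ is then bounded away from $0$ near the origin; hence the inequality holds trivially and we may assume every $f_i(0)=0$.

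Fix generators $f_1,\ldots,f_r$ of $I$ and, using Dickson's Lemma, a finite set of exponents $S$ with $\{\omega^\beta:\beta\in S\}$ a generating set of the monomial ideal $\mon(I)$. Applying Lemma~\ref{thm:BoundByGeneratingMonomials} to each $f_i$ produces a neighborhood of the origin and a constant with $|f_i(\omega)|\leq c_i\sum_{\alpha\in S_i}|\omega|^\alpha$, where $\{\omega^\alpha:\alpha\in S_i\}$ generates $\mon(\langle f_i\rangle)\subset\mon(I)$. Since each such $\omega^\alpha$ lies in $\mon(I)$, it is divisible by some $\omega^\beta$, $\beta\in S$; so on a small cube $\Omega_0=[-\varepsilon,\varepsilon]^d$, where every $|\omega_j|\leq 1$, one has $|\omega|^\alpha\leq|\omega|^\beta\leq\sum_{\beta\in S}|\omega|^\beta$, giving $|f_i(\omega)|\leq C_i\sum_{\beta\in S}|\omega|^\beta$. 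Squaring, summing over $i$, and applying $|\omega|^{\beta+\beta'}\leq\tfrac12(|\omega|^{2\beta}+|\omega|^{2\beta'})$ to the resulting cross terms, I obtain a constant $C>0$ such that
$$
0 \;\leq\; \sum_{i=1}^r f_i(\omega)^2 \;\leq\; C\sum_{\beta\in S}\omega^{2\beta}
\qquad\text{on }\Omega_0 .
$$

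Now put $f:=\sum_i f_i^2$ and $g:=\sum_{\beta\in S}\omega^{2\beta}=\sum_{\beta\in S}(\omega^\beta)^2$, both analytic on the compact semianalytic set $\Omega_0$; the display says $0\leq C^{-1}f\leq g$ there, so Proposition~\ref{thm:AsympUpBound} gives $\RLCT_{\Omega_0}(f;\varphi)\leq\RLCT_{\Omega_0}(g;\varphi)$. Finally I would pass from functions back to tuples via the factor of two in the definition: $\RLCT_{\Omega_0}(f_1,\ldots,f_r;\varphi)$ is the smallest pole of $\int_{\Omega_0}f^{-z/2}|\varphi|\,d\omega$, which (substituting $w=z/2$) equals $(2\lambda_f,\theta_f)$ with $(\lambda_f,\theta_f)=\RLCT_{\Omega_0}(f;\varphi)$, and likewise $\RLCT_{\Omega_0}(\omega^\beta:\beta\in S;\varphi)=(2\lambda_g,\theta_g)$ with $(\lambda_g,\theta_g)=\RLCT_{\Omega_0}(g;\varphi)$; the latter equals $\RLCT_{\Omega_0}(\mon(I);\varphi)$ by Proposition~\ref{thm:SameIdeal}. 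Since the order on pairs is preserved by $(\lambda,\theta)\mapsto(2\lambda,\theta)$, the inequality $(\lambda_f,\theta_f)\leq(\lambda_g,\theta_g)$ upgrades to $\RLCT_0(I;\varphi)\leq\RLCT_0(\mon(I);\varphi)$, since $\Omega_0$ is a sufficiently small neighborhood of the origin.

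The elementary monomial estimates (divisibility in $\mon(I)$, the AM--GM step) are routine; the only points requiring care are arranging all the bounds to hold on one common neighborhood of the origin and correctly matching the zeta functions across the factor of two in the definition of $\RLCT$ for tuples. I do not expect a genuine obstacle: once Lemma~\ref{thm:BoundByGeneratingMonomials} and Proposition~\ref{thm:AsympUpBound} are available, the argument is a short chain of reductions.
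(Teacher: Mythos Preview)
Your proof is correct and follows essentially the same route as the paper: bound each $f_i^2$ by a constant times $\sum_\beta\omega^{2\beta}$ using Lemma~\ref{thm:BoundByGeneratingMonomials}, sum, and invoke Proposition~\ref{thm:AsympUpBound}. The only cosmetic differences are that the paper uses Cauchy--Schwarz where you use AM--GM on the cross terms, and the paper avoids your divisibility step by taking the union $\bigcup_i S_i$ (which already generates $\mon(I)$) rather than passing to a fixed generating set $S$; your added remarks on the trivial case and the factor-of-two translation are details the paper leaves implicit.
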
 
\begin{proof}
Given $f \in \mathcal{A}_0(\R^d)$, let $S$ be a finite set of
generating exponents $\alpha$ for $\mon(\langle f\rangle)$. By Lemma
\ref{thm:BoundByGeneratingMonomials} and the Cauchy-Schwarz inequality,
there exist constants $c, c' > 0$ such that
$$f^2 \leq \Big( c \sum_{\alpha \in S} |\omega|^\alpha \Big)^2  \leq c' \sum_{\alpha \in S} \omega^{2\alpha}$$ 
in a sufficiently small neighborhood of the origin. Therefore, if $f_1, \ldots, f_r$ generate $I$, then $f_1^2+\ldots+f_r^2$ is bounded by a constant multiple of the sum of squares of monomials generating $\mon(I)$. The result now follows from Propostion \ref{thm:AsympUpBound}.
\end{proof}

Given a compact subset $\gamma \subset \R^d$, define the \emph{face ideal}
$$
I_\gamma  = \langle f_\gamma : f \in I \rangle.
$$ 
The next result tells us how to compute $I_\gamma$ 
for an ideal $I = \langle f_1, \ldots, f_r \rangle$.
\begin{prop}
\label{thm4:GeneratorsFaceIdeal}
For all compact faces $\gamma \in \mathcal{P}(I)$, $I_\gamma = \langle f_{1\gamma}, \ldots,
f_{r\gamma}\rangle$.
\end{prop}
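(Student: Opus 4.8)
The plan is to trivialise the statement by passing to a single weight vector and the monomial grading it induces. First I would fix a compact face $\gamma$ of $\mathcal{P}(I)$ and choose a vector $\beta$ in the relative interior of its normal cone, so that $\gamma = \{\alpha \in \mathcal{P}(I) : \langle \alpha, \beta\rangle = m\}$ with $m = \min_{\alpha \in \mathcal{P}(I)} \langle \alpha, \beta\rangle$. The point is that such a $\beta$ must lie in the \emph{open} orthant $\R^d_{>0}$: since $\mathcal{P}(I)$ is stable under adding $\R^d_{\geq 0}$ (if $\omega^\alpha$ occurs in some $f \in I$ then $\omega^{\alpha+e_j}$ occurs in $\omega_j f \in I$), a vanishing coordinate $\beta_j = 0$ would force $\gamma$ to contain the ray $\alpha + \R_{\geq 0} e_j$ for each $\alpha \in \gamma$, contradicting compactness.

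Next I would set up the $\beta$-weight decomposition: for $g = \sum_\alpha c_\alpha \omega^\alpha \in \mathcal{A}_0$ write $g = \sum_{w} g_{(w)}$ with $g_{(w)} = \sum_{\langle \alpha,\beta\rangle = w} c_\alpha \omega^\alpha$. Because $\beta \in \R^d_{>0}$, only finitely many $\alpha \in \N^d$ satisfy $\langle \alpha, \beta\rangle \leq$ any fixed bound, so each $g_{(w)}$ is a polynomial, the regrouping of the (absolutely convergent) series is harmless, and the grading is multiplicative: $(gh)_{(w)} = \sum_{u+v = w} g_{(u)} h_{(v)}$. The key observation is that every monomial appearing in any $g \in I$ has exponent in $\mathcal{P}(I)$, hence all weights occurring in $g$ are $\geq m$; consequently $g_{(w)} = 0$ for $w < m$, and comparing supports gives $g_\gamma = g_{(m)}$. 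In particular each face polynomial $f_{i\gamma}$ of a generator equals $(f_i)_{(m)}$.

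The inclusion $\langle f_{1\gamma}, \ldots, f_{r\gamma}\rangle \subseteq I_\gamma$ is immediate, since each $f_{i\gamma}$ is literally the face polynomial of $f_i \in I$. For the reverse inclusion, take $g \in I$ and write $g = \sum_i h_i f_i$ with $h_i \in \mathcal{A}_0$; extracting the weight-$m$ part and using $(f_i)_{(v)} = 0$ for $v < m$ together with $u, v \geq 0$, only the term $u = 0$, $v = m$ survives in $(h_i f_i)_{(m)} = \sum_{u+v=m} (h_i)_{(u)} (f_i)_{(v)}$, so
$$
g_\gamma \,=\, g_{(m)} \,=\, \sum_{i=1}^r h_i(0)\, f_{i\gamma},
$$
a constant-coefficient combination of the $f_{i\gamma}$. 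Since $g$ was an arbitrary element of $I$, this shows $I_\gamma \subseteq \langle f_{1\gamma}, \ldots, f_{r\gamma}\rangle$, completing the proof.

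I expect the only genuinely delicate point to be the bookkeeping around the weight grading: one must be sure that the minimum weight $m$ realised on $\mathcal{P}(I)$ is precisely the ``weight floor'' seen by every $g \in I$ — this is exactly where $\mathcal{P}(g) \subseteq \mathcal{P}(I)$ and the choice of $\beta$ in the relative interior of the normal cone of a \emph{compact} face are used — and that collecting terms of a fixed $\beta$-weight in a product of convergent power series is legitimate. Everything else is formal.
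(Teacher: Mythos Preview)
Your proof is correct and follows essentially the same route as the paper's: both establish that for $f = \sum_i h_i f_i \in I$ one has $f_\gamma = \sum_i h_i(0)\, f_{i\gamma}$, a constant-coefficient combination of the $f_{i\gamma}$. You package this via a $\beta$-weight grading with $\beta$ chosen in the relative interior of the normal cone (so $\beta \in \R^d_{>0}$ by compactness of $\gamma$), while the paper argues the same point monomial-by-monomial, showing that any factorization $\omega^\alpha = \omega^{\alpha'}\omega^{\alpha''}$ with $\alpha \in \gamma$ and $\alpha' \in \mathcal{P}(I)$ forces $\alpha'' = 0$; the two arguments are equivalent reformulations of the same observation.
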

\begin{proof}
By definition, $\langle f_{1\gamma}, \ldots, f_{r\gamma} \rangle
\subset I_\gamma$. For the other inclusion, it is enough to show that
$f_\gamma \in \langle f_{1\gamma}, \ldots, f_{r\gamma} \rangle$ for all $f \in I$. First, we claim that if
$\omega^\alpha = \omega^{\alpha'} \omega^{\alpha''}$ with
$\alpha \in
\gamma$ and $\omega^{\alpha'} \in \mon(I)$, then $\omega^{\alpha''} =
1$. Indeed, for all $\beta \in \R^d_{\geq 0}$ in the normal cone dual to
$\gamma$, we have $\langle \alpha,\beta\rangle = \langle
\alpha',\beta\rangle+\langle \alpha'',\beta\rangle$, but $\langle \alpha,\beta\rangle \leq \langle
\alpha',\beta\rangle$ so $\langle \alpha'',\beta\rangle = 0$. This implies that $\alpha' +
k\alpha'' \in \gamma$ for all integers $k > 0$. Since $\gamma$ is
compact, $\alpha''$ must be the zero vector so $\omega^{\alpha''} =
1$.

Now, if $f \in I$, then 
$f = h_1 f_1 + \cdots + h_r f_r$ for some analytic functions $h_1,
\ldots, h_r$. Clearly, $f_\gamma = (h_1 f_1)_\gamma + \cdots + (h_r
f_r)_\gamma$. By the above claim, $(h_i f_i)_\gamma = h_{i0}
f_{i\gamma}$ where $h_{i0}$ is the constant term in $h_i$. Hence, $f_\gamma = h_{10} f_{1\gamma} + \cdots + h_{r0}
f_{r\gamma} \in \langle f_{1\gamma}, \ldots, f_{r\gamma} \rangle$ as required. 
\end{proof}
\begin{remark} Let $\beta$ be a vector in the normal cone dual to the face
$\gamma$ of $\mathcal{P}(I)$. Now, consider the weight order
associated to $\beta$, and let ${\rm in}_\beta f$ be the sum of all the
terms of $f$ that are maximal with respect to this (partial) order \cite[\S
15]{E}. Let ${\rm in}_\beta I$ be the initial ideal
$   {\rm in}_\beta I = \langle {\,\rm in}_\beta f : f \in I\, \rangle$. A set of functions $f_1, \ldots, f_r \in I$ is a \emph{Gr\"{o}bner basis} for $I$ if and only if
$$
{\rm in}_\beta I = \langle {\rm in}_\beta f_1, \ldots, {\rm in}_\beta f_r \rangle.
$$
Comparing this statement with the previous result, one could ask why the generators $f_1, \ldots, f_r$ of $I$ need not be a  Gr\"{o}bner basis for Proposition \ref{thm4:GeneratorsFaceIdeal} to hold. This confusion comes from incorrectly equating the face ideal $I_\gamma$ with the
initial ideal  ${\rm in}_\beta I$ when we only have containment $I_\gamma \subset {\rm in}_\beta I$. For instance, if 
$$
I = \langle f_1, f_2, f_3 \rangle = \langle  xy-z^3, xz-y^3, yz-x^3 \rangle
$$
and $\gamma$ is the convex hull of $\{(1,1,0), (1,0,1), (0,1,1)\}$, then $I_\gamma = \langle xy,xz,yz \rangle$. Meanwhile, $\beta = (1,1,1)$ and ${\rm in}_\beta I$ contains $y^4-z^4 = zf_1-yf_2$ but $y^4-z^4 \notin I_\gamma$.
\end{remark}

Lastly, we give several equivalent definitions of
nondegeneracy for ideals. If an ideal $I$ satisfies these conditions, then we say that $I$ is \emph{sos-nondegenerate}, where \emph{sos}
stands for \emph{sum-of-squares}. 

\begin{prop}
\label{thm:EquivalentDefinitionIdealRLCT}
Let $I \subset \mathcal{A}_0$ be an ideal. The following are equivalent:
\begin{enumerate}
\item For some
generating set $\{f_1, \ldots, f_r\}$ for $I$, $f_1^2+\cdots+f_r^2$
is nondegenerate.
\item For all
generating sets $\{f_1, \ldots, f_r\}$ for $I$, $f_1^2+\cdots+f_r^2$
is nondegenerate.
\item For all compact faces $\gamma \subset \mathcal{P}(I)$, the variety
$\mathcal{V}(I_\gamma) \subset \R^d$ does not intersect the torus $(\R^*)^d$.
\end{enumerate}

\end{prop}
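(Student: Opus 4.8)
The plan is to prove the cycle $(2)\Rightarrow(1)\Rightarrow(3)\Rightarrow(2)$. The implication $(2)\Rightarrow(1)$ is immediate since a generating set exists, so the real content is a single computation showing that condition $(3)$ — which visibly makes no reference to any generating set — is equivalent to the nondegeneracy of $f_1^2+\cdots+f_r^2$ for one \emph{arbitrary fixed} generating set $\{f_1,\ldots,f_r\}$ of $I$. Because $(3)$ is generator-independent, establishing that equivalence simultaneously yields $(1)\Rightarrow(3)$ and $(3)\Rightarrow(2)$, closing the loop.

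First I would set up the dictionary of faces. Write $g=f_1^2+\cdots+f_r^2$. As already noted in this section, $\mathcal{P}(g)=2\mathcal{P}(I)$, so the compact faces of $\mathcal{P}(g)$ are exactly the sets $2\gamma$ as $\gamma$ ranges over the compact faces of $\mathcal{P}(I)$, and $g$ is nondegenerate iff $g_{2\gamma}$ is non-singular on $(\R^*)^d$ for every compact face $\gamma$ of $\mathcal{P}(I)$. Fix such a $\gamma$ and pick a vector $\beta\in\R^d_{\geq 0}$ whose minimizing face over $\mathcal{P}(I)$ is exactly $\gamma$; set $m=\min_{\alpha\in\mathcal{P}(I)}\langle\alpha,\beta\rangle$, so that the minimum is attained precisely on $\gamma$. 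Since every exponent occurring in any $f\in I$ lies in $\mathcal{P}(I)$ (because $\omega^\alpha\in\mon(I)$), the face polynomial $f_\gamma$ is precisely the $\beta$-weight-$m$ homogeneous part of $f$, and is $0$ exactly when $f$ has no term of weight $m$.

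Next I would compute $g_{2\gamma}$. Writing $f_i=f_{i\gamma}+r_i$ with $r_i$ supported in $\beta$-weight strictly greater than $m$, we get $f_i^2=f_{i\gamma}^2+2f_{i\gamma}r_i+r_i^2$, whose last two summands are supported in weight $>2m$; summing over $i$ shows the weight-$2m$ part of $g$ equals $\sum_i f_{i\gamma}^2$, which is exactly $g_{2\gamma}$ (and it is nonzero, since the vertices of the face $2\gamma$ of $\mathcal{P}(g)=2\mathcal{P}(I)$ are exponents actually occurring in $g$). Now, over $\R$, a point $x\in(\R^*)^d$ satisfies $g_{2\gamma}(x)=\sum_i f_{i\gamma}(x)^2=0$ if and only if $f_{i\gamma}(x)=0$ for all $i$; and in that case every partial derivative $\partial_j g_{2\gamma}=\sum_i 2f_{i\gamma}\,\partial_j f_{i\gamma}$ also vanishes at $x$, so $g_{2\gamma}$ is singular at $x$ exactly when $g_{2\gamma}(x)=0$. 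By Proposition~\ref{thm4:GeneratorsFaceIdeal} we have $\mathcal{V}(I_\gamma)=\mathcal{V}(\langle f_{1\gamma},\ldots,f_{r\gamma}\rangle)$, so $g_{2\gamma}$ has a singular point on $(\R^*)^d$ if and only if $\mathcal{V}(I_\gamma)\cap(\R^*)^d\neq\emptyset$.

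Finally, combining the face dictionary with the last equivalence gives: $f_1^2+\cdots+f_r^2$ is nondegenerate iff for every compact face $\gamma$ of $\mathcal{P}(I)$ the variety $\mathcal{V}(I_\gamma)$ misses the torus $(\R^*)^d$ — which is exactly condition $(3)$. As the right-hand side does not mention the generators, this delivers $(1)\Rightarrow(3)$ and $(3)\Rightarrow(2)$ at once, completing the proof. I do not expect a serious obstacle here; the only delicate point is keeping the normal-fan bookkeeping straight — choosing $\beta$ so that it cuts out $\gamma$ itself rather than a larger face, and justifying the identification of $f_\gamma$ with the $\beta$-weight-$m$ part via $\mathcal{P}(f)\subseteq\mathcal{P}(I)$ — together with the elementary but essential use of reality to pass between ``$\sum_i f_{i\gamma}^2$ vanishes at $x$'' and ``each $f_{i\gamma}$ vanishes at $x$.''
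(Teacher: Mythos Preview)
Your proposal is correct and follows essentially the same route as the paper: identify the compact faces of $\mathcal{P}(g)$ with $2\gamma$ for compact faces $\gamma$ of $\mathcal{P}(I)$, show $g_{2\gamma}=\sum_i f_{i\gamma}^2$, use the sum-of-squares trick over $\R$ together with the derivative formula to see that $g_{2\gamma}$ is singular at $x\in(\R^*)^d$ exactly when all $f_{i\gamma}(x)=0$, and then invoke Proposition~\ref{thm4:GeneratorsFaceIdeal} to rewrite this as $\mathcal{V}(I_\gamma)\cap(\R^*)^d\neq\emptyset$. The paper's proof is slightly terser (it asserts $f_{(2\gamma)}=\sum_i f_{i\gamma}^2$ without the $\beta$-weight bookkeeping) and phrases the logic as $(1)\Leftrightarrow(3)$ and $(2)\Leftrightarrow(3)$ rather than a cycle, but the content is identical.
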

\begin{proof}
Let $f_1, \ldots, f_r$ generate $I$ and let $f =
f_1^2+\cdots+f_r^2$. If $\gamma$ is a compact face of
$\mathcal{P}(I)$, then the set $(2\gamma)$ is a compact face of
$\mathcal{P}(f)=2\mathcal{P}(I)$. Furthermore, $f_{(2\gamma)} =
f_{1\gamma}^2+\cdots+f_{r\gamma}^2$ and
\begin{align*}
  \frac{\partial f_{(2\gamma)}}{\partial \omega_i} &= 2f_{1\gamma}
  \frac{\partial f_{1\gamma}}{\partial \omega_i} + \cdots +
  2f_{r\gamma} \frac{\partial f_{r\gamma}}{\partial \omega_i}.
\end{align*}
Now, $f_{1\gamma}^2+\cdots+f_{r\gamma}^2=0$ if and only if
$f_{1\gamma}=\cdots=f_{r\gamma} =0$. It follows that $f$ is nondegenerate if and only if $\mathcal{V}(
\langle f_{1\gamma}, \ldots, f_{r\gamma}\rangle) \cap (\R^*)^d =\mathcal{V}(
I_\gamma) \cap (\R^*)^d = \emptyset$ for all compact faces $\gamma
\subset \mathcal{P}(I)$. This proves $(1)
\Leftrightarrow (3)$ and $(2)
\Leftrightarrow (3)$.
\end{proof}

\begin{remark}
The nondegeneracy of a function $f$ need not imply the sos-non-degeneracy of the ideal $\langle f \rangle$, e.g. $f = x+y$. 
\end{remark}

\begin{remark}
After finishing this paper, we discovered another notion of
nondegeneracy for ideals of \emph{complex} formal power series due to Saia \cite{Saia}, which was shown to be equivalent to the complex version of Proposition \ref{thm:EquivalentDefinitionIdealRLCT}(3) \cite[\S 2]{BiviaAusina}.
\end{remark}

We recall some basic facts about toric varieties. We say a polyhedral
cone $\sigma$ is generated by vectors $v_1, \ldots, v_k \in \R^d$ if
$\sigma = \{\sum_i \lambda_i v_i : \lambda_i \geq 0 \}$. If $\sigma$
is generated by lattice vectors $v_i \in \Z^d$, then $\sigma$ is
\emph{rational}.  If the origin is a face of $\sigma$, then $\sigma$
is \emph{pointed}. A \emph{ray} is a pointed one-dimensional
cone. Every rational ray has a lattice generator of minimal length
called the \emph{minimal generator}, and every pointed rational
polyhedral cone $\sigma$ is generated by the minimal generators of its one-dimensional faces. If these minimal generators are linearly independent
over $\R$, then $\sigma$ is \emph{simplicial}. A simplicial cone is \emph{smooth} if its minimal generators also form part of a $\Z$-basis of
$\Z^d$. A collection $\mathcal{F}$ of pointed rational polyhedral
cones in $\R^d$ is a \emph{fan} if the faces of every cone in
$\mathcal{F}$ are in $\mathcal{F}$ and the intersection of any two
cones in $\mathcal{F}$ are again in $\mathcal{F}$. The \emph{support}
of $\mathcal{F}$ is the union of its cones as subsets of $\R^d$. If
the support of $\mathcal{F}$ is the non-negative orthant, then
$\mathcal{F}$ is \emph{locally complete}.  If every cone of
$\mathcal{F}$ is simplicial (resp. smooth), then $\mathcal{F}$ is
\emph{simplicial} (resp. \emph{smooth}). A fan $\mathcal{F}_1$ is a
\emph{refinement} of another fan $\mathcal{F}_2$ if the cones of
$\mathcal{F}_1$ come from partitioning the cones of
$\mathcal{F}_2$. See \cite{F} for more details. 

Given a smooth simplicial locally complete fan $\mathcal{F}$, we have a smooth toric variety $\PP(\mathcal{F})$ covered by open charts $U_\sigma \simeq \R^d$, one for each cone $\sigma$ of $\mathcal{F}$ that is maximal under inclusion. Furthermore, the \emph{blow-up} $\rho_{\mathcal{F}}: \PP(\mathcal{F}) \rightarrow \R^d$ is defined as follows: for each maximal cone $\sigma$ of $\mathcal{F}$ minimally generated by $v_1, \ldots, v_d$ with $v_i = (v_{i1}, \ldots, v_{id})$, we have monomial maps $\rho_\sigma: U_\sigma \rightarrow \R^d$ on the open charts.
\begin{eqnarray*}
&(\mu_1, \ldots, \mu_d) \mapsto (\omega_1, \ldots, \omega_d)& \\
&\omega_1 = \mu_1^{v_{11}} \mu_2^{v_{21}} \cdots \mu_d^{v_{d1}}& \\
&\omega_2 = \mu_1^{v_{12}} \mu_2^{v_{22}} \cdots \mu_d^{v_{d2}}& \\
&\vdots& \\
&\omega_d = \mu_1^{v_{1d}} \mu_2^{v_{2d}} \cdots \mu_d^{v_{dd}}& 
\end{eqnarray*}
Let $v=v_\sigma$ be the matrix $(v_{ij})$  where each minimal generator $v_i$ forms a row of $v$. We represent the above monomial map by $\omega = \mu^{v}$. If $v_{i+}$ represents the $i$-th row sum of $v$, the Jacobian determinant of this map is 
$$(\det v) \mu_1^{v_{1+}-1} \cdots \mu_d^{v_{d+}-1}.$$

We are now ready to connect these concepts. The next two theorems are~due to Varchenko, see \cite{V} and \cite[\S 8.3]{AGV}. His notion of degeneracy is weaker than~ours because it does not include the condition $f_\gamma = 0$, but his proof \cite[Lemma 8.9]{AGV} actually supports the stronger notion. The set up is as follows: suppose $f$ is analytic in a neighborhood $W$ of the origin. Let $\mathcal{F}$ be any smooth simplicial refinement of the normal fan $\mathcal{F}(f)$ and $\rho_{\mathcal{F}} $ be the blow-up associated to $\mathcal{F}$. Set $M =\rho_{\mathcal{F}}^{-1}(W)$. Let $l$ be the distance of $\mathcal{P}(f)$ and $\theta$ its multiplicity. 

\begin{thm}
\label{thm:VarchenkoDesingular}
$(M,W,\rho_{\mathcal{F}})$ desingularizes $f$ at $0$ if $f$ is nondegenerate. 
\end{thm}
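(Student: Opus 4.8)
The plan is to check that the triple $(M,W,\rho_{\mathcal F})$ meets conditions (a)--(c) and properties (i)--(iii) of Theorem~\ref{thm:ResOfSing}. Everything except the local monomial normal form (iii) holds for \emph{any} smooth simplicial refinement $\mathcal F$ of $\mathcal F(f)$ by the standard theory of toric varieties; nondegeneracy of $f$ is used only in (iii), and there essentially only along the fiber $\rho_{\mathcal F}^{-1}(0)$, which is what ``desingularizing $f$ at $0$'' demands. This is Varchenko's theorem \cite{V}, \cite[Lemma~8.9]{AGV}; I sketch the steps.

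The structural conditions are quick. Because $\mathcal F$ is smooth and simplicial, $\PP(\mathcal F)$ is covered by the charts $U_\sigma\simeq\R^d$ with monomial, hence real analytic, transition maps, so $\PP(\mathcal F)$ and its open subset $M=\rho_{\mathcal F}^{-1}(W)$ are real analytic $d$-manifolds (giving (a), (b)); each $\rho_\sigma(\mu)=\mu^{v_\sigma}$ is monomial with exponent matrix in $\N^{d\times d}$, so $\rho_{\mathcal F}$ is real analytic (giving (c)). Since $\mathcal F$ subdivides the orthant fan of $\R^d$ and has the same support $\R^d_{\geq0}$, the toric morphism $\rho_{\mathcal F}$ is proper \cite{F}, \cite[\S8.3]{AGV}, and properness is preserved by base change to $W$ (giving (i)). The Jacobian determinant of $\rho_\sigma$ is $(\det v_\sigma)\,\mu_1^{v_{1+}-1}\cdots\mu_d^{v_{d+}-1}$, nonzero exactly off the coordinate hyperplanes, so $\rho_{\mathcal F}$ is a real analytic isomorphism over the dense torus $(\R^*)^d$; together with $\mathcal V_W(f)$ having empty interior this gives the generic-isomorphism statement (ii), as in \cite[\S8.3]{AGV}.

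The content is (iii). On $U_\sigma$ with rows $v_1,\dots,v_d$ of $v_\sigma$, writing $f=\sum_\alpha c_\alpha\omega^\alpha$ we get $f\circ\rho_\sigma(\mu)=\mu^{m_\sigma}\tilde f_\sigma(\mu)$ with $(m_\sigma)_i=\min\{\langle v_i,\alpha\rangle:\alpha\in\mathcal P(f)\}\in\N$ and $\tilde f_\sigma$ real analytic on $\rho_\sigma^{-1}(W)$ (absolute convergence, as in Lemma~\ref{thm:BoundByGeneratingMonomials}); the Jacobian is already (constant)$\times$(monomial). Fix $y\in\mathcal V_M(f\circ\rho_{\mathcal F})\cap U_\sigma$ and order coordinates so that $\mu_1(y)=\dots=\mu_e(y)=0$ and $\mu_{e+1}(y),\dots,\mu_d(y)\neq0$. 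If $\tilde f_\sigma(y)\neq0$, then $f\circ\rho_\sigma=\mu^{m_\sigma}\cdot(\text{unit near }y)$; absorbing the coordinates nonzero at $y$ into the unit and translating them to put $y$ at the origin of a chart $M_y$ yields the form $a(\mu)\mu^\kappa$ with $a$ nowhere zero, and likewise for the Jacobian. If $\tilde f_\sigma(y)=0$ and (for a desingularization at $0$) $y\in\rho_{\mathcal F}^{-1}(0)$, then for each $j$ some $i\leq e$ has $(v_i)_j>0$, so $w:=v_1+\dots+v_e$ is strictly positive; thus the face $\gamma$ of $\mathcal P(f)$ on which $\langle v_i,\cdot\rangle$ is minimized for all $i\leq e$ has $w$ in its normal cone and is therefore \emph{compact}. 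The restriction $\tilde f_\sigma|_{\{\mu_1=\dots=\mu_e=0\}}$ retains exactly the monomials of $f$ supported on $\gamma$, so it is a monomial pullback of the face polynomial $f_\gamma$; since the remaining coordinates of $y$ are nonzero and $\tilde f_\sigma(y)=0$, $f_\gamma$ vanishes at a point of the torus. Nondegeneracy of $f$ makes that point nonsingular for $f_\gamma$, and chasing this through the (unimodular) monomial pullback gives $\partial\tilde f_\sigma/\partial\mu_i(y)\neq0$ for some $i>e$. By the implicit function theorem one may then use $\tilde f_\sigma$ in place of $\mu_i$ as an analytic coordinate near $y$; in the new chart $f\circ\rho_\sigma$ is $\mu_1^{(m_\sigma)_1}\cdots\mu_e^{(m_\sigma)_e}\cdot\tilde f_\sigma\cdot(\text{unit})$, and after translating the remaining nonzero coordinates this is $a(\mu)\mu^\kappa$ with $a$ nowhere zero, the Jacobian of $\rho_{\mathcal F}$ times the unit Jacobian of the coordinate change again being of the form $h(\mu)\mu^\tau$. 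This establishes (iii).

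The main obstacle is this last case: one must identify $\tilde f_\sigma|_{\{\mu_i=0,\,i\le e\}}$ with a monomial multiple of the face polynomial of the correct \emph{compact} face, and then convert ``$f_\gamma$ nonsingular on the torus'' into the precise statement that $\tilde f_\sigma$ has a nonvanishing partial derivative in a direction \emph{tangent} to the coordinate subspace through $y$ --- only such a derivative permits absorbing $\tilde f_\sigma$ into the monomial factor without leaving the chart or spoiling the monomial structure. The exponent bookkeeping, monomial shifts, and coordinate change are routine; the conceptual input is Varchenko's \cite[Lemma~8.9]{AGV}, whose argument --- as the paragraph preceding the theorem notes --- in fact yields the stronger form of nondegeneracy that includes the vanishing condition $f_\gamma=0$.
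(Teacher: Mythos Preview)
The paper does not supply its own proof of this theorem; it attributes the result to Varchenko and cites \cite{V} and \cite[\S8.3, Lemma~8.9]{AGV}, noting only that the argument there supports the stronger notion of nondegeneracy used here. Your sketch faithfully follows that cited argument: pull out the monomial factor $\mu^{m_\sigma}$ on each chart, and when the remaining factor $\tilde f_\sigma$ vanishes at $y$, identify its restriction to the coordinate subspace through $y$ with (a monomial multiple of) the face polynomial $f_\gamma$ for the compact face $\gamma$ dual to the vanishing coordinates, so that nondegeneracy yields a nonzero tangential partial derivative and the implicit function theorem finishes the local normal form.

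One small point: your parenthetical ``which is what `desingularizing $f$ at $0$' demands'' slightly overstates things. Property~(iii) in Theorem~\ref{thm:ResOfSing} is required at \emph{every} $y\in\mathcal V_M(f\circ\rho)$, not only over the fiber $\rho_{\mathcal F}^{-1}(0)$. What is true is that once (iii) holds at each point of the compact fiber $\rho_{\mathcal F}^{-1}(0)$, it propagates to a neighborhood of that fiber (the monomial normal form persists after translating and absorbing nonvanishing coordinate factors into the unit), and then by properness one may shrink $W$ so that $M=\rho_{\mathcal F}^{-1}(W)$ lies in that neighborhood. This shrinking step is routine but should be mentioned. With that adjustment, your outline matches the argument the paper points to.
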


\begin{thm}
\label{thm:VarchenkoDistance}
$\RLCT_0\,f = (1/l, \theta)$ if $(M,W,\rho_{\mathcal{F}})$ desingularizes $f$ at $0$.
\end{thm}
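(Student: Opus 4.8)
The plan is to pull the zeta integral back along the toric blow-up $\rho_{\mathcal F}$, reduce it to the monomial integrals of Proposition~\ref{thm:MonomialRes} on the smooth charts, and then re-express the resulting arithmetic data in terms of the Newton polyhedron $\mathcal P(f)$. Since by hypothesis $(M,W,\rho_{\mathcal F})$ desingularizes $f$ at $0$, property (ii) of Theorem~\ref{thm:ResOfSing} tells us $\rho_{\mathcal F}$ is an analytic isomorphism off $\mathcal V(f\circ\rho_{\mathcal F})$, so it is a change of variables away from $\mathcal V(\langle f\rangle)$ in the sense of Proposition~\ref{thm:ChangeOfVar}; applying that proposition with $\varphi=1$ and $\Omega=\R^d$ gives
$$
\RLCT_0 f \;=\; \min_{x\in\rho_{\mathcal F}^{-1}(0)}\ \RLCT_{M_x}\!\big(f\circ\rho_{\mathcal F};\,|\rho_{\mathcal F}'|\big).
$$
Each such $x$ lies in a maximal chart $U_\sigma\simeq\R^d$ with coordinates $\mu$, where $\sigma$ is minimally generated by the rows $v_1,\dots,v_d$ of an integer matrix $v$ with $\det v=\pm1$ and $\omega=\mu^v$. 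Then $f\circ\rho_\sigma(\mu)=\sum_\alpha c_\alpha\,\mu^{v\alpha}$, and comparing with the desingularized form $a(\mu)\mu^\kappa$ with $a(x)\neq0$ forces $\kappa_i=\ell_f(v_i):=\min\{\langle v_i,\alpha\rangle:[\omega^\alpha]f\neq0\}$, the support function of $\mathcal P(f)$ at $v_i$; and by the Jacobian computation recorded in Section~\ref{sec:Newton}, $|\rho_\sigma'(\mu)|=|\det v|\,\mu^{\tau}$ with $\tau_i=v_{i+}-1$, the $i$-th row sum of $v$ minus one. (Note $f(0)=0$ forces $0\notin\mathcal P(f)$, so the distance $l>0$ and $\theta$ is defined.)

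\textbf{Local thresholds.} Fix $x\in\rho_{\mathcal F}^{-1}(0)\cap U_\sigma$ and put $J=\{i:x_i=0\}$, which is nonempty since $\rho_\sigma(x)=0$. For $i\notin J$ the factors $\mu_i^{\kappa_i}$, $\mu_i^{\tau_i}$ and the unit $a$ are smooth and nonzero near $x$, so by Lemma~\ref{thm:LocalRLCT} they may be absorbed into the amplitude; splitting a neighbourhood of $x$ into orthants in the coordinates $(\mu_i)_{i\in J}$ as in the proof of that lemma and invoking Proposition~\ref{thm:MonomialRes} gives
$$
\RLCT_{M_x}\!\big(f\circ\rho_{\mathcal F};\,|\rho_{\mathcal F}'|\big)\;=\;\Big(\ \min_{i\in J,\ \ell_f(v_i)>0}\tfrac{v_{i+}}{\ell_f(v_i)}\ ,\ \ \nmin_{i\in J,\ \ell_f(v_i)>0}\tfrac{v_{i+}}{\ell_f(v_i)}\ \Big),
$$
read as $\infty$ when no $i\in J$ has $\ell_f(v_i)>0$. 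The apex $x=0$ of each chart has $J=\{1,\dots,d\}$ and so realizes, in its chart, the smallest first coordinate and the largest second coordinate (in our ordering) among all points of that chart; every ray generator $v_i$ appearing at any point is a ray generator of $\mathcal F$. Hence, minimizing over $x$, Proposition~\ref{thm:ChangeOfVar} reduces $\RLCT_0 f$ to the pair $(\lambda^\ast,\theta^\ast)$, where $\lambda^\ast=\min\{\,v_+/\ell_f(v):v\text{ a ray generator of }\mathcal F,\ \ell_f(v)>0\,\}$ and $\theta^\ast$ is the largest number of ray generators of a single maximal cone of $\mathcal F$ attaining this minimum.

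\textbf{Translation to the Newton polyhedron.} It remains to prove $\lambda^\ast=1/l$ and $\theta^\ast=\theta$. From the definition of the distance, $(t,\dots,t)\in\mathcal P(f)$ iff $t\langle\beta,\mathbf1\rangle\ge\ell_f(\beta)$ for all $\beta\in\R^d_{\ge0}$, so $l=\sup\{\ell_f(\beta)/\langle\beta,\mathbf1\rangle:\ell_f(\beta)>0\}$ and $1/l=\inf_\beta \langle\beta,\mathbf1\rangle/\ell_f(\beta)$. Because $\mathcal F$ refines the normal fan $\mathcal F(f)$, $\ell_f$ is linear on each cone of $\mathcal F$, and on a simplicial cone a ratio of two linear functionals lies between its values at the extreme rays; as $\bigcup_{\sigma\in\mathcal F}\sigma$ is the whole orthant, the infimum is attained at a ray generator of $\mathcal F$, i.e. $\lambda^\ast=1/l$. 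For the multiplicity, $g(\beta):=\langle\beta,\mathbf1\rangle/\ell_f(\beta)$ equals $1/l$ precisely when $\langle\beta,\cdot\rangle$ attains its minimum over $\mathcal P(f)$ at the diagonal point $(l,\dots,l)$, hence precisely when $\beta$ lies in the normal cone $\sigma_0$ of the smallest face $\gamma_0$ of $\mathcal P(f)$ containing $(l,\dots,l)$; and $\dim\sigma_0=\operatorname{codim}\gamma_0=\theta$. Since $\sigma_0$ is a union of cones of $\mathcal F$, some maximal cone $\sigma$ of $\mathcal F$ meets $\sigma_0$ in a $\theta$-dimensional face spanned by $\theta$ linearly independent ray generators of $\sigma$ lying in $\sigma_0$; these all satisfy $v_+/\ell_f(v)=1/l$ (with $\ell_f(v)=l\,v_+>0$), while the remaining $d-\theta$ generators of $\sigma$ lie outside $\sigma_0$ and hence strictly exceed $1/l$, so the apex of this chart contributes exactly $(1/l,\theta)$; and no maximal cone contains more than $\theta$ generators in $\sigma_0$, so no local threshold is smaller in our ordering. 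Proposition~\ref{thm:GlobalRLCT} then yields $\RLCT_0 f=(1/l,\theta)$.

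\textbf{Main obstacle.} The analytic content is routine bookkeeping around Propositions~\ref{thm:MonomialRes}, \ref{thm:ChangeOfVar}, \ref{thm:GlobalRLCT} and Lemma~\ref{thm:LocalRLCT}. The real work is the last step, and within it the multiplicity claim: one must produce a chart of an \emph{arbitrary} smooth simplicial refinement realizing exactly $\theta$ independent minimizing ray generators, show that no chart (and no non-apex point) realizes more, and ensure that the pole orders coming from different charts do not combine into a larger order — this last point being exactly why the ordering $(\lambda_1,\theta_1)>(\lambda_2,\theta_2)\Leftrightarrow\theta_1<\theta_2$ at equal $\lambda$ is built into Proposition~\ref{thm:GlobalRLCT}. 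Some care is also needed with ray generators $v$ having $\ell_f(v)=0$ (directions contributing no pole, corresponding to faces of $\mathcal P(f)$ meeting the coordinate subspaces), which must be excluded from every minimum above.
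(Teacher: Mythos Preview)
Your approach is essentially the one the paper uses: the paper does not prove Theorem~\ref{thm:VarchenkoDistance} directly (it cites \cite{V} and \cite[\S 8]{AGV}), but its proof of the generalization Theorem~\ref{thm:ScaledDistance} ``follow[s] roughly the proof in \cite[\S 8]{AGV}'' and proceeds exactly as you do---pull back along $\rho_{\mathcal F}$, factor $f\circ\rho_\sigma = g(\mu)\,\mu^{v\alpha}$ on each maximal chart with $g$ a unit, apply Proposition~\ref{thm:MonomialRes} to obtain $(\lambda_\sigma,\theta_\sigma)=(\min_i v_{i+}/(v_i\!\cdot\!\alpha),\,\nmin_i v_{i+}/(v_i\!\cdot\!\alpha))$, and interpret the ratios $ (v_i\!\cdot\!\alpha)/v_{i+}$ as distances to supporting hyperplanes of $\mathcal P(f)$ along the diagonal. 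Your treatment of the multiplicity via the normal cone $\sigma_0$ of the minimal face through $(l,\dots,l)$ is considerably more explicit than the paper's ``the result follows,'' and you correctly flag this as the point requiring care.
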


We extend Theorem \ref{thm:VarchenkoDistance} to compute $\RLCT_0(f;\omega^\tau)$ for monomials $\omega^\tau$. Given a polyhedron $\mathcal{P}(f) \subset \R^d$ and a vector $\tau =(\tau_1, \ldots, \tau_d)$ of non-negative integers, let the \emph{$\tau$-distance} $l_\tau$ be the smallest $t \geq 0$ such that $t(\tau_1+1,\ldots, \tau_d+1) \in \mathcal{P}(f)$ and let the multiplicity $\theta_\tau$ be the codimension of the face at this intersection.
\begin{thm}
\label{thm:ScaledDistance}
$\RLCT_0(f;\omega^\tau) = (1/l_\tau, \theta_\tau)$ if $(M,W,\rho_{\mathcal{F}})$ desingularizes $f$ at~$0$.
\end{thm}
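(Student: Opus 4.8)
The plan is to carry the monomial amplitude $\omega^\tau$ through Varchenko's argument for Theorem~\ref{thm:VarchenkoDistance} (the case $\tau=0$). Since the zeta functions of $f$ and of $\langle f\rangle$ agree, $\RLCT_0(f;\omega^\tau)=\RLCT_0(\langle f\rangle;\omega^\tau)$, and because $(M,W,\rho_{\mathcal F})$ desingularizes $f$ it is in particular a change of variables away from $\mathcal V(\langle f\rangle)$, so Proposition~\ref{thm:ChangeOfVar} gives
$$
\RLCT_0(f;\omega^\tau)=\min_{y\in\rho_{\mathcal F}^{-1}(0)}\RLCT_{\mathcal M_y}\bigl(\rho_{\mathcal F}^*\langle f\rangle;\,(\omega^\tau\circ\rho_{\mathcal F})\,|\rho_{\mathcal F}'|\bigr).
$$
Each $y$ lies in a chart $U_\sigma\cong\R^d$ of a maximal cone $\sigma$ of $\mathcal F$ with minimal generators $v_1,\dots,v_d$ forming the rows of a matrix $v$ with $|\det v|=1$, where $\rho_\sigma$ is the monomial map $\omega=\mu^v$. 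On this chart $\omega^\tau\circ\rho_\sigma=\mu_1^{\langle v_1,\tau\rangle}\cdots\mu_d^{\langle v_d,\tau\rangle}$, the Jacobian is $\pm\,\mu_1^{v_{1+}-1}\cdots\mu_d^{v_{d+}-1}$ with $v_{i+}=\langle v_i,\mathbf{1}\rangle$, and --- as in Varchenko's argument \cite{V},\cite[\S 8.3]{AGV}, using that $\rho_{\mathcal F}$ desingularizes $f$ and that $v$ is invertible --- $f\circ\rho_\sigma=a(\mu)\,\mu_1^{\kappa_1}\cdots\mu_d^{\kappa_d}$ with $a$ nonvanishing analytic and $\kappa_i=m(v_i):=\min_{\alpha\in\mathcal P(f)}\langle v_i,\alpha\rangle$.

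Next I would compute the local threshold at each $y$. Near $y$, the factors attached to coordinates that do not vanish at $y$ are nonvanishing units and may be absorbed; decomposing a neighborhood of $y$ in $U_\sigma$ into its $2^d$ coordinate orthants and flipping signs (which fixes $|f\circ\rho_\sigma|^{-z}$ and the monomial amplitude, and keeps the remaining smooth factor positive at $y$) puts each piece into the form handled by Proposition~\ref{thm:MonomialRes}. Combining the Jacobian exponent vector $(v_{i+}-1)_i$ with the amplitude exponent vector $(\langle v_i,\tau\rangle)_i$, that proposition yields, independently of the orthant,
$$
\RLCT_{\mathcal M_y}\bigl(\rho_{\mathcal F}^*\langle f\rangle;(\omega^\tau\circ\rho_{\mathcal F})|\rho_{\mathcal F}'|\bigr)=\Bigl(\min_{i\in J_y}\tfrac{\langle v_i,\mathbf{1}+\tau\rangle}{\kappa_i},\ \nmin_{i\in J_y}\tfrac{\langle v_i,\mathbf{1}+\tau\rangle}{\kappa_i}\Bigr),
$$
where $J_y$ indexes the coordinates vanishing at $y$ and we use the convention $c/0=\infty$. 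Taking $y$ to be the origin of $U_\sigma$ gives $J_y=\{1,\dots,d\}$; any other $y\in\rho_{\mathcal F}^{-1}(0)\cap U_\sigma$ has $J_y$ a proper subset and hence a threshold no smaller. Because $\zeta(z)$ is a finite sum of the chart integrals whose top Laurent coefficients do not cancel, the pole at the minimal location has order equal to the largest multiplicity occurring among the charts realising it, so
$$
\RLCT_0(f;\omega^\tau)=(\lambda,\theta),\qquad \lambda=\min_{\sigma}\ \min_{1\le i\le d}\ \tfrac{\langle v_i,\mathbf{1}+\tau\rangle}{\kappa_i},
$$
with $\theta$ the largest number of indices $i$ attaining this minimum within a single $\sigma$.

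The substance --- and the step I expect to need the most care --- is the purely combinatorial identification $(\lambda,\theta)=(1/l_\tau,\theta_\tau)$, which is Varchenko's distance computation with the diagonal direction $\mathbf{1}$ replaced throughout by $\mathbf{1}+\tau$. Writing $c=\mathbf{1}+\tau$, we have $1/\lambda=\max\{m(\beta)/\langle\beta,c\rangle:\beta\text{ a ray of }\mathcal F\}$. Since $m(\beta)=\min_{\alpha\in\mathcal P(f)}\langle\beta,\alpha\rangle$ is linear on each cone of the normal fan $\mathcal F(f)$ and $\langle\cdot,c\rangle$ is linear, the scale-invariant ratio $m(\beta)/\langle\beta,c\rangle$ is a quotient of linear forms on each such cone, hence is maximised over that cone at an extreme ray; as $\mathcal F$ refines $\mathcal F(f)$, every ray of $\mathcal F$ lies in a cone of $\mathcal F(f)$ and every ray of $\mathcal F(f)$ is a ray of $\mathcal F$, so $1/\lambda=\sup_{\beta\in\R^d_{\ge 0}\setminus\{0\}}m(\beta)/\langle\beta,c\rangle$. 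Because $\mathcal P(f)=\{p\in\R^d:\langle\beta,p\rangle\ge m(\beta)\ \text{for all }\beta\in\R^d_{\ge 0}\}$, this supremum equals the least $t\ge 0$ with $tc\in\mathcal P(f)$, i.e. $l_\tau$, so $\lambda=1/l_\tau$. For the multiplicity, the maximising rays are exactly those lying in the normal cone $N_{\gamma_0}$ of the smallest face $\gamma_0$ of $\mathcal P(f)$ through $l_\tau c$, a cone of dimension $\operatorname{codim}\gamma_0=\theta_\tau$: a simplicial cone cannot contain more than $\theta_\tau$ linearly independent rays inside $N_{\gamma_0}$, while refining $\mathcal F$ over the cone $N_{\gamma_0}$ (itself a cone of $\mathcal F(f)$) produces a maximal cone of $\mathcal F$ with $\theta_\tau$ of its generators inside $N_{\gamma_0}$; hence $\theta=\theta_\tau$, and $\RLCT_0(f;\omega^\tau)=(1/l_\tau,\theta_\tau)$.
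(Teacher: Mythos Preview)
Your argument is correct and follows essentially the same route as the paper's proof: pull back through the toric blow-up $\rho_{\mathcal F}$, read off on each maximal cone $\sigma$ the pair $(\lambda_\sigma,\theta_\sigma)=(\min_i \langle v_i,\tau{+}\mathbf 1\rangle/\langle v_i,\alpha\rangle,\ \nmin_i\,\cdot\,)$ via Proposition~\ref{thm:MonomialRes}, and then identify $\min_\sigma(\lambda_\sigma,\theta_\sigma)$ with $(1/l_\tau,\theta_\tau)$. The paper compresses the last step into a one-line geometric remark (interpreting $\langle v_i,\alpha\rangle/\langle v_i,\tau{+}\mathbf 1\rangle$ as a distance along the ray $t(\tau{+}\mathbf 1)$), whereas you spell out the linear-programming argument showing the maximum of $m(\beta)/\langle\beta,\mathbf 1{+}\tau\rangle$ is attained on rays of $\mathcal F(f)$ and equals $l_\tau$, and you give the normal-cone dimension count for $\theta_\tau$; these are exactly the details implicit in the paper's ``the result follows.''
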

\begin{proof}
We follow roughly the proof in \cite[\S 8]{AGV} of Theorem \ref{thm:VarchenkoDistance}. Let $\sigma$ be a maximal cone of $\mathcal{F}$. Because $\mathcal{F}$ refines $\mathcal{F}(f)$, $\sigma$ is a subset of some maximal cone $\sigma'$ of $\mathcal{F}(f)$. Let $\alpha \in \R^d$ be the vertex of $\mathcal{P}(f)$ dual to $\sigma'$. Let $v$ be the matrix whose rows are minimal generators of $\sigma$ and $\rho$ the monomial map $\mu \mapsto \mu^v$.  Under this map, the term $c_\alpha w^\alpha$ in $f$ becomes the leading monomial, so $f(\rho(\mu)) = g(\mu) \mu^{v\alpha}$ for some function $g$ satisfying $g(\mu)\neq 0$ for all $\mu \in U_\sigma$.
Then,  
\begin{eqnarray*}
  |f(\omega)|^{-z}|\omega^\tau| \,d\omega &=& |f(\rho(\nu))|^{-z}|\rho(\mu)|^\tau
  |\rho'(\mu)|\,d\mu \\
&=& (\det v) |g(\mu)|^{-z} |\mu|^{-v\alpha z} |\mu^{v\tau} \mu_1^{v_{1+}-1} \cdots \mu_d^{v_{d+}-1}|
\end{eqnarray*}
Thus, for the cone $\sigma$,
$$
(\lambda_\sigma, \theta_\sigma) = (\min S, \nmin S), \quad S = \Big\{\frac{ v_i \cdot (\tau+1) }{ v_i \cdot \alpha}: 1\leq i \leq d\Big\}
$$
where $\tau+1 = (\tau_1+1, \ldots, \tau_d+1)$. We now give an interpretation for the elements of $S$. Fixing $i$, let $P$ be the affine hyperplane normal to $v_i$ passing through $\alpha$. Then, $ (v_i \cdot \alpha)/( v_i \cdot (\tau+1))$ is the distance of $P$ from the origin along the ray $\{t(\tau+1) : t\geq 0\}$. Since $\RLCT_0(f;\omega^\tau) = \min_\sigma (\lambda_\sigma, \theta_\sigma)$, the result~follows.
\end{proof}
\begin{remark}
After finishing this paper, the author discovered that a similar result
was proved by Vasil'ev \cite{Vasilev} for
\emph{complex} analytic functions. \qed
\end{remark}

Monomial ideals play in special role in the theory of real log
canonical thresholds of ideals. The proof of this next result is due to Piotr Zwiernik. 
\begin{prop}
\label{thm:MonomialRLCT}
Monomial ideals are sos-nondegenerate.
\end{prop}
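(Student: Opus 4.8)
The plan is to verify condition (3) of Proposition~\ref{thm:EquivalentDefinitionIdealRLCT}: for a monomial ideal $I = \langle \omega^{\alpha^{(1)}}, \ldots, \omega^{\alpha^{(r)}}\rangle$ and any compact face $\gamma \subset \mathcal{P}(I)$, one must show that $\mathcal{V}(I_\gamma)$ does not meet the torus $(\R^*)^d$. First I would use Proposition~\ref{thm4:GeneratorsFaceIdeal} to compute $I_\gamma$ explicitly from the given monomial generators: each $(\omega^{\alpha^{(i)}})_\gamma$ is either $\omega^{\alpha^{(i)}}$ itself, if $\alpha^{(i)} \in \gamma$, or $0$ otherwise. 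Hence $I_\gamma$ is generated by the subset of monomials $\omega^{\alpha^{(i)}}$ whose exponent vectors lie on $\gamma$. The crucial observation is that this subset is \emph{nonempty}: since $\gamma$ is a face of $\mathcal{P}(I) = \mathcal{P}(\mon(I)) = \mathrm{conv}\{\alpha^{(i)} + \R^d_{\geq 0}\}$, its supporting vector $\beta \in \R^d_{\geq 0}$ attains its minimum over $\mathcal{P}(I)$ at some vertex, and every vertex of $\mathcal{P}(I)$ is one of the $\alpha^{(i)}$ (the $\alpha^{(i)}$ are the only candidate extreme points of this polyhedron). So at least one generating monomial $\omega^{\alpha^{(i)}}$ survives in $I_\gamma$.

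Now I would finish as follows: if $x \in (\R^*)^d$, then every coordinate $x_j \neq 0$, so for any exponent vector $\alpha$ the monomial $x^\alpha = x_1^{\alpha_1}\cdots x_d^{\alpha_d}$ is a product of nonzero reals, hence nonzero. In particular, for the surviving generator $\omega^{\alpha^{(i)}} \in I_\gamma$ we get $(\omega^{\alpha^{(i)}})(x) = x^{\alpha^{(i)}} \neq 0$, so $x \notin \mathcal{V}(I_\gamma)$. Since $x$ was an arbitrary point of the torus, $\mathcal{V}(I_\gamma) \cap (\R^*)^d = \emptyset$. As $\gamma$ was an arbitrary compact face of $\mathcal{P}(I)$, condition (3) holds, and therefore $I$ is sos-nondegenerate.

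The only real subtlety — the "hard part" such as it is — is the claim that every compact face $\gamma$ of $\mathcal{P}(I)$ contains at least one of the generating exponents $\alpha^{(i)}$, equivalently that $I_\gamma \neq 0$. This needs the elementary polyhedral fact that a polyhedron of the form $\mathrm{conv}\{\alpha^{(1)} + \R^d_{\geq 0}, \ldots, \alpha^{(r)} + \R^d_{\geq 0}\}$ has all its vertices among $\{\alpha^{(1)}, \ldots, \alpha^{(r)}\}$, together with the observation that a nonempty compact face, being itself a polytope, has a vertex, and that vertex is also a vertex of $\mathcal{P}(I)$. I would state this briefly rather than belabor it. Everything else is immediate from the definitions and from Proposition~\ref{thm4:GeneratorsFaceIdeal}.
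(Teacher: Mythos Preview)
Your proof is correct and takes essentially the same approach as the paper: both verify condition~(3) of Proposition~\ref{thm:EquivalentDefinitionIdealRLCT} by noting that for a monomial ideal the face ideal $I_\gamma$ is again generated by monomials, which cannot vanish on $(\R^*)^d$. The paper phrases this via the sum of squares $f=\sum f_i^2$ (observing that $f_\gamma$ is itself a sum of squared monomials) rather than via the individual generators and Proposition~\ref{thm4:GeneratorsFaceIdeal}, and is in fact less explicit than you about why $I_\gamma\neq 0$.
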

\begin{proof}
Let $f = f_1^2+\cdots+f_r^2$ where $f_1, \ldots, f_r$ are monomials
generating $I$. For each face $\gamma$ of $\mathcal{P}(I)$, $f_\gamma$
is also a sum of squares of monomials, so $f_\gamma$ does not have any
zeros in $(\R^*)^d$ and the result now follows from Proposition \ref{thm:EquivalentDefinitionIdealRLCT}(3).
\end{proof}

Our tools now allow us to prove Theorem \ref{thm:SosnondegenerateRLCT}. As a special case, we~have a formula for the RLCT of a monomial ideal with respect to a monomial amplitude function. The analogous formula for \emph{complex} log canonical thresholds~of monomial ideals was discovered and proved by Howald \cite{H}.


\begin{proof}[Proof of Theorem~\ref{thm:SosnondegenerateRLCT}]
If the ideal $I$ is sos-nondegenerate, then the equality follows from Proposition \ref{thm:EquivalentDefinitionIdealRLCT}, Theorem \ref{thm:VarchenkoDesingular} and Theorem \ref{thm:ScaledDistance}. For all other ideals, the inequality is the result of Proposition \ref{thm:BoundByMonomial} and Proposition \ref{thm:MonomialRLCT}.
\end{proof}

\begin{remark}
Define the principal part $f_\mathcal{P}$ of $f$ to be $\sum_\alpha c_\alpha \omega^\alpha$ where the sum is over all $\alpha$ lying in some compact face $\gamma$ of $\mathcal{P}(f)$. The above theorems imply that if $f$ is nondegenerate, then $\RLCT_0\,f = \RLCT_0\,f_\mathcal{P}$. However, the latter is not true in general. For instance, if $f = (x+y)^2+y^4$, then $f_\mathcal{P} = (x+y)^2$ but $\RLCT_0\,f = (3/4,1)$ and $\RLCT_0\,f_\mathcal{P} = (1/2,1)$.
\end{remark}

Our first corollary shows that the BIC is a special case of Theorem \ref{thm:Watanabe}.
\begin{cor}
If $f \in \mathcal{A}_0(\R^d)$ has a local minimum at the origin with $f(0)=0$ and its Hessian $(\partial^2 f/\partial \omega_i \partial \omega_j)$ is full rank, then $\RLCT_0\, f = (d/2,1)$.
\end{cor}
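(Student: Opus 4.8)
The plan is to reduce to the isotropic quadratic $\omega_1^2+\cdots+\omega_d^2$ by an equivalence argument, and then read the threshold off a Newton polyhedron. Since the origin is a local minimum of $f$ with $f(0)=0$, the Hessian $H=(\partial^2 f/\partial\omega_i\partial\omega_j)(0)$ is positive semidefinite, and the full-rank hypothesis makes it positive definite. Taylor's theorem gives $f(\omega)=\tfrac12\,\omega^\top H\omega+o(|\omega|^2)$; since $\tfrac12\,\omega^\top H\omega\geq\tfrac12\lambda_{\min}(H)|\omega|^2$ with $\lambda_{\min}(H)>0$, on a sufficiently small ball around the origin there are constants $c_1,c_2>0$ with $c_1\sum_{i=1}^d\omega_i^2\leq f(\omega)\leq c_2\sum_{i=1}^d\omega_i^2$. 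Thus $f$ is equivalent (in the sense preceding Proposition~\ref{thm:AsympUpBound}) to $\omega_1^2+\cdots+\omega_d^2$ near $0$, and Corollary~\ref{thm:AsympBound} yields $\RLCT_0\,f=\RLCT_0(\omega_1^2+\cdots+\omega_d^2)$.

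It then remains to compute $\RLCT_0(\omega_1^2+\cdots+\omega_d^2;1)$. I would observe that $g=\omega_1^2+\cdots+\omega_d^2$ is nondegenerate in the sense of Section~\ref{sec:Newton}: every face polynomial $g_\gamma$ is a sum of squares of distinct coordinate monomials, hence has no zero in $(\R^*)^d$. By Theorem~\ref{thm:VarchenkoDesingular} the toric blow-up along a smooth simplicial refinement of the normal fan desingularizes $g$, so Theorem~\ref{thm:VarchenkoDistance} applies: $\RLCT_0\,g=(1/l,\theta)$, where $l$ is the distance of $\mathcal{P}(g)$ and $\theta$ its multiplicity. Here $\mathcal{P}(g)={\rm conv}\{2e_1,\ldots,2e_d\}+\R^d_{\geq0}=\{x\in\R^d_{\geq0}:\textstyle\sum_i x_i\geq2\}$, which meets the diagonal at $(2/d,\ldots,2/d)$, so $l=2/d$ and $1/l=d/2$; the face at that intersection is the facet $\{\sum_i x_i=2\}$, of codimension $1$, so $\theta=1$. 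Combining with the first paragraph gives $\RLCT_0\,f=(d/2,1)$.

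An alternative for the second step, using the theorem just proved, is to pass to the monomial ideal $I=\langle\omega_1,\ldots,\omega_d\rangle$: since $\omega_1^2+\cdots+\omega_d^2$ is the sum of squares of the generators of $I$, the observation following Corollary~\ref{thm:AsympBound} gives $\RLCT_0(\omega_1^2+\cdots+\omega_d^2;1)=(\lambda,\theta)$ iff $\RLCT_0(I;1)=(2\lambda,\theta)$; as $I$ is sos-nondegenerate by Proposition~\ref{thm:MonomialRLCT}, Theorem~\ref{thm:SosnondegenerateRLCT} with $\tau=0$ gives $\RLCT_0(I;1)=(1/l_0,\theta_0)=(d,1)$ from $\mathcal{P}(I)=\{x\in\R^d_{\geq0}:\sum_i x_i\geq1\}$. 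One could equally iterate the sum-of-ideals formula of Proposition~\ref{thm:DisjointVars} over the one-variable computation $\RLCT(\langle\omega_i\rangle;1)=(1,1)$. In all routes the only step requiring genuine care is the pinching estimate $c_1|\omega|^2\leq f(\omega)\leq c_2|\omega|^2$ of the first paragraph (equivalently, an appeal to the analytic Morse lemma); once that is in place the rest is Newton-polyhedron bookkeeping, so I do not anticipate a real obstacle.
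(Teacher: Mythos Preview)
Your argument is correct. The paper's own proof takes a slightly different but closely related route: rather than pinching $f$ between multiples of $\sum_i\omega_i^2$ and invoking Corollary~\ref{thm:AsympBound}, it performs a linear change of coordinates diagonalizing the (positive definite) Hessian so that $f=\omega_1^2+\cdots+\omega_d^2+O(\omega^3)$, and then applies the Newton-polyhedron criterion directly to this transformed $f$. Since all cubic and higher terms lie strictly above the facet $\{\sum_i\alpha_i=2\}$, the compact faces and face polynomials of $\mathcal{P}(f)$ are exactly those of $\sum_i\omega_i^2$, so $f$ is nondegenerate and one reads off $(d/2,1)$ in one stroke. Your equivalence approach trades the (easy) verification that the $O(\omega^3)$ tail does not disturb the Newton polyhedron for an appeal to Corollary~\ref{thm:AsympBound}; the paper's approach trades that appeal for the observation that RLCT is invariant under a linear change of variables. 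Both are short and valid; your alternative via $\RLCT_0\langle\omega_1,\ldots,\omega_d\rangle=(d,1)$ and Theorem~\ref{thm:SosnondegenerateRLCT} is also fine and is perhaps the cleanest way to finish.
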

\begin{proof}
Because its Hessian is full rank, there is a linear change of variables such that $f = \omega_1^2+\cdots+\omega_d^2+O(\omega^3)$. Thus, $f$ is nondegenerate and the Newton polyhedron $\mathcal{P}(f)$ has distance $l=2/d$ with $\theta=1$.
\end{proof}
\begin{cor}
Let $I$ be the ideal $\langle f_1, \ldots, f_s \rangle$, and suppose the Jacobian matrix $(\partial f_i/\partial \omega_j)$ has rank $r$ at $0$. Then, $\RLCT_0\,I \leq (\frac{1}{2}(r+d), 1)$.
\end{cor}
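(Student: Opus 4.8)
Throughout I assume $0\in\mathcal{V}(I)$, i.e. every $f_i(0)=0$; this is implicit in the statement, since otherwise $\sum_i f_i^2$ is bounded away from $0$ near the origin and $\RLCT_0 I=\infty$. The plan is to straighten $r$ of the generators into coordinate functions and then read off the answer from the disjoint-variables formula. Since the Jacobian has rank $r$ at $0$, after relabeling the $f_i$ the differentials $df_1(0),\ldots,df_r(0)$ are linearly independent; complete them by $d-r$ of the coordinate differentials $d\omega_j$ to a basis of the cotangent space at $0$, and relabel the $\omega_j$ so these are $d\omega_{r+1},\ldots,d\omega_d$. Then $\phi(\omega)=(f_1(\omega),\ldots,f_r(\omega),\omega_{r+1},\ldots,\omega_d)$ has invertible Jacobian at $0$, so by the analytic inverse function theorem $\phi$ is an analytic isomorphism of a small neighborhood $W$ of $0$ onto a neighborhood $M$ of $0$; set $\rho=\phi^{-1}$. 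Applying Proposition~\ref{thm:ChangeOfVar} with $\varphi=1$ (note $\rho^{-1}(0)=\{0\}$), and Lemma~\ref{thm:LocalRLCT} to the function $\sum_i(f_i\circ\rho)^2$ to absorb the nowhere-zero smooth factor $|\rho'|$, we get $\RLCT_0 I=\RLCT_0(\rho^*I)$, where $\rho^*I=\langle\mu_1,\ldots,\mu_r,g_{r+1},\ldots,g_s\rangle$ with $g_i:=f_i\circ\rho$ (since $f_i\circ\rho(\mu)=\mu_i$ for $i\le r$).

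Next I exhibit the structural containment $\rho^*I\subseteq\langle\mu_1,\ldots,\mu_r\rangle+\mathfrak{m}_Y^2$, where $\mathfrak{m}_Y=\langle\mu_{r+1},\ldots,\mu_d\rangle$. Writing $g_i$ as $g_i(0,\ldots,0,\mu_{r+1},\ldots,\mu_d)=:\tilde g_i(\mu_{r+1},\ldots,\mu_d)$ plus a combination $\sum_{j\le r}\mu_j h_{ij}$ with $h_{ij}$ analytic, we have $\rho^*I=\langle\mu_1,\ldots,\mu_r\rangle+\langle\tilde g_{r+1},\ldots,\tilde g_s\rangle$. By the chain rule $\nabla g_i(0)=\nabla f_i(0)\cdot(J\rho)(0)$ with $(J\rho)(0)$ invertible, so the rows $\nabla g_i(0)$ have the same rank $r$ as the $\nabla f_i(0)$; since the first $r$ rows are already $e_1,\ldots,e_r$, each $\nabla g_i(0)$ for $i>r$ lies in the span of $e_1,\ldots,e_r$. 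Hence $\partial\tilde g_i/\partial\mu_j(0)=\partial g_i/\partial\mu_j(0)=0$ for all $j>r$, so each $\tilde g_i$ lies in $\mathfrak{m}_Y^2$, giving the claimed containment.

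To finish, use the containment monotonicity from the proof of Proposition~\ref{thm:SameIdeal} (if $J\subseteq K$ then $\RLCT_0 J\le\RLCT_0 K$, via Cauchy--Schwarz and Proposition~\ref{thm:AsympUpBound}): $\RLCT_0 I=\RLCT_0(\rho^*I)\le\RLCT_0\bigl(\langle\mu_1,\ldots,\mu_r\rangle+\mathfrak{m}_Y^2\bigr)$. The right-hand ideal is the sum of two monomial ideals in the disjoint variable blocks $\{\mu_1,\ldots,\mu_r\}$ and $\{\mu_{r+1},\ldots,\mu_d\}$, so by Proposition~\ref{thm:DisjointVars} (applied over small coordinate boxes, using that the global threshold of a monomial ideal equals its threshold at the origin) it equals $(\lambda_X+\lambda_Y,\theta_X+\theta_Y-1)$. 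By Theorem~\ref{thm:SosnondegenerateRLCT} with $\tau=0$ (the Newton-polyhedron distance), $\RLCT_0\langle\mu_1,\ldots,\mu_r\rangle=(r,1)$ in $r$ variables and $\RLCT_0\,\mathfrak{m}_Y^2=\bigl(\tfrac{d-r}{2},1\bigr)$ in $d-r$ variables. Combining, $\RLCT_0 I\le\bigl(r+\tfrac{d-r}{2},\,1+1-1\bigr)=\bigl(\tfrac{r+d}{2},1\bigr)$. The degenerate cases are handled directly: if $r=0$ then $I\subseteq\mathfrak{m}^2$, so $\RLCT_0 I\le(d/2,1)$; if $r=d$ then $\rho^*I=\mathfrak{m}$, so $\RLCT_0 I=(d,1)=\bigl(\tfrac{r+d}{2},1\bigr)$.

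The main obstacle is the reduction of the second paragraph: verifying that after straightening $r$ generators into coordinates (via the analytic inverse function theorem, so that everything stays in the analytic category), the rank-$r$ hypothesis forces the ``transverse'' parts $\tilde g_i$ of the remaining generators to vanish to order at least two, i.e. $\rho^*I\subseteq\langle\mu_1,\ldots,\mu_r\rangle+\mathfrak{m}_Y^2$. Everything afterward is a direct application of Propositions~\ref{thm:ChangeOfVar},~\ref{thm:SameIdeal},~\ref{thm:DisjointVars} and Theorem~\ref{thm:SosnondegenerateRLCT}. A secondary subtlety worth flagging is that bounding $\mathcal{P}(I)$ in the original coordinates via Proposition~\ref{thm:BoundByMonomial} alone is not enough --- for $I=\langle\omega_1+\cdots+\omega_d\rangle$ it yields only the weaker bound $(d,1)$ --- so the change of variables is genuinely required.
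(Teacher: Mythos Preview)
Your proof is correct, but it takes a more elaborate route than the paper's. The paper uses only a \emph{linear} change of coordinates: since the Jacobian row space at $0$ is $r$-dimensional, one can choose coordinates so that every $\nabla f_i(0)$ lies in $\mathrm{span}(e_1,\ldots,e_r)$; then the only degree-one monomials appearing in any element of $I$ are $\omega_1,\ldots,\omega_r$, and every other monomial has total degree $\geq 2$. This places $\mathcal{P}(I)$ inside the halfspace $\alpha_1+\cdots+\alpha_r+\tfrac12(\alpha_{r+1}+\cdots+\alpha_d)\geq 1$, so the distance satisfies $l\geq 2/(r+d)$, and a single appeal to Theorem~\ref{thm:SosnondegenerateRLCT} gives $\RLCT_0 I\leq(1/l,\theta)\leq((r+d)/2,1)$.

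Your approach instead invokes the analytic inverse function theorem to straighten $r$ generators exactly into coordinate functions, then uses containment monotonicity together with the disjoint-variables formula (Proposition~\ref{thm:DisjointVars}) on $\langle\mu_1,\ldots,\mu_r\rangle+\mathfrak{m}_Y^2$. This works and is perhaps more structurally transparent, but it is heavier machinery than needed: a linear change of variables already aligns the Newton polyhedron well enough for the halfspace bound, and the constant Jacobian avoids any fuss with Lemma~\ref{thm:LocalRLCT} or Proposition~\ref{thm:ChangeOfVar}. Your closing remark that ``the change of variables is genuinely required'' is right, but the paper shows a linear one suffices.
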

\begin{proof}
Because the rank of $(\partial f_i/\partial \omega_j)$ is $r$, there is a linear change of variables such that the only linear monomials appearing in $I$ are $\omega_1, \ldots, \omega_r$. It follows that $\mathcal{P}(I)$ lies in the halfspace 
$\textstyle \alpha_1+\cdots + \alpha_r + \frac{1}{2}(\alpha_{r+1}+\cdots+\alpha_d) \geq 1$ and its distance is at least $1/(r+\frac{d-r}{2}) =  2/(r+d)$. 
\end{proof}

\section{Applications to Statistical Models}
\label{sec:Examples}

In this section, we use our tools to compute the learning coefficients of
a na\"{i}ve Bayesian network $\mathcal{M}$ with two
ternary random variables and two hidden states. It was designed by Evans,
Gilula and Guttman \cite{EGG} for~investigating connections between the
recovery time of 132 schizophrenic patients and the
frequency of visits by their relatives. Their data is summarized in the
$3{\times}3$ contingency table 
$$ \begin{matrix}
& 2 {\leq} Y {<} 10 & 10 {\leq} Y {<} 20 & 20 {\leq} Y & &{\it Totals} \cr
\hbox{Visited regularly \!\!} & 43 & 16 & 3 && {\it 62} \cr
\hbox{Visited rarely}      & 6 & 11 & 10  && {\it 27} \cr
\hbox{Visited never}     & 9 &  18 & 16 && {\it 43} \cr
{\it Totals}& {\it 58} & {\it 45} & {\it 29} & & {\bf 132} \cr
\end{matrix}
$$
which we store as a $3{\times}3$ matrix $\hat{q}$ of relative frequencies. The model is given by
\begin{align*}
p: \quad \Omega &=
\Delta_1{\times}\Delta_2{\times}\Delta_2{\times}\Delta_2{\times}\Delta_2 \quad
\rightarrow\quad \Delta_8\\
 \omega &= (t,a_1, a_2, b_1, b_2, c_1, c_2, d_1, d_2) \quad \mapsto\quad (p_{ij})\\
p_{ij} &= ta_ib_j + (1-t)c_id_j,\quad i,j \in \{0,1,2\} 
\end{align*}
where $a_0=1\!-\!a_1\!-\!a_2$, $a=(a_0, a_1, a_2)\! \in\! \Delta_2$
and similarly for $b, c$ and $d$. Hence, a $3{\times}3$ matrix in
the model is a convex combination of two rank one matrices, so it has
rank at most two. The marginal likelihood of the data (after discarding a constant multiplicative factor) is the integral
\begin{align*}
 \mathcal{I}= \int_{\Omega} p_{00}^{43}\, p_{01}^{16}\, p_{02}^{3} \,p_{10}^{6}\,
  p_{11}^{11} \,p_{12}^{10} \,p_{20}^{9} \,p_{21}^{18} 
\,p_{22}^{16}\,\, d\omega
\end{align*}
which was computed exactly by Sturmfels, Xu and the author \cite{LSX}.

We now estimate
this integral
using Watanabe's asymptotic formula for the log likelihood integral in Theorem
\ref{thm:Watanabe}. We assume that the data $\hat{q}$ was generated by some true distribution $q = (q_{ij}) \in \R^{3{\times}3}$ in the model. 
Ideally, we want $q$ to be equal to the matrix $\hat{q}$ of 
relative frequencies, but in general, the data $\hat{q}$ rarely lies
in the model. In this example, the matrix
$\hat{q}$ is not in the model because it is full rank. However, we \emph{should} be able to find a distribution $q$ in the model
that is close to $\hat{q}$, because in practice, we want to study models which
describe the data well. A good candidate for $q$ is the maximum
likelihood distribution. Using the EM algorithm, this distribution is
\begin{align*}
q = \frac{1}{132}\left(\begin{array}{ccc}
43.00153927 & 15.99813189 & 3.000328847 \\
5.979732739  &  11.12298188 &   9.897285383 \\
9.018728012  &  17.87888620   & 16.10238577
\end{array}\right)
\end{align*}
which comes from the maximum likelihood estimate
\begin{align*}
t &= 0.5129202328\\
                             (a_1,a_2) &= (0.09139459898, 0.3457903589),\\
                             (b_1,b_2) &= (0.1397061214, 0.4386217768), \\
                              (c_1,c_2) &= (0.8680689680,
                              0.05580725171), \\
                              (d_1,d_2) &= (0.7549807403,0.2380125694).
\end{align*}
Note that the ML distribution $q$ is indeed very close to the data $\hat{q}$. 

Our next theorem summarizes how the asymptotics of $\log Z(N)$ depend on
$q$. Let $S_i$ denote the set of rank $i$
matrices in $p(\Omega)$, and $S_i^* \subset S_i$ be the matrices with positive entries. Before we prove this theorem, let us apply it to our
statistical problem. Using the exact value of $\mathcal{I}$ computed by Lin--Sturmfels--Xu \cite{LSX}, we~get
\begin{align*}
  (\,\log \mathcal{I}\,)_{\rm exact} \,\,=\,\, -273.1911759.
\end{align*} 
Meanwhile, if the BIC was erroneously applied with the dimension $d=9$ of the parameter space, we
would have
\begin{align*}
  (\,\log \mathcal{I}\,)_{\rm BIC} \,\, = \,\, -280.7992160.
\end{align*} 
On the other hand, by calculating the real log canonical threshold of the polynomial ideal $\langle p(\omega) - q\rangle$,  we find that the learning coefficient of the model at the ML distribution $q$ is $(\lambda, \theta) = (7/2,1)$. This gives us the approximation
\begin{align*}
  (\,\log \mathcal{I}\,)_{\rm RLCT}\,\, \approx \,\, -275.9164140
\end{align*}
which is closer than
the BIC to
the exact value of $\log \mathcal{I}$.

Our proposal to use the ML distribution as the true distribution $q$ is admittedly simplistic, given that noise in the data will almost surely bring us to some $q \in S_2^*$. Nonetheless, our next theorem proves that the learning coefficient is always smaller than the $(9/2, 1)$ prescribed by the BIC. For deeper statistical discussions, the reader should turn to Drton and Plummer \cite{DP} where they addressed the paradox of circular reasoning in requiring true parameter values for the asymptotic approximation of Bayesian integrals. They also proposed a novel algorithm where the marginal likelihood is estimated as a weighted average of the contributions from all true distributions. We hope that mathematical analyses such as our next theorem will help inform these kinds of discussions, and provide useful estimates and bounds for a variety of statistical computations.

\begin{thm} \label{thm:SchizoCoef} 
The learning coefficient $(\lambda, \theta)$ of the model at $q>0$ is given by
$$
  (\lambda, \theta) = \left\{
\begin{array}{ll}
(5/2,1) & \mbox{ if }q \in S_1^*,  \\
(7/2,1) & \mbox{ if }q \in S_{2}^*.
\end{array}
\right.
$$
Therefore, asymptotically as $N \rightarrow \infty$, 
$$\log Z(N) \,\,=\,\, N \sum_{i,j} \hat{q}_{ij} \log q_{ij} -
\lambda \log N + (\theta - 1) \log \log N \,\,+ \,\,\eta_N$$
where $\hat{q}$ is the matrix of relative frequencies of the data and $\eta_N$ is a random variable whose expectation $\mathbb{E}[\eta_N]$ converges to a constant.
\end{thm}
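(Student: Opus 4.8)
The plan is to apply Theorem~\ref{thm:PolyRLCT} with $\varphi=\varphi_a=1$, which identifies $(2\lambda,\theta)$ with $\min_{x\in\mathcal V}\RLCT_{\Omega_x}(I;1)$ for the ideal $I=\langle\,p_{ij}(\omega)-q_{ij}:i,j\in\{0,1,2\}\,\rangle$ and its zero-locus $\mathcal V=p^{-1}(q)\subset\Omega$. Once this real log canonical threshold is known, the displayed asymptotic expansion of $\log Z(N)$ is immediate from Theorem~\ref{thm:Watanabe}, with the sample relative frequencies $U_{ij}$ equal to $\hat q_{ij}$. So the entire content is the computation of $\RLCT_\Omega(I;1)$, and it splits into the two cases precisely because the shape of the fiber $\mathcal V$ changes with the rank of $q$.

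First I would describe $\mathcal V$ in each case and compute the threshold at its generic points. If $\mathrm{rank}\,q=2$, then in any decomposition $q=t\,ab^{\top}+(1-t)\,cd^{\top}$ the two rank-one summands are never proportional (otherwise $q$ would have rank $1$), $\mathcal V$ is $2$-dimensional, and at each of its interior points the differential of $p$ has full rank $7$; there $I$ is, after an analytic change of coordinates, the codimension-$7$ ideal of the smooth germ $\mathcal V$, so $\RLCT_{\Omega_x}(I;1)=(7,1)$. If $q=uv^{\top}$ has rank $1$, then $\mathcal V$ acquires larger components, in particular the two $4$-dimensional families $\{t=0,\,c=u,\,d=v\}$ and $\{t=1,\,a=u,\,b=v\}$, together with the $1$-dimensional diagonal $\{a=c=u,\,b=d=v\}$ and secant-type pieces coming from collinear triples of rank-one stochastic matrices; at a generic point of a $4$-dimensional component the differential of $p$ has rank $5$ and $I$ is locally the codimension-$5$ ideal of that smooth component, giving $\RLCT_{\Omega_x}(I;1)=(5,1)$. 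This already yields $(2\lambda,\theta)\le(7,1)$, respectively $\le(5,1)$.

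It then remains to show that no point of $\mathcal V$ has strictly smaller threshold, so that these generic values are the minimum. The delicate points are those at which $p$ degenerates further --- the loci $a=c$ or $b=d$, where it factors through a smaller model --- and the points of $\mathcal V$ lying on $\partial\Omega$, where some entry of $t,a,b,c,d$ is $0$ or $1$. At each such $x$ I would choose local analytic coordinates that split off all directions in which some $p_{ij}-q_{ij}$ has an independent linear part; using Proposition~\ref{thm:ChangeOfVar} and Proposition~\ref{thm:DisjointVars} this reduces the threshold computation to that of a residual ideal in few variables whose leading forms are the Segre-tangent quadrics $u_iB_j+v_jA_i$ together with controlled higher-order corrections. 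The threshold of the residual ideal is then obtained either by recognising it as monomial or sos-nondegenerate and applying Theorem~\ref{thm:SosnondegenerateRLCT} and Theorem~\ref{thm:ScaledDistance} to the $\tau$-distance of its Newton polyhedron, or by an explicit monomialising substitution; boundary points are handled in addition by Proposition~\ref{thm:EffectOfBoundary}, which reduces to orthant neighborhoods and can only raise the threshold. Verifying that every such local threshold is $\ge(7,1)$, respectively $\ge(5,1)$, then gives $(\lambda,\theta)=(7/2,1)$ on $S_2^*$ and $(\lambda,\theta)=(5/2,1)$ on $S_1^*$.

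The main obstacle is this last case analysis at the most collapsed fiber points --- in the rank-$1$ case, where a $t\in\{0,1\}$ boundary stratum, the diagonal, and the secant components can meet simultaneously, the residual ideal is neither a regular sequence nor transparently sos-nondegenerate, so an honest resolution is needed there, and one must track carefully how the boundary inequalities $t\ge0,\ a_i\ge0,\dots$ interact with the singular directions. It is exactly this interaction that pins the learning coefficient strictly below the naive value $d/2=9/2$ that blind use of the BIC would give.
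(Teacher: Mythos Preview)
Your plan is correct and matches the paper's approach. The paper organizes the work slightly differently: it first proves a standalone Proposition~\ref{thm:SchizoProp} that stratifies all of $\Omega$ (not one fiber at a time) into subsets $\Omega_u,\Omega_{m0kl},\Omega_{m1},\Omega_{m10},\Omega_{m2}\setminus\Omega_{m21},\ldots$ and records the exact local RLCT on each stratum via exactly the kind of linearising substitutions you describe, after which Theorem~\ref{thm:SchizoCoef} follows from four one-line claims $p(\Omega_u)=S_1$, $p(\Omega_{m0})\subset S_1$, $p(\Omega_{m1})\subset S_1$, $p(\Omega_{m21})\notin S_2^*$ identifying which strata each fiber can meet. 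For the boundary analysis the paper does not rely on Proposition~\ref{thm:EffectOfBoundary} but on explicit monomial blowups together with an orthant lemma (Lemma~\ref{thm:EpsCone}) that pins down the boundary threshold exactly; your inequality from Proposition~\ref{thm:EffectOfBoundary} suffices for the lower bound on the minimum, but bear in mind that in the rank-$1$ case the minimum $(5,1)$ is attained only on the boundary face $t\in\{0,1\}$, so the upper bound genuinely requires the direct computation you give there (five independent linear generators, one of which is $t$, so the half-space constraint $t\geq 0$ leaves the threshold unchanged by symmetry).
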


We postpone the proof of this theorem to the end of the
section. Let us begin with a few remarks about our approach to
this problem. Firstly, Theorem
\ref{thm:PolyRLCT} states that the learning coefficient $(\lambda, \theta)$ of the statistical model is given by
$$(2\lambda, \theta) = \min_{\omega^* \in \mathcal{V}}
\RLCT_{\Omega_{\omega^*}} \,\langle p(\omega) - q \rangle $$
where $\mathcal{V}$ is the fiber $ p^{-1}(q) = \{\omega \in \Omega : p(\omega) =
q\}$ over $q$. Instead of focusing on a
fixed $q$ and its fiber $\mathcal{V}$, let us vary the parameter $\omega^*$ over all
of $\Omega$. For each
$\omega^* \in \Omega$, we translate $\Omega$ so that $\omega^*$
is the origin and compute the RLCT of the ideal $\langle
p(\omega+\omega^*) - p(\omega^*)\rangle$. This is the content of
Proposition \ref{thm:SchizoProp}. The proof of Theorem
\ref{thm:SchizoCoef} will then consist of minimizing these RLCTs over
the fiber $\mathcal{V}$ for each $q$ in the model.

Secondly, in our computations, we will often be choosing different
generators for our ideal and making appropriate changes of
variables. Generators with few terms
and small total degree are often highly desired. Another useful trick is to multiply or divide the
generators by functions $f(\omega)$ satisfying $f(0) \neq 0$. Such
functions are units in the ring $\mathcal{A}_0$ of real analytic
functions so this multiplication or division will not change the ideal
generated. 

We will perform many of the computations by hand to demonstrate how~the various properties from Section \ref{sec:RLCT} can be applied. At points in the proof where RLCTs of monomial ideals are required, the {\tt Singular} library from Section~\ref{sec:Introduction} comes in useful. We hope that some day the computation of learning coefficients for statistical models will be fully automated.

Thirdly, for the full proof of Proposition~\ref{thm:SchizoProp}, we will have to analyze interactions between the model singularities and the boundary of the parameter space. Some of these interactions are messy. To improve the readability of the paper, we moved the detailed proof to the appendix while retaining some interesting computations in this section.

Finally, we come to our main proposition. Let us define the following subsets of~$\Omega$. These subsets stratify $\Omega$ according to the real log canonical threshold in the manner described in Conjecture \ref{conj:stratification}.
$$
\begin{array}{lcl}
\Omega_{u} &=& \{\omega^* \in \Omega : t^* \in \{0,1\} \}\\
\Omega_{m} &=& \{\omega^* \in \Omega : t^* \notin \{0,1\} \}\\
\Omega_{m0} &=& \{\omega^* \in \Omega_m : a^* = c^*, b^* = d^* \} \\
\Omega_{m0kl} &=& \{\omega^* \in \Omega_{m0} : \#\{i:a_i^* = 0\}=k, \#\{i:b_i^* = 0\}=l\} \\
\Omega_{m1} &=& \{\omega^* \in \Omega_m : (b^*\neq d^* , a^* = c^*) \mbox{ or } (a^* \neq c^*, b^* = d^*)\} \\
\Omega_{m10} &=& \{\omega^* \in \Omega_{m1} : (a^* = c^*,\exists\, i\,\, a_i^*=0) \mbox{ or }( b^* = d^*, \exists\, i\,\, b_i^* = 0) \} \\
\Omega_{m2} &=& \{\omega^* \in \Omega_m : a^*\neq c^*, b^* \neq d^* \} \\
\Omega_{m2ad} &=& \{\omega^* \in \Omega_{m2} : \exists\, i,j\,\, a_i^*=d_j^*=0, c_i^* \neq 0, b_j^* \neq 0 \} \\
\Omega_{m2bc} &=& \{\omega^* \in \Omega_{m2} : \exists\, i,j\,\, b_i^*=c_j^*=0, d_i^* \neq 0, a_j^* \neq 0 \} \\
\Omega_{m21} &=& \Omega_{m2ad} \cup \Omega_{m2bc} \\
\Omega_{m22} &=& \Omega_{m2ad} \cap \Omega_{m2bc}.
\end{array}
$$

\begin{prop} \label{thm:SchizoProp}
Given $\omega^* \in \Omega$, let $I$ be the ideal $\langle p(\omega+\omega^*) - p(\omega^*) \rangle$. Then,
$$
\RLCT_0\, I = \left\{
\begin{array}{ll}
(5,1) & \mbox{ if }\omega^* \in \Omega_{u},  \\
(6,2) & \mbox{ if } \omega^* \in \Omega_{m000}, \\
(6,1) & \mbox{ if } \omega^* \in \Omega_{m010} \cup \Omega_{m001} \cup \Omega_{m020} \cup \Omega_{m002}, \\
(7,2) & \mbox{ if } \omega^* \in \Omega_{m011}, \\
(7,1) & \mbox{ if } \omega^* \in \Omega_{m012} \cup \Omega_{m021}, \\
(8,1) & \mbox{ if } \omega^* \in \Omega_{m022} , \\
(6,1) & \mbox{ if }\omega^* \in \Omega_{m1} \setminus \Omega_{m10},\\
(7,1) & \mbox{ if }\omega^* \in \Omega_{m10},\\
(7,1) & \mbox{ if }\omega^* \in \Omega_{m2} \setminus\Omega_{m21}, \\
(8,1) & \mbox{ if }\omega^* \in  \Omega_{m21} \setminus \Omega_{m22}, \\
(9,1) & \mbox{ if }\omega^* \in \Omega_{m22}.
\end{array}
\right.
$$
\end{prop}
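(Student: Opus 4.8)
The plan is to work through the eleven cases listed above one at a time. In each case I would translate $\omega^*$ to the origin, replace the nine generators $p_{ij}(\omega+\omega^*)-p_{ij}(\omega^*)$ of $I$ by a more tractable generating set --- using Proposition \ref{thm:SameIdeal} to change generators freely, dividing out unit factors, and taking $\R$-linear combinations to expose linear forms --- and then apply a finite sequence of changes of variables, namely linear substitutions followed by toric blow-ups, to bring $I$ into a monomial (or at least sos-nondegenerate) form. Proposition \ref{thm:ChangeOfVar} then transports $\RLCT_0$ through each substitution together with the usual monomial Jacobian factor $\mu^\tau$, Theorem \ref{thm:SosnondegenerateRLCT} with the $\tau$-distance formula evaluates the RLCT of the resulting monomial ideal from its Newton polyhedron, and Proposition \ref{thm:DisjointVars} handles the cases where the ideal decouples into disjoint blocks of variables. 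The {\sc Singular} library of Section \ref{sec:Introduction} takes care of the bulkier monomial computations.

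Every case starts from the same expansion of $p_{ij}(\omega+\omega^*)-p_{ij}(\omega^*)$ into monomials in the local coordinates $t, a_i, b_j, c_i, d_j$; thinking of the model matrix as the rank-two matrix $t\,ab^{\top}+(1-t)\,cd^{\top}$ makes the geometry transparent. The linear part of the generators involves only a subset of these coordinates, and the number of independent linear forms among the nine generators --- hence how many variables can be eliminated before any blow-up --- is governed precisely by the conditions cutting out the stratum: whether $a^*=c^*$, whether $b^*=d^*$, how many entries of $a^*$ and $b^*$ vanish, and whether $t^*\in\{0,1\}$. Eliminating against the linear forms leaves a residual ideal whose leading behaviour becomes monomial after one or two blow-ups, and this is the mechanism producing the list of pairs in the statement. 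I would organise the cases by increasing depth in the boundary: for $\Omega_u$ (where $t$, or $1-t$, is a genuine small parameter and the residual ideal decouples) a short computation should give $(5,1)$; for the rank-one strata $\Omega_{m0kl},\Omega_{m1},\Omega_{m10}$ the relevant picture is that of the Segre variety $\PP^2{\times}\PP^2$ and its tangent cone, with each additional vanishing coordinate $a_i^*=0$ or $b_j^*=0$ contributing one more coordinate hyperplane of $\Omega$ through the origin and raising $\RLCT_0 I$ through the values $(6,2),(6,1),(7,2),(7,1),(8,1)$; and for the rank-two strata a linear change of variables already reaches a nondegenerate form giving $(7,1)$ on $\Omega_{m2}\setminus\Omega_{m21}$ (consistent with $\RLCT_0 I\le(\tfrac12(r+d),1)$ when the Jacobian has rank $r$), while $\Omega_{m21}$ and $\Omega_{m22}$ acquire further boundary contributions yielding $(8,1)$ and $(9,1)$.

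The step I expect to be the main obstacle --- and the reason the full verification is relegated to the appendix --- is the bookkeeping for the interaction between the singularity of $I$ and the boundary of $\Omega$. At an interior $\omega^*$ the monomializing change of variables may be chosen freely, but at a boundary point the same substitution must simultaneously monomialize the active inequalities $a_i\ge0$, $b_j\ge0$, $t\ge0$, and so on, so that in the resolved coordinates the region of integration is a union of coordinate orthants, exactly as in the proof of Lemma \ref{thm:LocalRLCT}; the value of the RLCT then genuinely depends on which orthants occur, as Example \ref{ex:RLCTdependsonboundary} already shows in two variables. Thus the technical heart of the argument is to check, for each of the boundary strata $\Omega_{m0kl}$, $\Omega_{m10}$, $\Omega_{m21}$, $\Omega_{m22}$, that the chosen substitution respects the relevant inequalities and that the resulting monomial ideal is sos-nondegenerate, so that Theorem \ref{thm:SosnondegenerateRLCT} applies with equality rather than merely as an upper bound.
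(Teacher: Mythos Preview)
Your plan is correct and matches the paper's approach almost exactly: rewrite the nine generators in the form displayed at the start of the paper's proof, make the linear substitution in $b_i,c_i$ (Case~1) or $c_i,d_i$ (Case~2) to split off the ideal $\langle b_1',b_2',c_1',c_2'\rangle$ via Proposition~\ref{thm:DisjointVars}, and then treat the residual ideal in $a_i,d_i$ by further substitutions and blow-ups, reading off the RLCT from Theorem~\ref{thm:SosnondegenerateRLCT}.

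One refinement is worth flagging for the boundary analysis. You anticipate that the substitution must simultaneously monomialize the active inequalities so that the resolved region is a union of coordinate orthants, as in Lemma~\ref{thm:LocalRLCT}. The paper does not actually achieve this, and trying to force it would cost you many extra blow-ups. Instead the appendix proves and repeatedly invokes a helper lemma (Lemma~\ref{thm:EpsCone}): to compute $\RLCT_{\Omega_0}(I;\varphi)$ for a monomial ideal $I$ in variables $x_1,\ldots,x_r$, it suffices that the region $\Omega$ \emph{contain} a product $\Omega_1\times\Omega_2$ where $\Omega_1$ is an orthant in $x_1,\ldots,x_r$ and $\Omega_2$ is any small cone in the remaining variables --- so the inequalities in the non-monomial variables need not be monomialized at all, only witnessed by a single direction vector $\xi$. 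The second device is that in several blow-up charts the pulled-back region fails to be full-dimensional at the origin (two active inequalities force a coordinate to vanish), and such charts are simply discarded. Together these two tricks are what keep the boundary bookkeeping to a finite table rather than an open-ended resolution.
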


\begin{proof}[Proof Idea] We give a shortened analysis that ignores the effect of the boundary of $\Omega$ on the RLCTs. The derived RLCTs will be smaller than the actual~ones by Proposition~\ref{thm:EffectOfBoundary}. A full proof involving boundary effects is given in the appendix.

Our ideal $I$ is generated by $g_{ij} = f_{ij}(\omega+\omega^*) - f_{ij}(\omega^*)$ where
$$
f_{ij} = ta_ib_j + (1-t)c_id_j, \quad i,j \in \{0,1,2\}
$$
and $a_0=b_0=c_0=d_0=1$. One can check that $I$ is also generated by $g_{10}, g_{20}$, $g_{01}, g_{02},$ and 
$
g_{ij}-(d_j+d_j^*)g_{i0}-(a_i+a_i^*)g_{0j}, i,j\in\{1,2\}
$
which expand to give
$$
\begin{array}{c}
c_1(t_1^*-t) + a_1(t_0^*+t)+tu_1^*\\
c_2(t_1^*-t) + a_2(t_0^*+t)+tu_2^*\\
d_1(t_1^*-t) + b_1(t_0^*+t)+tv_1^*\\
d_2(t_1^*-t) + b_2(t_0^*+t)+tv_2^*\\
a_1d_1-a_1t_0^*v_1^*+d_1t_1^*u_1^*\\
a_1d_2-a_1t_0^*v_2^*+d_2t_1^*u_1^*\\
a_2d_1-a_2t_0^*v_1^*+d_1t_1^*u_2^*\\
a_2d_2-a_2t_0^*v_2^*+d_2t_1^*u_2^*
\end{array}
$$
where $t_0^* = t^*, t_1^*=1-t^*, u_i^* = a_i^*-c_i^*, v_i^* = b_i^*-d_i^*$. Note that $\sum (a_i + a_i^*) =  1$ and $\sum a_i^* = 1$ so $\sum a_i = 0$ and similarly for $b, c, d$. Also, $\sum u_i^* = \sum a_i^* - c_i^* = 0$. The same is true for $v^*$. We now do a case-by-case analysis.

$ $\\
\noindent \textbf{Case 1: $\omega^* \in \Omega_m$.}
 
This implies $t_0^* \neq 0$ and $t_1^* \neq 0$. Since the indeterminates $b_1,b_2,c_1,c_2$ appear only in the first four polynomials, this suggests the change of variables
$$
\begin{array}{rl}
c_i =& (c_i' -tu_i^*- a_i(t_0^*+t))/(t_1^*-t), \quad i = 1,2\\
b_i =& (b_i' -tv_i^*- d_i(t_1^*-t))/(t_0^*+t), \quad i = 1,2 \\
\end{array}
$$
with new indeterminates $t,a_1,a_2,b_1',b_2',c_1',c_2',d_1,d_2$. In view of Proposition \ref{thm:ChangeOfVar}, the Jacobian determinant of this substitution is a constant, while the pullback ideal can be written as $I_1+I_2$ where $I_1= \langle b_1', b_2', c_1', c_2'\rangle$ and $I_2$ is generated by
$$
\begin{array}{c}
a_1d_1-a_1t_0^*v_1^*+d_1t_1^*u_1^*,\\
a_1d_2-a_1t_0^*v_2^*+d_2t_1^*u_1^*,\\
a_2d_1-a_2t_0^*v_1^*+d_1t_1^*u_2^*,\\
a_2d_2-a_2t_0^*v_2^*+d_2t_1^*u_2^*.
\end{array}
$$
The indeterminates in $I_1$ and $I_2$ are disjoint, so we may apply Proposition~\ref{thm:DisjointVars}. The RLCT of $I_1$ is $(4,1)$. Now, we focus on computing the RLCT of $I_2$.

$ $\\
\noindent \textbf{Case 1.1: $\omega^* \in \Omega_{m1}$.} 

This implies $u^*\neq 0, v^*=0$ or $u^*= 0, v^*\neq 0$. Without loss of generality, we assume $v^*=0, u_1^*\neq 0, u_2^*\neq 0$ ($u_1^*+u_2^*+u_3^*=0$ so at most one of them is zero) and substitute
$$
d_i = (d_i'+a_1t_0^*v_i^*)/(t_1^*u_1^*+a_1), \quad i =1,2. \\
$$ The resulting pullback of $I_2$ is $\langle d_1',d_2'\rangle$. If $\omega^*$ lies in the interior of $\Omega$, we use~either Newton polyhedra or Proposition \ref{thm:DisjointVars} to show that the RLCT of this monomial ideal is $(2,1)$. If $\omega^*$ lies on the boundary of $\Omega$, the situation is more complicated and we analyze it in detail in the appendix.

$ $\\
\noindent \textbf{Case 1.2: $\omega^* \in \Omega_{m2}$.}

This implies $u^*\neq 0, v^*\neq 0$. Without loss of generality, suppose that $u_1^* \neq 0$. If $\omega^* \in \Omega_{m21}$, we further assume that $a_1^*=d_j^*=0, u_1^* \neq 0, v_j^* \neq 0$. Substituting
$$
\begin{array}{rl}
d_i = & (d_i'+a_1t_0^*v_i^*)/(a_1+t_1^*u_1^*), \quad i =1,2 \\
a_2 = & (a_2' + a_1 u_2^*)/u_1^*,
\end{array}
$$
the pullback ideal is $\langle a_2',d_1',d_2'\rangle$ so the RLCT at an interior point is $(3,1)$. 

$ $\\
\noindent \textbf{Case 1.3: $\omega^* \in \Omega_{m0}$.} 

This implies $u_i^* = v_i^* =0$ for all $i$. The pullback ideal can be written as 
$$\langle a_1, a_2 \rangle \langle d_1, d_2\rangle$$ 
whose RLCT over an interior point of $\Omega$ is $(2,2)$ by Proposition \ref{thm:DisjointVars}. 

$ $\\
\noindent \textbf{Case 2: $\omega^* \in \Omega_{u}$.} 

Without loss of generality, assume $t^*=0$ and substitute
$$
\begin{array}{rll}
c_i &= (c_i'- t(a_i+u_i^*))/(1-t) &i = 1,2\\
d_i &= (d_i'- t(b_i+v_i^*))/(1-t) &i = 1,2.\\
\end{array}
$$
The pullback ideal is the sum of $\langle c_1', c_2', d_1', d_2' \rangle$ and
$$
\langle t \rangle \langle a_1+u_1^*, a_2+u_2^* \rangle \langle b_1+v_1^*,b_2+v_2^*\rangle .
$$
The RLCT of the first summand is $(4,1)$. The RLCT of $\langle
t \rangle$ is $(1,1)$ while~that of $\langle a_1+u_1^*, a_2+u_2^*
\rangle$ and $\langle b_1+v_1^*,b_2+v_2^*\rangle$ are at least $(2,1)$
each. By Proposition~\ref{thm:DisjointVars}, the RLCT of their product
is $(1,1)$ and that of the pullback ideal is $(5,1)$.
\end{proof}

\begin{proof}[Proof of Theorem \ref{thm:SchizoCoef}]
Given a matrix $q = (q_{ij})$, the learning coefficient~$(\lambda,
\theta)$ of the model at $q$ is the minimum of RLCTs at points
$\omega^* \in \Omega$ where $p(\omega^*) = q$. The theorem then follows
from Proposition \ref{thm:SchizoProp}, Theorem \ref{thm:Watanabe} and the claims 
\begin{align*}
  p(\Omega_{u}) = S_1, \,\,\, p(\Omega_{m0}) \subset S_1, \,\,\, 
  p(\Omega_{m1}) \subset S_1, \,\,\,  p(\Omega_{m21}) \notin S_{2}^*.
\end{align*}
These four claims are easy to check from the definitions of the subsets of $\Omega$.
\end{proof}

\section{Appendix}

In this section, we give a full proof of Proposition~\ref{thm:SchizoProp} that considers the effect of the boundary of the parameter space $\Omega$ on the RLCTs. The next lemma comes in handy in dealing with
boundary issues. It helps us in computing the RLCTs of monomial ideals at boundary points where the parameter space contains a nice neighborhood $\Omega_1 \times \Omega_2$. Here, $\Omega_1$ is an orthant in the coordinates involved in the monomials of $I$, while $\Omega_2$ is a small cone in the remaining coordinates.
\begin{lem}\label{thm:EpsCone}
Let $\Omega \subset \{(x_1, \ldots, x_d) \in \R^d\}$ be semianalytic. Let $I$ be a monomial ideal and $\varphi$ a monomial function in $x_1, \ldots, x_r$. If there exists a vector $\xi \in \R^{d-r}$ such that $\Omega_1 {\times}\Omega_2 \subset \Omega$ for sufficiently small $\varepsilon$,
$$
\begin{array}{rl}
\Omega_1 &= \{(x_1, \ldots, x_r) \in [0, \varepsilon]^{r} \} \\
\Omega_2 &= \{(x_{r+1}, \ldots, x_d) = t (\xi+\xi ') \mbox{ for } t \in [0,\varepsilon] , \xi' \in [-\varepsilon, \varepsilon]^{d-r} \},
\end{array}
$$
then $\RLCT_{\Omega_0} (I;\varphi) = \RLCT_0(I; \varphi)$.
\end{lem}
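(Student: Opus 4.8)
The plan is to squeeze $\RLCT_{\Omega_0}(I;\varphi)$ between $\RLCT_0(I;\varphi)$ and the threshold over the explicit product neighborhood $\Omega_1\times\Omega_2$, and then to evaluate the two outer quantities and find that they agree. Since $0\in\Omega_1\times\Omega_2\subset\Omega$, the origin lies in $\Omega$; if it is interior there is nothing to prove, so assume $0\in\partial\Omega$. Then Proposition~\ref{thm:EffectOfBoundary}, with $W$ a ball around the origin in $\R^d$, gives $\RLCT_0(I;\varphi)\le\RLCT_{\Omega_0}(I;\varphi)$. For the reverse inequality, recall that $\RLCT_{\Omega_0}$ denotes the common value over all sufficiently small neighborhoods of the origin in $\Omega$; since $\Omega_1\times\Omega_2$ collapses to $\{0\}$ as $\varepsilon\to0$, we may take $\Omega_0\supseteq\Omega_1\times\Omega_2$. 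The defining integrand is non-negative, so $Z_{\Omega_1\times\Omega_2}(N)\le Z_{\Omega_0}(N)$, and the same argument as in Proposition~\ref{thm:EffectOfBoundary} (via Proposition~\ref{thm:DefRLCT}) yields $\RLCT_{\Omega_0}(I;\varphi)\le\RLCT_{\Omega_1\times\Omega_2}(I;\varphi)$. Both thresholds are well defined: $\Omega_1\times\Omega_2$ is compact and semianalytic — a box times the image of a box under a polynomial map — and contains the common zero $0$ of the generators of $I$ (if $1\in I$, both sides of the lemma equal $\infty$ and there is nothing to prove). It therefore suffices to show $\RLCT_{\Omega_1\times\Omega_2}(I;\varphi)=\RLCT_0(I;\varphi)$.

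Write $I=\langle x^{\alpha^{(1)}},\dots,x^{\alpha^{(m)}}\rangle$ and $\varphi=x^{\tau}$, where all exponent vectors $\alpha^{(i)}$ and $\tau$ are supported on the coordinates $x_1,\dots,x_r$. The integrand $\bigl(\sum_i x^{2\alpha^{(i)}}\bigr)^{-z/2}\lvert x^{\tau}\rvert$ of the defining zeta function does not involve $x_{r+1},\dots,x_d$, so by Fubini the zeta function over $\Omega_1\times\Omega_2$ equals $\mathrm{vol}(\Omega_2)\cdot\int_{[0,\varepsilon]^r}\bigl(\sum_i x^{2\alpha^{(i)}}\bigr)^{-z/2}x^{\tau}\,dx_1\cdots dx_r$, and $\mathrm{vol}(\Omega_2)$ is a finite positive constant (its $t=\varepsilon$ slice is the genuine $(d-r)$-box $\varepsilon\xi+[-\varepsilon^2,\varepsilon^2]^{d-r}$). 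Likewise, computing $\RLCT_0(I;\varphi)$ over a box $[-\delta,\delta]^d$ and using that the integrand is even in each variable, the corresponding zeta function is a positive constant times $\int_{[0,\delta]^r}\bigl(\sum_i x^{2\alpha^{(i)}}\bigr)^{-z/2}x^{\tau}\,dx_1\cdots dx_r$. Since constant prefactors do not affect poles, both thresholds coincide with the real log canonical threshold of the monomial ideal $I$ and monomial amplitude $x^{\tau}$ over an orthant $[0,a]^r\subset\R^r$.

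To finish, observe that a monomial ideal is sos-nondegenerate by Proposition~\ref{thm:MonomialRLCT}, so Theorem~\ref{thm:SosnondegenerateRLCT} identifies this threshold as $(1/l_\tau,\theta_\tau)$, where $l_\tau$ is the $\tau$-distance of $\mathcal{P}(I)$ and $\theta_\tau$ its multiplicity; one passes from the orthant $[0,a]^r$ to a full neighborhood $[-a,a]^r$, where Theorem~\ref{thm:SosnondegenerateRLCT} is stated, using once more the evenness of the integrand. As $l_\tau$ and $\theta_\tau$ depend only on $I$ and $\tau$, this value is independent of $a$, so $\RLCT_{\Omega_1\times\Omega_2}(I;\varphi)=(1/l_\tau,\theta_\tau)=\RLCT_0(I;\varphi)$; combined with the two inequalities of the first paragraph, all three thresholds agree.

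The genuinely nontrivial ingredient — the crux of the lemma — is the statement that the threshold of a monomial ideal over an orthant is insensitive both to the size of the orthant and to its outer facets $x_j=\varepsilon$, which is exactly what the Newton-polyhedron formula of Theorem~\ref{thm:SosnondegenerateRLCT} encodes. If one prefers a direct argument, one uses Proposition~\ref{thm:GlobalRLCT} and checks that the minimum of the local thresholds over $[0,a]^r$ is attained at the origin: at any other point only the coordinates in some proper subset $\sigma$ vanish, and the localized ideal is a monomial ideal in the $\sigma$-variables whose $\tau_\sigma$-distance is at most $l_\tau$ (the projection $\R^r\to\R^\sigma$ sends $\mathcal{P}(I)$ onto the Newton polyhedron of that ideal and the diagonal-type ray onto the corresponding ray), with associated multiplicity at most $\theta_\tau$ when the distances coincide; hence no such point can lower the threshold.
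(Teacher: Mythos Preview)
Your argument is correct and matches the paper's: sign-symmetry in $x_1,\dots,x_r$ gives $\RLCT_{\Omega_1}(I;\varphi)=\RLCT_0(I;\varphi)$, and your Fubini step is exactly what the paper obtains by citing Proposition~\ref{thm:DisjointVars} together with the observation that the zero ideal over the cone $\Omega_2$ has threshold $(\infty,-)$; the squeeze you spell out is left implicit there. Your detour through Theorem~\ref{thm:SosnondegenerateRLCT} is unnecessary---since both $\RLCT_{\Omega_0}$ and $\RLCT_0$ are local invariants computed over arbitrarily small neighborhoods, you may take the same orthant size $a=\varepsilon=\delta$ on both sides, after which the two zeta functions are literally proportional and the Newton-polyhedron formula plays no role.
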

\begin{proof}
Because $I$ and $|\varphi|$ remain unchanged by the flipping of signs of $x_1, \ldots, x_r$, their threshold does not depend on the choice of orthant, so $\RLCT_{\Omega_1}(I; \varphi)$ = $\RLCT_0(I; \varphi)$. The lemma now follows from Proposition \ref{thm:DisjointVars} and the fact that the threshold of the zero ideal over the cone neighborhood $\Omega_2$ is $(\infty, -)$.
\end{proof}

\begin{proof}[Detailed Proof of Proposition~\ref{thm:SchizoProp}]
Recall that the ideal $I$ is generated by 
$$
\begin{array}{c}
c_1(t_1^*-t) + a_1(t_0^*+t)+tu_1^*\\
c_2(t_1^*-t) + a_2(t_0^*+t)+tu_2^*\\
d_1(t_1^*-t) + b_1(t_0^*+t)+tv_1^*\\
d_2(t_1^*-t) + b_2(t_0^*+t)+tv_2^*\\
a_1d_1-a_1t_0^*v_1^*+d_1t_1^*u_1^*\\
a_1d_2-a_1t_0^*v_2^*+d_2t_1^*u_1^*\\
a_2d_1-a_2t_0^*v_1^*+d_1t_1^*u_2^*\\
a_2d_2-a_2t_0^*v_2^*+d_2t_1^*u_2^*
\end{array}
$$
We do a case-by-case analysis of the structure of $I$ and the boundary of $\Omega$.

$ $\\
\noindent \textbf{Case 1: $\omega^* \in \Omega_m$.}
 
This implies $t_0^* \neq 0$ and $t_1^* \neq 0$. Since the indeterminates $b_1,b_2,c_1,c_2$ appear only in the first four polynomials, this suggests the change of variables
$$
\begin{array}{rl}
c_i =& (c_i' -tu_i^*- a_i(t_0^*+t))/(t_1^*-t), \quad i = 1,2\\
b_i =& (b_i' -tv_i^*- d_i(t_1^*-t))/(t_0^*+t), \quad i = 1,2 \\
\end{array}
$$
with new indeterminates $t,a_1,a_2,b_1',b_2',c_1',c_2',d_1,d_2$. In view of Proposition \ref{thm:ChangeOfVar}, the Jacobian determinant of this substitution is a constant. 

$ $\\
\noindent \textbf{Case 1.1: $\omega^* \in \Omega_{m1}$.} 

This implies $u^*\neq 0, v^*=0$ or $u^*= 0, v^*\neq 0$. Without loss of generality, we assume $v^*=0, u_1^*> 0$ and substitute
$$
d_i = (d_i'+a_1t_0^*v_i^*)/(t_1^*u_1^*+a_1), \quad i =1,2. \\
$$ The resulting pullback ideal is $\langle b_1',b_2',c_1',c_2',d_1',d_2'\rangle$. If $\omega^*$ lies in the interior of $\Omega$, we use either Newton polyhedra or Proposition \ref{thm:DisjointVars} to show that the RLCT of this monomial ideal is $(6,1)$. If $\omega^*$ lies on the boundary of $\Omega$, the situation is more complicated. Since we are considering a subset of a neighborhood of $\omega^*$, the corresponding Laplace integral from Proposition \ref{thm:DefRLCT}a is smaller so the threshold is at least $(6,1)$. To compute it exactly, we need blowups to separate the coordinate hyperplanes and the hypersurfaces defining the boundary. 

Because $-u_1^* = u_2^*+u_3^*$, we cannot have $u_2^*=u_3^*=0$. Suppose $u_2^* \neq 0$ and $u_3^* \neq 0$.  We consider a blowup where one of the charts is given by the monomial map $t=s, a_i = sa_i', c_1' = rs, c_2'=rsc_2'', b_i'=rsb_i'', d_i'=rsd_i''$. Here, the pullback pair is $(\langle rs \rangle; r^5s^8)$. Now, we study the inequalities which are \emph{active} at $\omega^*$. For instance, if $b_1^* = 0$, then $\omega^*$ lies on the boundary defined by $0 \leq b_1+b_1^*$. After the various changes of variables, the inequalities are as shown below, where $b_3'' = -b_1''-b_2''$ and similarly for $c_3'',d_3''$ and $a_3'$. Note that the inequality for $a_1^*=0$ is omitted because $a_1^* = 0$ implies $u_1^* = -c_1^* \leq 0$. Similar conditions on the $u_i^*, v_i^*$ hold for the other inequalities.
$$
\begin{array}{rll}
b_i^*=0: & 0 \leq rs(b_i''-d_i''(t_1^*-s)/(t_1^*u_1^*+sa_1'))/(t_0^*+s) & \\
d_i^*=0: & 0 \leq rs d_i''/(t_1^*u_1^*+sa_1') & \\
c_1^*=0: & 0 \leq s(-  u_1^* + a_1'(t_0^*+s)+r)/(t_1^*-s) & \\
c_2^*=0: & 0 \leq s(-  u_2^*+ a_2'(t_0^*+s)+rc_2'')/(t_1^*-s) & u_2^* > 0\\
c_3^*=0: & 0 \leq s(-  u_3^*+ a_3'(t_0^*+s)-r-rc_2'')/(t_1^*-s) & u_3^* > 0\\
a_2^*=0: & 0 \leq sa_2' & u_2^* < 0\\
a_3^*=0: & 0 \leq sa_3' & u_3^* < 0
\end{array}
$$
In applying Lemma \ref{thm:EpsCone}, the choice of coordinates is important. For instance, if $b_2^*=b_3^*=0$, we choose coordinates $b_2''$ and $b_3''$ and set $b_1'' =-b_2''-b_3''$. The same is done for the $d_i''$. The pullback pair is unchanged by these choices. Now, with coordinates $(r,s)$ and $(b_{i_1}'', b_{i_2}'', d_{j_1}'', d_{j_2}'', c_2'', a_2', a_3')$, we apply the lemma with the vector $\xi = (2,2,u_1^*, u_1^*, 1, 1, 1)$, so the threshold is $\RLCT_0(rs;r^5s^8) = (6,1)$.

Now, if only one of $u_2^*, u_3^*$ is zero, suppose $u_2^* = 0, u_3^* \neq 0$ without loss of generality. If $a_2^* =c_2^* \neq 0$, then the arguments of the previous paragraph show that the RLCT is again $(6,1)$. If $a_2^*=c_2^*=0$, we blow up the origin in $\R^7$ and consider the chart where $a_2 = s, c_i' = sc_i'', b_i' = sb_i'', d_i' = sd_i''$. The pullback pair is $(\langle sb_1'', sb_2'', sc_1'', sc_2'', sd_1'', sd_2'' \rangle; s^6)$. The active inequalities for $a_2^*=c_2^*=0$ are
$$
\begin{array}{rll}
c_2^*=0: & 0 \leq s(c_2'' - t_0^*+t)/(t_1^*-t) & \\
a_2^*=0: & 0 \leq s. & \\
\end{array}
$$
Near the origin in $(s, b_1'', b_2'', c_1'', c_2'', d_1'', d_2'') \in \R^7$, these inequalities imply $s=0$ so the new region $\mathcal{M}$ defined by the active inequalities is not full at the origin. Thus, we can ignore the origin in computing the RLCT. All other points on the exceptional divisor of this blowup lie on some other chart of the blowup where the pullback pair is $(s;s^6)$, so the RLCT is at least $(7,1)$. In the chart where $c_2=s, c_1=sc_1'', a_2 = sa_2', b_i' = sb_i'', d_i' = sd_i''$, we have the active inequalities below. Note that $c_3^* \neq 0$ because $u_3^* = -u_1^* < 0$.
$$
\begin{array}{rll}
b_i^*=0: & 0 \leq s(b_i''-d_i''(t_1^*-t)/(t_1^*u_1^*-(sa_2'+a_3))/(t_0^*+t) & \\
d_i^*=0: & 0 \leq s d_i''/(t_1^*u_1^*-(sa_2'+a_3)) &\\
c_1^*=0: & 0 \leq (sc_1'' -  tu_1^* +(sa_2'+a_3)(t_0^*+t))/(t_1^*-t) & \\
c_2^*=0: & 0 \leq s(1 - a_2'(t_0^*+t))/(t_1^*-t) & \\
a_2^*=0: & 0 \leq sa_2' & \\
a_3^*=0: & 0 \leq a_3 &
\end{array}
$$
Again, choosing suitable coordinates in the $b_i''$ and $d_i''$, we
find that the RLCT is  $(7,1)$ by using Lemma
\ref{thm:EpsCone} with $\xi=(2, 2, u_1^*, u_1^*, 1,1,1,-1)$
in coordinates $(b_{i_1}'', b_{i_2}'', d_{j_1}'', d_{j_2}'',  a_2',
a_3, c_1'', t)$.

$ $\\
\noindent \textbf{Case 1.2: $\omega^* \in \Omega_{m2}$.}

This implies $u^*\neq 0, v^*\neq 0$. Without loss of generality, suppose that $u_1^* \neq 0$. If $\omega^* \in \Omega_{m21}$, we further assume that $a_1^*=d_j^*=0, u_1^* \neq 0, v_j^* \neq 0$. Substituting
$$
\begin{array}{rl}
d_i = & (d_i'+a_1t_0^*v_i^*)/(a_1+t_1^*u_1^*), \quad i =1,2 \\
a_2 = & (a_2' + a_1 u_2^*)/u_1^*,
\end{array}
$$
the pullback ideal is $\langle a_2', b_1',b_2',c_1',c_2',d_1',d_2'\rangle$ so the RLCT is at least $(7,1)$. Note that $a_i = (a_2'w_i^* + a_1u_i^*) /u_1^*$ for $i=1,2,3$ where $w_i^* = 0,1,-1$ respectively. If $\omega^*$ is not in $\Omega_{m21}$, we consider the blowup chart $a_2'=s, b_i'=sb_i'', c_i' =sc_i'', d_i'=sd_i''$. The active inequalities are as follows. The symbol $v-$ denotes $v_i^* \leq 0$.
$$
\begin{array}{rll}
b_i^*=0: & 0 \leq [sb_i''-tv_i^*-(sd_i''+a_1t_0^*v_i^*)(t_1^*-t)/(t_1^*u_1^*+a_1)]/(t_0^*+t) & v-\\
c_i^*=0: & 0 \leq [sc_i''- tu_i^* - (sw_i^*+a_1u_i^*)(t_0^*+t)/u_1^*]/(t_1^*-t) & u+\\
a_i^*=0: & 0 \leq  (sw_i^*+a_1u_i^*)/u_1^* & u-\\
d_i^*=0: & 0 \leq (sd_i''+a_1t_0^*v_i^*)/(t_1^*u_1^*+a_1) & v+\\
\end{array}
$$
The crux to understanding the inequalities is this: if $a_i^*=d_j^*=0, u_i^* \neq 0, v_j^* \neq 0$, the coefficient of $a_1$ appears with different signs in the inequalities for $a_i^*=0$ and $d_j^*=0$. This makes it difficult to choose a suitable vector $\xi$ for Lemma \ref{thm:EpsCone}. Similarly, if $b_i^*=c_j^*=0, v_i^* \neq 0, u_j^* \neq 0$, the coefficient of $u_1^*t+t_0^*a_1$ appears with different signs. Fortunately, since $\omega^* \notin \Omega_{m21}$, we do not have such obstructions and it is an easy exercise to find the vector $\xi$. Thus, the RLCT is $(7,1)$.

If $\omega^* \in \Omega_{m21} \setminus \Omega_{m22}$, we blow up $a_1=s, a_2'=sa_2'', b_i'=sb_i'', c_i=sc_i'', d_i=sd_i''$. The active inequalities for $a_1^*=d_j^*=0$ imply that the new region $\mathcal{M}$ is not full at the origin of this chart. Thus, we shift our focus to the other charts of the blowup where the pullback pair is $(s;s^7)$, so the RLCT is at least $(8,1)$. In the chart where $a_2'=s, a_1=sa_1', b_i'=sb_i'', c_i=sc_i'', d_i=sd_i''$, we do not have obstructions coming from any $b_i^*=c_j^*=0, v_i^* \neq 0, u_j^* \neq 0$ so it is again easy to find the vector $\xi$ for Lemma \ref{thm:EpsCone}. The threshold is exactly $(8,1)$.

If $\omega^* \in \Omega_{m22}$, consider the following two charts out of the nine charts in the blowup of the origin in $\R^9$.
$$
\begin{array}{rl}
\mbox{Chart 1:} & a_1=s, t=st', a_2'=sa_2'', b_i'=sb_i'', c_i=sc_i'', d_i=sd_i'' \\
\mbox{Chart 2:} & t=s, a_1=sa_1', a_2'=sa_2'', b_i'=sb_i'', c_i=sc_i'', d_i=sd_i'' \\
\end{array}
$$
The inequalities for $a_i^*=d_j^*=0, u_i^* \neq 0, v_j^* \neq 0$ and $b_i^*=c_j^*=0, v_i^* \neq 0, u_j^* \neq 0$ imply that the new region $\mathcal{M}$ is not full at points outside of the other seven charts, so we may ignore these two charts in computing the RLCT. Indeed, for Chart 1, the active inequalities
$$
\begin{array}{rll}
a_i^*=0: & 0 \leq  s(a_2''w_i^*+u_i^*)/u_1^* & u-\\
d_i^*=0: & 0 \leq s(d_i''+t_0^*v_i^*)/(t_1^*u_1^*+s) & v+\\
\end{array}
$$
tell us that $a_2''$ or $d_2''$ must be non-zero for $\mathcal{M}$ to be full. In Chart 2, suppose $\mathcal{M}$ is full at some point $x$ where $a_2''=b_1''=b_2''=c_1''=c_2''=d_1''=d_2''=0$. Then, 
$$
\begin{array}{rll}
a_i^*=0: & 0 \leq  s(a_2''w_i^*+a_1'u_i^*)/u_1^* & u-\\
d_i^*=0: & 0 \leq s(d_i''+a_1't_0^*v_i^*)/(t_1^*u_1^*+sa_1') & v+\\
\end{array}
$$
imply that $a_1'=0$ at $x$. However, if this is the case, the inequalities
$$
\begin{array}{rll}
b_i^*=0: & 0 \leq s[b_i''-v_i^*-(d_i''+a_1't_0^*v_i^*)(t_1^*-s)/(t_1^*u_1^*+sa_1')]/(t_0^*+s) & v-\\
c_i^*=0: & 0 \leq s[c_i''- u_i^* - (a_2''w_i^*+a_1'u_i^*)(t_0^*+s)/u_1^*]/(t_1^*-s) & u+\\
\end{array}
$$
forces $b_i''$ or $c_i''$ to be non-zero for some $i$, a contradiction. Thus, we shift our~focus to the other seven charts where the pullback pair is $(s;s^8)$ and the RLCT is at least $(9,1)$. In the chart for $a_2'=s, a_1 = sa_1', t=st', b_i'=sb_i'', c_i'=sc_i'', d_i'=sd_i''$, note that we cannot have both $a_2^*=0$ and $a_3^*=0$ because we assumed $a_1^*=0$. It is now easy to find the vector $\xi$ for Lemma \ref{thm:EpsCone}, so the threshold is $(9,1)$.

$ $\\
\noindent \textbf{Case 1.3: $\omega^* \in \Omega_{m0}$.} 

This implies $u_i^* = v_i^* =0$ for all $i$. The pullback ideal can be written as $$\langle b_1', b_2', c_1', c_2' \rangle + \langle a_1, a_2 \rangle \langle d_1, d_2\rangle$$ 
whose RLCT over an interior point of $\Omega$ is $(6,2)$ by Proposition \ref{thm:DisjointVars}. This occurs in $\Omega_{m000}$ where none of the inequalities are active. Now, suppose the only active inequalities come from $a_1^* = c_1^*=0$. We blow up the origin in $\{(a_1, c_1')\in \R^2\}$. In the chart given by $a_1=a_1', c_1'=a_1'c_1''$, the new region $\mathcal{M}$ is not full at the origin, so we only need to study the chart where $c_1' = c_1'', a_1 = c_1''a_1'$. The pullback pair becomes $(\langle c_1'' \rangle + \langle b_1', b_2', c_2' \rangle + \langle a_2 \rangle \langle d_1, d_2 \rangle; c_1'')$, and a simple application of Lemma \ref{thm:EpsCone} and Proposition \ref{thm:DisjointVars} shows that the threshold is $(6,1)$.

In this fashion, we study the different scenarios and summarize the pullback pairs and thresholds in the table below.
$$
\begin{array}{llll}
\hline
\mbox{Inequalities}  & \mbox{Pullback pair} & & \mbox{RLCT} \\
\hline
\vspace{-0.1in} &&& \\
- & (\langle b_1', b_2', c_1',c_2' \rangle + \langle a_1,a_2 \rangle \langle d_1, d_2 \rangle; &1) & (6,2) \\
a_1^*=0 & (\langle b_1', b_2', c_1'',c_2' \rangle + \langle a_2 \rangle \langle d_1, d_2 \rangle; &c_1'') & (6,1) \\
a_1^*=0, b_1^*=0 & (\langle b_1'', b_2', c_1'', c_2' \rangle + \langle a_2 \rangle \langle d_2 \rangle; &b_1''c_1'') & (7,2)\\
a_1^*=1 & (\langle b_1', b_2',c_1'',c_2'' \rangle; &c_1''c_2'') & (6,1) \\
a_1^*=1, b_1^*=0 & (\langle b_1'', b_2', c_1'', c_2'' \rangle; &b_1''c_1''c_2'') & (7,1)\\
a_1^*=1, b_1^*=1 & (\langle b_1'', b_2'', c_1'', c_2'' \rangle; &b_1''b_2''c_1''c_2'') & (8,1)\\
\vspace{-0.1in} &&& \\
\hline
\end{array}
$$
For example, the case $a_3^* = c_3^*=1$ corresponds to $a_1^*=a_2^* = c_1^* = c_2^* = 0$. Here, we blow up the origins in  $\{(a_1, c_1')\in \R^2\}$ and  $\{(a_2, c_2')\in \R^2\}$. As before, we~can ignore the other charts and just consider the one where $a_1=c_1''a_1', c_1' = c_1'', a_2=c_2''a_2', c_2' = c_2''$. The pullback pair is $(\langle c_1'' \rangle + \langle c_2'' \rangle + \langle b_1', b_2'\rangle, c_1''c_2'')$. If $b_i^* \neq 0$ for all $i$, the RLCT is $(6,1)$ by Lemma \ref{thm:EpsCone} and Proposition \ref{thm:DisjointVars}.

$ $\\
\noindent \textbf{Case 2: $\omega^* \in \Omega_{u}$.} 

Without loss of generality, assume $t^*=0$ and substitute
$$
\begin{array}{rll}
c_i &= (c_i'- t(a_i+u_i^*))/(1-t) &i = 1,2\\
d_i &= (d_i'- t(b_i+v_i^*))/(1-t) &i = 1,2.\\
\end{array}
$$
The pullback ideal is the sum of $\langle c_1', c_2', d_1', d_2' \rangle$ and
$$
\langle t \rangle \langle a_1+u_1^*, a_2+u_2^* \rangle \langle b_1+v_1^*,b_2+v_2^*\rangle .
$$
Since $c_3' = -c_1'-c_2'$ and similarly for the $d_i', a_i, b_i, u_i^*$ and $v_i^*$, it is useful to write this ideal more symmetrically as the sum of $\langle c_1', c_2', c_3' \rangle$, $\langle d_1', d_2', d_3' \rangle$ and
$$
\langle t \rangle \langle a_1+u_1^*, a_2+u_2^*, a_3+u_3^* \rangle \langle b_1+v_1^*,b_2+v_2^*, b_3+v_3^*\rangle .
$$
Meanwhile, the inequalities are
$$
\begin{array}{rll}
a_i^* = 0: & 0 \leq a_i & \\
c_i^*=0: & 0 \leq  (c_i'- t(a_i+u_i^*))/(1-t) & u_i^* \geq 0 \\
b_j^* = 0: & 0 \leq b_j & \\
d_j^*=0: & 0 \leq (d_j'- t(b_j+v_j^*))/(1-t) & v_j^* \geq 0. \\
\end{array}
$$
We now relabel the indices of the $a_i$ and $c_i'$, without changing the $b_j$ and $d_j'$, so that the active inequalities are among those from $a_1^* =0, a_2^*=0, c_{i_1}^* = 0, c_{i_2}^* = 0$. The $b_j$ and $d_j'$ are thereafter also relabeled so that the inequalities come from $b_1^* =0, b_2^*=0, d_{j_1}^* = 0, d_{j_2}^* =0$. We claim that the new region $\mathcal{M}$ contains, for small $\varepsilon$, the orthant neighborhood
$$\{(a_1, a_2, b_1, b_2, c_{i_1}, c_{i_2}, d_{j_1}, d_{j_2}, -t) \in [0, \varepsilon]^9\}.$$ Indeed, the only problematic inequalities are
$$
\begin{array}{rll}
c_3^*=0: & 0 \leq  (c_3'- t(-a_1-a_2+u_i^*))/(1-t) & u_3^* = 0 \\
d_3^*=0: & 0 \leq (d_3'- t(-b_1-b_2+v_j^*))/(1-t) & v_3^* = 0. \\
\end{array}
$$
However, these inequalities cannot occur because for instance,
$u_3^*=0$ and $c_3^*=0$ implies $a_3^*=0$, a contradiction since the
$a_i$ were relabeled to avoid this. Finally, the threshold of $\langle
t \rangle$ is $(1,1)$ while that of $\langle a_1+u_1^*, a_2+u_2^*
\rangle$ and $\langle b_1+v_1^*,b_2+v_2^*\rangle$ are at least $(2,1)$
each. By Proposition~\ref{thm:DisjointVars}, the RLCT of their product
is $(1,1)$ and that of the pullback ideal we were originally
interested in is $(5,1)$.
\end{proof}

\bigskip \bigskip

\noindent {\bf Acknowledgements.} The author wishes to thank Christine Berkesch, Mathias Drton, Anton Leykin, Bernd Sturmfels, Zach Teitler, Sumio Watanabe and Piotr Zwiernik, as well as the anonymous reviewers for their many useful suggestions, discussions and corrections.

\bigskip

\end{document}